\newtheorem{thm} {\textcolor{black}{Theorem}}
\newtheorem{lem}[thm] {\textcolor{black}{Lemma}}
\newtheorem{cor}[thm] {\textcolor{black}{Corollary}}
\newtheorem{prp}[thm] {\textcolor{black}{Proposition}}
\newtheorem{df}[thm]{\textcolor{black}{Definition}}
\def\QED{\mbox{\rule[0pt]{1.5ex}{1.5ex}}}
\def\endproof{\hspace*{\fill}~\QED\par\endtrivlist\unskip}
\def\calD{{\mathcal{D}}}
\def\calX{{\mathcal{X}}}
\def\calY{{\mathcal{Y}}}
\def\calP{{\mathcal{P}}}
\def\R{{\mathbb{R}}}
\def\N{{\mathbb{N}}}
\def\C{{\mathbb{C}}}
\def\ph{{\varphi}}
\def\QED{\mbox{\rule[0pt]{1.5ex}{1.5ex}}}
\def\endproof{\hspace*{\fill}~\QED\par\endtrivlist\unskip}
\def\Label#1{\label{#1}\ [\ #1\ ]\ }
\def\Label{\label}
\begin{document}
\title{Random Number Conversion and LOCC Conversion via Restricted Storage}

\author{Wataru~Kumagai,~
        Masahito~Hayashi~\IEEEmembership{Fellow, IEEE.}
\thanks{W. Kumagai is with Faculty of Engineering, Kanagawa University. e-mail: kumagai@kanagawa-u.ac.jp}
\thanks{M. Hayashi is with Nagoya University and National University of Singapore. e-mail: masahito@math.nagoya-u.ac.jp}
}

\maketitle

\begin{abstract}

We consider random number conversion (RNC) through random number storage with restricted size.
We clarify the relation between the performance of RNC and the size of storage in the framework of first- and second-order asymptotics, and derive their rate regions.
Then, we show that the results for RNC with restricted storage recover those for conventional RNC without storage in the limit of storage size.
To treat RNC via restricted storage,
we introduce a new kind of probability distributions named generalized Rayleigh-normal distributions.
Using the generalized Rayleigh-normal  distributions,
we can describe the second-order asymptotic behaviour of RNC via restricted storage in a unified manner.
As an application to quantum information theory, 
we analyze LOCC conversion via entanglement storage with restricted size.
Moreover, 
we derive the optimal LOCC compression rate under a constraint of conversion accuracy.
\end{abstract}

\begin{IEEEkeywords}
Random number conversion, LOCC conversion, Compression rate, Entanglement,  Second-order asymptotics, Generalized Rayleigh-normal distribution. 
\end{IEEEkeywords}

\IEEEpeerreviewmaketitle


\section{Introduction}
\Label{sec:Introduction}
Random number conversion (RNC) is a fundamental topic in information theory \cite{VV95}, and 
its asymptotic behavior has been well studied in the context of not only the first-order asymptotics but also the second-order asymptotics \cite{Hay08,NH13,KH13-2}.
The second-order analysis for the random number conversion 
has the following remarkable property distinct from that of other information tasks.
The second-order rates cannot be characterized by use of the normal distribution in the case of RNC
although known second-order rates
are mostly given by use of the normal distribution.
To characterize the second-order rates in the random number conversion, 
the previous paper \cite{KH13-2} introduced Rayleigh-normal distributions
as a new family of probability distributions.
This new family of distributions leads us to a new frontier of second order analysis,
which is completely different from existing analysis of the second-order rate.
In this paper, 
we focus on a realistic situation, in which 
one uses this conversion via a storage with a limited size like a hard disk. 
In this case, as the first step, initial random numbers are converted to other random numbers in a storage with a limited size, which is 
called {\it random number storage} or simply storage.
As the second step, the random numbers in the storage are converted to some desired random numbers.
When the memory size of media for the random number conversion is limited,
it is natural to consider the trade-off between the sizes of target random numbers and the storage.

In this paper, we consider this problem 
when the initial and the target random random variables are given as 
multiple copies of respective finite random variables.
That is, the initial random variables are subject to the $n$-fold independent and identical 
distribution (i.i.d.) of a distribution $P$ with finite support and 
the target random variables  are subject to the $m$-fold i.i.d. of another distribution $Q$ with finite support.
In the problem, 
since there is the degree of freedom for the required number of copies of $Q$ in the target distribution,
we have to take care of the trade-off among three factors,
the accuracy of the conversion, 
the size of the storage, and
the required number of copies of $Q$ in the output distribution.
For simplicity, we fix the accuracy of the conversion, 
and investigate the trade-off
between the size of the storage and the required number of copies of $Q$ in the output distribution.
We call this problem RNC via restricted storage.
In particular, 
when $Q=P$, 
this problem can be regarded as random number compression to the given random number storage.

One of our main purposes is to derive the maximum conversion rate when the rate of storage size is properly limited.
If the size of storage is small,
the maximum number of copies of target distribution should also be small
since the conversion has to once pass through the small storage.
Thus, the allowable size of storage closely relates with the conversion rate of RNC via restricted storage. 
In this paper, 
we particularly investigate the region of achievable rate pairs for the size of storage and the number of copies of target distribution in the first- and the second-order settings.
To clarify which rate pairs are truly important in the rate region,
we introduce the relations named ``dominate" and ``simulate" between two rate pairs,
and based on these two relations,
we define the admissibility of rate pairs.
Although admissible rate pairs are only a part of the boundary of the region,
those characterize the whole of the rate region,
and hence, are of special importance in the rate region.

\begin{figure}[h]
 \begin{center}
 \hspace*{0em}\includegraphics[width=70mm, height=60mm]{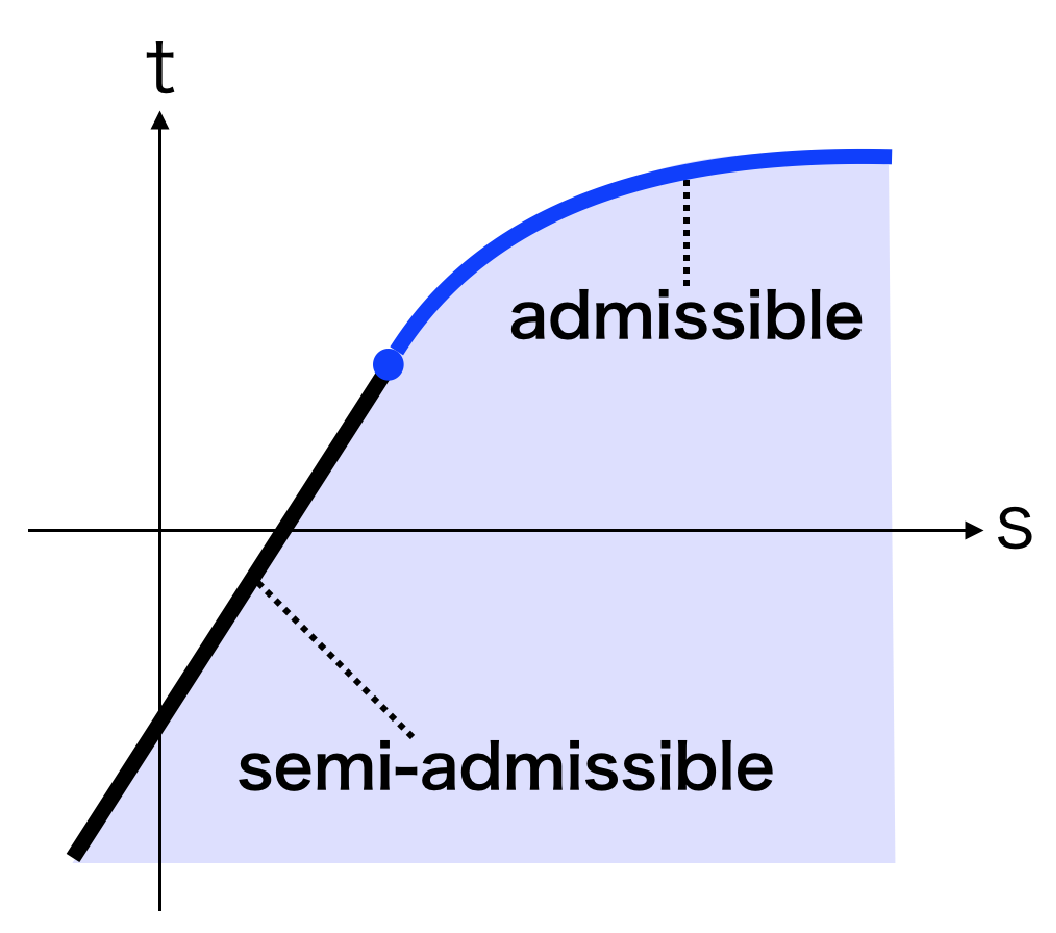}
 \end{center}
 \caption{
A graph of a rate region.
The black straight line represents the set of semi-admissible rate pairs 
and the blue curved line represents the set of admissible rate pairs.
An admissible pair is a pair dominated nor simulated by no other pair.
A semi-admissible pair is a pair dominated by no other pair.}
\Label{fig:admissible}
\end{figure}

In the set of achievable first-order rate pairs,
the admissible rate pair is shown to be unique
and all other rate pairs are not admissible.
In this sense,
the admissible rate pair may seem to be  exceptional.
However, 
the case of the admissible first-order rate pair is most important as stated below.
A first-order rate pair consists of the first-order rates of the size of restricted storage and the number of copies of the target distribution.
At the admissible rate pair,
the first-order rate of the size of restricted storage is shown to be the entropy of the source distribution.
If the first-order rate is strictly less or larger than the entropy of the source distribution,
the size of storage is too small or redundant to store the randomness of the source distribution, respectively.
In this sense,
the entropy of the source distribution is the only suitable first-order rate to store the randomness of the source distribution.
Similarly,
at the admissible rate pair,
the first-order rate of the number of copies of the target distribution is shown to be the entropy ratio of the source distribution and the target distribution.
If the first-order rate is strictly less or larger than the entropy ratio,
random numbers in the storage properly converted from the source distribution are unnecessarily redundant or too few to approximate the target distribution, respectively.
In this sense,
the entropy ratio is the only suitable first-order rate to generate the target distribution.

%


We emphasize that our optimal conversion to the storage is a uniform random number generation independtly of whether the achievable first-order rate pair is admissible or not. 
That is,
the optimal conversion scheme can be constructed as follows:
a source distribution is first approximately converted to the uniform random distribution independent of the target distribution $Q$, and then  converted from the uniform random distribution to the i.i.d. of $Q$.

Here, remember that  the second-order rates of  the random number conversion
are characterized by Rayleigh-normal distributions \cite{KH13-2}.
Since the second-order asymptotic behaviour of other typical  information tasks are often described by the standard normal distribution,
the characterization by such a non-normal distribution is a remarkable feature.
To treat the second-order asymptotics of our problem, 
we introduce a new kind of probability distributions named generalized Rayleigh-normal distributions
as an extension of Rayleigh-normal distributions.
The generalized Rayleigh-normal distributions are a family of probability distributions with two parameters and include the Rayleigh-normal distributions in \cite{KH13-2} as the limit case.
Using the generalized Rayleigh-normal distributions,
we can characterize the second-order rate region of RNC with restricted storage in a unified manner

We also consider LOCC conversion for pure entangled states in quantum information theory.
The asymptotic behavior of LOCC conversion has been intensively studied \cite{BBPS96,BPRST01,HL04,HW03,Hay06,HKMMW03,KH13-2}.
However, 
unlike conventional settings of LOCC conversion,
we assume that LOCC conversion passes through quantum system to store entangled states named {\it entanglement storage}.
In the setting,
an initial i.i.d. pure entangled state is once transformed into the entanglement storage with smaller dimension by LOCC and then transformed again to approximate a target i.i.d. pure state by LOCC. 
In particular, when the target pure entangled state is the same as the original pure entangled state,
this problem can be regarded as LOCC compression of entangled states into the given entanglement storage.
Since the storage to keep the entangled states is implemented with a limited resources,
the analysis for LOCC compression is expected to be useful to store entanglement in small quantum system.
It is known that LOCC convertibility between pure entangled states can be translated to majorization relation between two probability distributions consisting of the squared Schmidt coefficients of the states \cite{Nie99,VJN00}.
Through this translation,
we can reduce the asymptotics of LOCC conversion via entanglement storage into that of RNC via random number storage
as similar to the results of conventional RNC without storage shown in \cite{KH13-2}.
In particular, 
the rate regions for LOCC conversion are immediately derived from those for RNC.



%

The paper is organized as follows.
In Section \ref{sec:family},
we introduce the generalized Rayleigh-normal distribution function as a function defined by an optimization problem.
Then we show its basic properties used in the asymptotics of RNC via restricted storage.
In Section \ref{sec:NARNC}, 
we formulate random number conversion (RNC) via restricted storage by two kinds of approximate conversion methods
and give their relations in non-asymptotic setting.
In Section \ref{sec:ARNC}, 
we proceed to asymptotic analysis for RNC via restricted storage.
Then, we show the relation between the rates of the maximum conversion number and storage size and draw various rate regions in both frameworks of first and second-order asymptotic theory.
In Section \ref{sec:RCR}, 
we see that 
conventional RNC without storage can be regarded as RNC via restricted storage with infinite size.
%
In Section \ref{sec:AQIT}, 
we consider LOCC conversion via entanglement storage for quantum pure states.
Using the results for RNC, 
we derive the asymptotic performance of optimal LOCC conversion.
In particular, optimal LOCC compression rate is derived in the second-order asymptotics.
In Section \ref{sec:Proof}, 
we give technical details of proofs of theorems, propositions and lemmas.
In Section \ref{sec:Conclusion},  
we state the conclusion of the paper.

\section{Generalized Rayleigh-Normal Distribution}\Label{sec:family}


In this section,
we introduce a new family of probability distributions with two parameters  on $\R$.
A function $Z$ on $\R$ is generally called a cumulative distribution function if $Z$ is right continuous, monotonically increasing  and satisfies $\displaystyle\lim_{x\to-\infty} Z(x)=0$ and $\displaystyle\lim_{x\to \infty} Z(x)=1$.
Then, there uniquely exists a probability distribution on $\R$ whose cumulative distribution coincides with $Z$.
That is, 
given a cumulative distribution function in the above sense,
it determines a probability distribution on $\R$.
To define the new probability distribution family,
we give its cumulative distribution function.

We prepare some notations which are needed for the definition of a new distribution function.
For $\mu\in\R$ and $v\in\R_+$,
let $\Phi_{\mu,v}$ and $\phi_{\mu,v}$ be the cumulative distribution function and the probability density function of the normal distribution with the mean $\mu$ and the variance $v$.
We denote $\Phi_{0,1}$ and $\phi_{0,1}$ simply by $\Phi$ and $\phi$.
We employ the continuous fidelity (or the Bhattacharyya coefficient) for
probability density functions $p$ and $q$ on $\R$ defined by
\begin{eqnarray}
{\cal F}(p,q):=\int_{\R}\sqrt{p(x)q(x)}dx.
\end{eqnarray} 
Then, we can define a new probability distribution function as follows, which generalize the Rayleigh-normal distribution function defined in  \cite{KH13-2}.
\begin{df}\Label{rn}
For $v>0$ and $s\in\R$, 
a generalized Rayleigh-normal distribution function $Z_{v,s}$ on $\R$ is defined by
\begin{eqnarray}
Z_{v,s}(\mu)
=1-\sup_{A\in{\cal A}_{s}}{\cal F}\left(\frac{dA}{dx}, \phi_{\mu,v}\right)^2,
\Label{eq:rn}
\end{eqnarray}
where the set ${\cal A}_{s}$ of functions $A:\R\to [0,1]$ is defined by 
\begin{eqnarray}
{\cal A}_{s}=
\left\{A {\Big| }
\begin{array}{l}
{\it continuously~differentiable~monotone}\\
{increasing},
~A(s)=1,
~\Phi\le A\le1
\end{array}
\right\}.
\nonumber
\end{eqnarray}
\end{df}
The generalized Rayleigh-normal distribution function is proven to be a cumulative distribution function later, 
and thus, 
it determines a probability distribution on $\R$.
From the definition,
it can be easily  verified that 
the generalized Rayleigh-normal distribution function has the monotonicity as $Z_{v,s}\ge Z_{v,s'}$ for $s< s'$.
We further remark that Rayleigh-normal distribution function $Z_{v}$ is defined by (\ref{eq:rn}) with $s=\infty$ in \cite{KH13-2},
and thus, the following equation holds
\begin{eqnarray}
\lim_{s\to \infty}Z_{v,s}(\mu)
=\inf_{s\in\R}Z_{v,s}(\mu)
=Z_{v}(\mu).
\Label{limit}
\end{eqnarray}
In this sense,
the family of generalized Rayleigh-normal distribution function $Z_{v,s}$ includes Rayleigh-normal distribution functions as its limit case.

%


The definition of the generalized Rayleigh-normal distribution function is highly abstract and is not in a numerically computable form.
To give a more concrete form of the generalized Rayleigh-normal distribution functions,
we prepare the following three lemmas.

\begin{lem}\Label{sol2}
When $0<v<1$,
the equation with respect to $x$
\begin{eqnarray}\Label{threshold0}
\frac{1-\Phi\left(x\right)}{\Phi_{\mu,v}(s)-\Phi_{\mu,v}(x)}
=\frac{\phi(x)}{\phi_{\mu,v}(x)}
\end{eqnarray}
has the unique solution $\beta_{\mu,v,s}$ and it satisfies 
\begin{eqnarray}
\beta_{\mu,v,s}<\min\{s,\frac{\mu}{1-v}\}.
\Label{conbeta}
\end{eqnarray}
\end{lem}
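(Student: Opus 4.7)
The plan is to recast the equation as $g_1(x)=g_2(x)$, where $g_1(x):=(1-\Phi(x))/N(x)$ is the standard Mill's ratio and $g_2(x):=(\Phi_{\mu,v}(s)-\Phi_{\mu,v}(x))/N_{\mu,v}(x)$ is a truncated analogue for $N(\mu,v)$. Both are positive on $(-\infty,s)$, which is the natural search region: for $x>s$ the left-hand side of (\ref{threshold0}) is negative while the right-hand side is positive, so no solution can lie there.

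First I would establish existence via the intermediate value theorem applied to $g_1-g_2$. At the right endpoint, $g_2(s^-)=0$ while $g_1(s)>0$, so $(g_1-g_2)(s^-)>0$. For the left endpoint, I would note that $N(x)/N_{\mu,v}(x)=\sqrt{v}\exp\bigl(\tfrac{1}{2v}[(1-v)x^2-2\mu x+\mu^2]\bigr)\to\infty$ as $x\to-\infty$, precisely because $1-v>0$; since $1-\Phi(x)\to 1$ and $\Phi_{\mu,v}(s)-\Phi_{\mu,v}(x)\to\Phi_{\mu,v}(s)\in(0,1)$, this forces $g_2/g_1\to\infty$ and hence $(g_1-g_2)(x)<0$ for sufficiently negative $x$. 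Continuity then yields at least one zero.

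For uniqueness and the bound I would use the first-order ODEs satisfied by these Mill's ratios. Direct differentiation, using $N'=-xN$ and $N_{\mu,v}'=-\tfrac{x-\mu}{v}N_{\mu,v}$, gives $g_1'(x)=xg_1(x)-1$ and $g_2'(x)=\tfrac{x-\mu}{v}g_2(x)-1$. At any common value $g_1(\beta)=g_2(\beta)=c>0$, subtracting yields the key sign identity
\[
(g_1-g_2)'(\beta)=c\Bigl(\beta-\frac{\beta-\mu}{v}\Bigr)=\frac{c(1-v)}{v}\Bigl(\frac{\mu}{1-v}-\beta\Bigr),
\]
which is strictly positive iff $\beta<\mu/(1-v)$. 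Thus every zero of $g_1-g_2$ is a transversal crossing, and its orientation is determined by the side of $\mu/(1-v)$ on which it lies: upward if $\beta<\mu/(1-v)$, downward if $\beta>\mu/(1-v)$.

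The uniqueness and (\ref{conbeta}) then follow from a sign-alternation argument. Since $g_1-g_2$ is negative near $-\infty$ and positive near $s$, the smallest zero must be an upward crossing, and hence lies in $(-\infty,\mu/(1-v))$. A hypothetical second zero would have to be a downward crossing at a point exceeding $\mu/(1-v)$; but then $g_1-g_2$ becomes negative again, so restoring positivity before $x=s$ would require a third, upward, zero with $x<\mu/(1-v)$, contradicting its position after the downward one. Hence the zero is unique and satisfies $\beta_{\mu,v,s}<\min\{s,\mu/(1-v)\}$, i.e.\ (\ref{conbeta}). The point requiring the most care is the possibility of a tangential zero at $\beta=\mu/(1-v)$, but the sign identity shows that every non-tangential candidate is strict, and the strict sign change imposed by the boundary values forces the unique zero to be a transversal upward crossing strictly below $\mu/(1-v)$, ruling out the tangential case.
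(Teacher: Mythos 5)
Your proof has a genuine gap in the handling of the possible tangential zero at $\mu/(1-v)$. The identity $(g_1-g_2)'(\beta)=\tfrac{c(1-v)}{v}\bigl(\tfrac{\mu}{1-v}-\beta\bigr)$ holds only at common values of $g_1,g_2$ and vanishes exactly when $\beta=\mu/(1-v)$, so it gives no sign information there; the last sentence of your proof simply declares the tangential case ``ruled out'' without an argument. A priori, nothing in the first-order analysis forbids an upward crossing $\beta_1<\mu/(1-v)$ followed by a tangential touch at $\mu/(1-v)$ (two solutions), nor a lone zero at $\mu/(1-v)$ serving as the sign-changing one. To close the gap you would need a second-order computation: since at $\beta_0=\mu/(1-v)$ one also has $g_1'(\beta_0)=g_2'(\beta_0)$ (the same cancellation), the relations $g_1''=g_1+xg_1'$ and $g_2''=\tfrac{1}{v}g_2+\tfrac{x-\mu}{v}g_2'$ give $(g_1-g_2)''(\beta_0)=\tfrac{c(v-1)}{v}<0$ at any putative common zero with value $c>0$, so $\beta_0$ would be a strict local maximum of $g_1-g_2$ touching $0$ from below, hence $g_1-g_2<0$ on a punctured neighbourhood of $\beta_0$. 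This contradicts positivity just left of $\beta_0$ (if an upward crossing precedes it) and also the required positivity near $s^-$ (all zeros in $(\mu/(1-v),s)$ are downward crossings). With this supplement your sign-alternation argument does go through, but as written the tangential case is not handled.

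The paper avoids the issue by working not with $g_1-g_2$ but with $f(x):=(\Phi_{\mu,v}(s)-\Phi_{\mu,v}(x))-(1-\Phi(x))\tfrac{N_{\mu,v}(x)}{N(x)}$, for which $f'=-(1-\Phi)\bigl(N_{\mu,v}/N\bigr)'$ is globally sign-definite on each side of $\mu/(1-v)$ (because $N_{\mu,v}/N$ is unimodal with peak there when $0<v<1$), making $f$ strictly decreasing then strictly increasing. Combined with $f(-\infty)=\Phi_{\mu,v}(s)>0$ and $f(+\infty)=\Phi_{\mu,v}(s)-1<0$ this forces $f(\mu/(1-v))<0$ directly, and hence a unique zero, lying in $(-\infty,\mu/(1-v))$. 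The monotonicity is global, whereas your ODE identity is pointwise at zeros only---which is exactly why the paper's choice of function side-steps the tangential-case difficulty that trips up the Mill's-ratio route.
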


\begin{lem}\Label{sol0}
When $v=1$ and $\mu>0$, the equation (\ref{threshold0}) with respect to $x$
has the unique solution $\beta_{\mu,v,s}\in\R$. 
\end{lem}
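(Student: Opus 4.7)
First I would reformulate the equation on a natural domain. Specializing (\ref{threshold0}) to $v=1$ gives $N_{\mu,1}(y)=N(y-\mu)$ and $\Phi_{\mu,1}(y)=\Phi(y-\mu)$, so the equation reads
\begin{equation*}
\frac{1-\Phi(x)}{\Phi(s-\mu)-\Phi(x-\mu)}=\frac{N(x)}{N(x-\mu)}=e^{\mu x-\mu^2/2}.
\end{equation*}
The right-hand side is strictly positive, while the left-hand side is positive only for $x<s$ (it is undefined at $x=s$ and negative for $x>s$), so any solution must lie in $(-\infty,s)$. It therefore suffices to study the continuous function
\begin{equation*}
r(x):=\frac{(1-\Phi(x))\,N(x-\mu)}{(\Phi(s-\mu)-\Phi(x-\mu))\,N(x)}
\end{equation*}
on $(-\infty,s)$ and show that $r(x)=1$ has a unique solution.

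Next I would check the boundary behaviour and invoke the intermediate value theorem. As $x\to-\infty$, the factor $(1-\Phi(x))/(\Phi(s-\mu)-\Phi(x-\mu))$ tends to the finite positive limit $1/\Phi(s-\mu)$, while $N(x-\mu)/N(x)=e^{\mu x-\mu^2/2}\to 0$ because $\mu>0$; hence $r(x)\to 0$. As $x\to s^-$, the factor $\Phi(s-\mu)-\Phi(x-\mu)$ decreases to $0^+$ with every other quantity bounded and positive, so $r(x)\to+\infty$. Continuity of $r$ and the IVT produce at least one $\beta_{\mu,1,s}\in(-\infty,s)$ with $r(\beta_{\mu,1,s})=1$.

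For uniqueness I would use a log-derivative. Writing $R(x):=(1-\Phi(x))/N(x)$ and $S(x):=(\Phi(s-\mu)-\Phi(x-\mu))/N(x-\mu)$, so that $r=R/S$, the identity $N'(x)=-xN(x)$ yields $R'(x)=xR(x)-1$ and $S'(x)=(x-\mu)S(x)-1$, whence
\begin{equation*}
(\log r)'(x)=\frac{R'(x)}{R(x)}-\frac{S'(x)}{S(x)}=\mu+\frac{R(x)-S(x)}{R(x)\,S(x)}.
\end{equation*}
At any point $x_0$ with $r(x_0)=1$ one has $R(x_0)=S(x_0)$, so $(\log r)'(x_0)=\mu>0$. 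Hence $r$ is strictly increasing through every crossing of the level $1$.

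Finally, uniqueness follows by contradiction: if $x_1<x_2$ were two solutions, strict increase at $x_1$ would force $r>1$ immediately to its right, and strict increase at $x_2$ would force $r<1$ immediately to its left; by continuity there would exist an intermediate $\tilde x\in(x_1,x_2)$ with $r(\tilde x)=1$ at which $r$ fails to be strictly increasing, contradicting $(\log r)'(\tilde x)=\mu>0$. I expect the main subtlety to be the log-derivative identity and this ``crossing only from below'' argument; the boundary analysis and the elementary ODE identities for $R$ and $S$ are routine.
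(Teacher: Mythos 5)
Your argument is correct, up to a sign typo in your first display: with $v=1$ the right-hand side of (\ref{threshold0}) is $N(x)/N(x-\mu)=e^{-\mu x+\mu^2/2}$, not $e^{\mu x-\mu^2/2}$; the latter is $N(x-\mu)/N(x)$, which is what you actually (and correctly) use later when you send $x\to-\infty$. Your route differs from the paper's. The paper clears the denominator and studies the difference $f(x)=(\Phi_{\mu,1}(s)-\Phi_{\mu,1}(x))-(1-\Phi(x))\,N_{\mu,1}(x)/N(x)$ on all of $\R$; the derivative telescopes to $f'=-\mu\,(N_{\mu,1}/N)(1-\Phi)<0$, so $f$ is globally strictly decreasing, and since $f(-\infty)=\Phi_{\mu,1}(s)>0$ and $f(+\infty)=\Phi_{\mu,1}(s)-1<0$, existence and uniqueness follow in one stroke from the intermediate value theorem. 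You instead work with the ratio $r$ of the two sides restricted to $(-\infty,s)$, which is not globally monotone, and substitute a transversality argument: the log-derivative identity gives $(\log r)'=\mu>0$ at every level-$1$ crossing, so all crossings are upward and there can be at most one (your final contradiction step is slightly informal but is the standard ``last point where $r\ge 1$'' argument and is easily made rigorous). Both proofs are sound; the paper's choice of auxiliary function buys global monotonicity and hence a one-line uniqueness step, while your Mills-ratio identities $R'=xR-1$, $S'=(x-\mu)S-1$ and the ``crossings only from below'' device form a robust alternative that does not depend on finding the particular combination whose derivative has a single sign.
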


\begin{lem}\Label{sol1}
When $v>1$,
the equation with respect to $x$
\begin{eqnarray}\Label{threshold1'}
\frac{\Phi(x)}{\Phi_{\mu,v}(x)}
=\frac{\phi(x)}{\phi_{\mu,v}(x)}
\end{eqnarray}
has the unique solution $\alpha_{\mu,v}\in\R$.
Moreover, for $s> \Phi_{\mu,v}^{-1}\left(\frac{\Phi_{\mu,v}(\alpha_{\mu,v})}{\Phi(\alpha_{\mu,v})}\right)$, the equation (\ref{threshold0}) with respect to $x$ 
has two solutions and only the larger solution $\beta_{\mu,v,s}$ is larger than $\alpha_{\mu,v}$. 
\end{lem}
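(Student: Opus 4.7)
The plan is to convert each equation into the vanishing of a smooth auxiliary function and then exploit that, at every zero, the derivative is controlled by the single linear factor $L(x):=(v-1)x+\mu$, which is strictly increasing because $v>1$. For (\ref{threshold1'}), set $f(x):=\Phi(x)N_{\mu,v}(x)-N(x)\Phi_{\mu,v}(x)$. Using $N'(x)=-xN(x)$ and $N_{\mu,v}'(x)=-\frac{x-\mu}{v}N_{\mu,v}(x)$, the $NN_{\mu,v}$ terms cancel and $f'(x)=xN(x)\Phi_{\mu,v}(x)-\frac{x-\mu}{v}\Phi(x)N_{\mu,v}(x)$; substituting the defining identity $\Phi(x)N_{\mu,v}(x)=N(x)\Phi_{\mu,v}(x)$ valid at any zero of $f$ collapses this to
\[
f'(x)\;=\;\frac{N(x)\Phi_{\mu,v}(x)}{v}\,L(x).
\]
Mill's ratio gives $\Phi(x)\sim N(x)/|x|$ and $\Phi_{\mu,v}(x)\sim vN_{\mu,v}(x)/|x-\mu|$ as $x\to-\infty$, forcing $f(x)<0$ for $x$ sufficiently negative; as $x\to+\infty$, $\Phi(x),\Phi_{\mu,v}(x)\to 1$ and $N_{\mu,v}(x)/N(x)\to\infty$ (using $v>1$), so $f(x)>0$ for $x$ sufficiently positive.

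To upgrade this to existence of a unique zero, I would count sign changes: $f$ has an odd number, hence at least one. If it had three sign changes $x_1<x_2<x_3$, the forced alternating pattern $-,+,-,+$ would give $L(x_1)\ge 0$, $L(x_2)\le 0$, $L(x_3)\ge 0$, contradicting the strict monotonicity of $L$. A non-transverse zero of $f$ can occur only at the single root $x^{*}=\mu/(1-v)$ of $L$, and a brief case check of the higher-order structure of $f$ at $x^{*}$ rules it out as an extra solution. Hence $f$ has exactly one zero, which is $\alpha_{\mu,v}$.

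For (\ref{threshold0}), set $g(x):=(1-\Phi(x))N_{\mu,v}(x)-(\Phi_{\mu,v}(s)-\Phi_{\mu,v}(x))N(x)$; the identical calculation yields, at any zero of $g$,
\[
g'(x)\;=\;\frac{(1-\Phi(x))N_{\mu,v}(x)}{v}\,L(x).
\]
The boundary values are $g(s)=(1-\Phi(s))N_{\mu,v}(s)>0$, and $g(x)>0$ for all sufficiently negative $x$ since $N_{\mu,v}/N\to\infty$ dominates the constant factor $\Phi_{\mu,v}(s)$. The decisive step is evaluating at $\alpha:=\alpha_{\mu,v}$: substituting $\Phi(\alpha)N_{\mu,v}(\alpha)=N(\alpha)\Phi_{\mu,v}(\alpha)$ collapses $g(\alpha)$ to
\[
g(\alpha_{\mu,v})\;=\;N(\alpha_{\mu,v})\!\left[\frac{\Phi_{\mu,v}(\alpha_{\mu,v})}{\Phi(\alpha_{\mu,v})}-\Phi_{\mu,v}(s)\right],
\]
which is strictly negative precisely under the hypothesis $s>\Phi_{\mu,v}^{-1}\!\bigl(\Phi_{\mu,v}(\alpha_{\mu,v})/\Phi(\alpha_{\mu,v})\bigr)$. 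By the IVT, $g$ then has at least one zero in each of $(-\infty,\alpha_{\mu,v})$ and $(\alpha_{\mu,v},s)$. Since $g$ is positive at both boundary values, its number of sign changes is even; four or more sign changes would force $L$ to change sign at least twice by the same argument, contradicting monotonicity. Hence $g$ has exactly two zeros, and the larger, $\beta_{\mu,v,s}\in(\alpha_{\mu,v},s)$, is strictly greater than $\alpha_{\mu,v}$. I expect the main technical nuisances to be the careful Mill's-ratio tail analyses that pin down the signs of $f$ and $g$ near $\pm\infty$, together with the minor bookkeeping around the single admissible non-transverse zero location $x=\mu/(1-v)$.
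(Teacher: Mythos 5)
Your argument is correct in outline and its key identities check out, but it takes a genuinely different and more laborious route than the paper. You cross-multiply and study $f=\Phi N_{\mu,v}-N\Phi_{\mu,v}$ and $g=(1-\Phi)N_{\mu,v}-(\Phi_{\mu,v}(s)-\Phi_{\mu,v})N$, for which the derivative is controlled by $L(x)=(v-1)x+\mu$ only \emph{at the zeros}; this forces the sign-change-counting argument and leaves open the possibility of non-transverse zeros at $x^{*}=\mu/(1-v)$, which you defer to an unspecified ``brief case check.'' That check is genuinely needed and does work --- e.g.\ if $f(x^{*})=f'(x^{*})=0$ one computes $vf''(x^{*})=(v-1)N(x^{*})\Phi_{\mu,v}(x^{*})>0$, so $x^{*}$ would be a strict local minimum of value $0$, contradicting $f<0$ on $(-\infty,x^{*})$, and an analogous computation handles $g$ --- but as written this is the one incomplete step. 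The paper avoids the issue entirely by keeping the ratio form: for $\tilde f(x)=(\Phi_{\mu,v}(s)-\Phi_{\mu,v}(x))-(1-\Phi(x))N_{\mu,v}(x)/N(x)$ one has $\tilde f'=-(1-\Phi)\frac{d}{dx}(N_{\mu,v}/N)$, whose sign is opposite to that of $L(x)$ \emph{everywhere}, so $\tilde f$ is strictly increasing on $(-\infty,\mu/(1-v))$ and strictly decreasing thereafter; combined with $\tilde f\to-\infty$ as $x\to-\infty$, $\tilde f<0$ near $+\infty$, and $\tilde f(\alpha_{\mu,v})=\Phi_{\mu,v}(s)-\Phi_{\mu,v}(\alpha_{\mu,v})/\Phi(\alpha_{\mu,v})>0$ (the same evaluation as your $g(\alpha_{\mu,v})<0$), unimodality immediately yields exactly two zeros straddling $\alpha_{\mu,v}$ with no degeneracy analysis. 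Note also that the paper does not reprove the first claim but cites Lemma 3 of \cite{KH13-2} for the uniqueness of $\alpha_{\mu,v}$, whereas you give a self-contained argument; your Mills-ratio tail analysis there is sound, since the leading terms give $f\approx N N_{\mu,v}\bigl(|x|^{-1}-v|x-\mu|^{-1}\bigr)<0$ as $x\to-\infty$ for $v>1$. In short: correct strategy and correct computations, but the paper's choice of auxiliary function buys global monotonicity information where yours only buys it at zeros, and your write-up still owes the degenerate-zero verification.
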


Then,
the generalized Rayleigh-normal distribution function is represented as follows.
\begin{thm}\Label{Zform}
The following equations hold:
when $0<v<1$,
\begin{eqnarray}
&&\hspace{-2em}Z_{v,s}(\mu)
=\nonumber\\
&&\hspace{-2em}1-(\sqrt{1-\Phi(\beta_{\mu,v,s})}\sqrt{\Phi_{\mu,v}(s)-\Phi_{\mu,v}(\beta_{\mu,v,s})} + I_{\mu,v}(\beta_{\mu,v,s}))^2;
\Label{Z<1}
\end{eqnarray}
when $v=1$,
\begin{eqnarray}
&&\hspace{-2em}Z_{1,s}(\mu)
=\nonumber\\
&&\hspace{-2em}\left\{
\begin{array}{l}
\Phi(\mu-s) \\
\hspace{1em}{\it if}~~ \mu\le0 \vspace{0.5em} \\
1-(\sqrt{1-\Phi(\beta_{\mu,1,s})}\sqrt{\Phi(s-\mu)-\Phi(\beta_{\mu,1,s}-\mu)} \\
\hspace{2em}+\Phi\left(\beta_{\mu,1,s}-\frac{\mu}{2}\right)
e^{-\frac{\mu^2}{8}})^2\\ 
\hspace{1em}{\it if} ~~\mu>0;
\end{array}
\right.
\Label{Z=1}
\end{eqnarray}
when $v>1$,
\begin{eqnarray}
&&\hspace{-2em} Z_{v,s}(\mu)
=\nonumber\\
&&\hspace{-2em}\left\{
\begin{array}{l}
\hspace{-0.5em}1-\Phi_{\mu,v}(s) \\ 
\hspace{1em}{\it if}~~ s\le\Phi_{\mu,v}^{-1}(\frac{\Phi_{\mu,v}(\alpha_{\mu,v})}{\Phi(\alpha_{\mu,v})}) \vspace{0.5em}\\
\hspace{-0.5em}1-(\sqrt{\Phi(\alpha_{\mu,v}) \Phi_{\mu,v}(\alpha_{\mu,v})} 
 + I_{\mu,v}(\beta_{\mu,v,s}) - I_{\mu,v}(\alpha_{\mu,v})\\
\hspace{2em} +\sqrt{1-\Phi(\beta_{\mu,v,s})}\sqrt{\Phi_{\mu,v}(s)-\Phi_{\mu,v}(\beta_{\mu,v,s})})^2 \\ 
\hspace{1em}{\it if}~~s>\Phi_{\mu,v}^{-1}(\frac{\Phi_{\mu,v}(\alpha_{\mu,v})}{\Phi(\alpha_{\mu,v})}),
\end{array}
\right.
\Label{Z>1}
\end{eqnarray}
where 
\begin{eqnarray}
\hspace{-0em}I_{\mu,v}(x)
&\hspace{-0.7em}:=&\hspace{-0.7em}\sqrt{\frac{2\sqrt{v}}{1+v}}e^{-\frac{\mu^2}{4(1+v)}} \Phi_{\frac{\mu}{1+v}, \frac{2v}{1+v}}\left(x\right),
\Label{Ix}\\
\hspace{-0em}I_{\mu,v}(\infty)
&\hspace{-0.7em}:=&\hspace{-0.7em}\lim_{x\to\infty} I_{\mu,v}(x)
=\sqrt{\frac{2\sqrt{v}}{1+v}}e^{-\frac{\mu^2}{4(1+v)}}.
\Label{Iinfty}
\end{eqnarray}
\end{thm}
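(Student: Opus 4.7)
The plan is to solve the variational problem (\ref{eq:rn}) by identifying the piecewise structure of the maximizing $A\in{\cal A}_{s}$ and evaluating the resulting fidelity. The functional $A\mapsto{\cal F}(dA/dx,N_{\mu,v})$ is concave in $A$ (because $y\mapsto\sqrt{y}$ is concave and $A'\mapsto A'\cdot N_{\mu,v}$ is linear) and the feasible set ${\cal A}_{s}$ is convex, so any stationary feasible $A^{*}$ is a global maximizer and it suffices to exhibit a candidate satisfying the KKT conditions. On each "free" subinterval where $\Phi<A^{*}<1$, the Euler--Lagrange equation for $\int\sqrt{A'N_{\mu,v}}\,dx$ forces $(A^{*})'=c\,N_{\mu,v}$ for some $c>0$, and the contribution to the fidelity equals $\sqrt{c}\,(\Phi_{\mu,v}(b)-\Phi_{\mu,v}(a))=\sqrt{(A^{*}(b)-A^{*}(a))(\Phi_{\mu,v}(b)-\Phi_{\mu,v}(a))}$. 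On any subinterval where $A^{*}=\Phi$, we have $(A^{*})'=N$ and the contribution is a difference of values of $I_{\mu,v}$, the function in (\ref{Ix}) obtained from $\int\sqrt{N\,N_{\mu,v}}\,dy$ by completing the square in the Gaussian product (which also yields (\ref{Iinfty})). On any subinterval where $A^{*}=1$ the contribution is $0$.

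The interior junction points are located by first-order optimality. Matching the one-sided derivatives $(A^{*})'=N$ and $(A^{*})'=cN_{\mu,v}$ at a $\Phi$-to-free junction, and substituting $c=(A^{*}(b)-A^{*}(a))/(\Phi_{\mu,v}(b)-\Phi_{\mu,v}(a))$, reduces exactly to the threshold equations (\ref{threshold0}) and (\ref{threshold1'}) for $\beta_{\mu,v,s}$ and $\alpha_{\mu,v}$. The solvability of these equations, together with the inequality (\ref{conbeta}) and the ordering $\alpha_{\mu,v}<\beta_{\mu,v,s}$ provided by Lemma \ref{sol1}, are the contents of Lemmas \ref{sol2}, \ref{sol0} and \ref{sol1}, so the pieces fit together in a consistent admissible order.

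The case split is driven by the sign of $v-1$, which governs the tail behaviour of $N_{\mu,v}/N$ and hence whether the lower constraint $A\ge\Phi$ can be avoided. For $0<v<1$ the ratio $N_{\mu,v}/N$ is strictly decreasing, so $A^{*}$ has the three-piece form $\Phi$--free--$1$ with junctions at $\beta_{\mu,v,s}$ and at $s$; summing the three contributions yields (\ref{Z<1}). For $v=1$ and $\mu\le 0$ one has $\Phi_{\mu,1}\ge\Phi$, so the unconstrained Cauchy--Schwarz optimum $A^{*}=\Phi_{\mu,1}/\Phi_{\mu,1}(s)$ is already feasible and gives $Z_{1,s}(\mu)=1-\Phi(s-\mu)=\Phi(\mu-s)$; for $v=1$ and $\mu>0$ the three-piece structure reappears and its contributions reassemble into the second line of (\ref{Z=1}) using (\ref{Ix}). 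For $v>1$ the function $\Phi_{\mu,v}/\Phi$ attains its minimum at $\alpha_{\mu,v}$: if $s\le\Phi_{\mu,v}^{-1}(\Phi_{\mu,v}(\alpha_{\mu,v})/\Phi(\alpha_{\mu,v}))$ then $A^{*}=\Phi_{\mu,v}/\Phi_{\mu,v}(s)$ is still feasible and $Z_{v,s}(\mu)=1-\Phi_{\mu,v}(s)$; otherwise $A^{*}$ acquires a four-piece free--$\Phi$--free--$1$ structure in which the lower free piece touches $\Phi$ tangentially at $\alpha_{\mu,v}$ and the upper free piece leaves $\Phi$ at $\beta_{\mu,v,s}$, and the four contributions combine into the second line of (\ref{Z>1}).

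The main technical obstacle is to rigorously rule out more intricate alternations of free and constrained pieces, i.e., to confirm that exactly the structures listed above are optimal. I expect this to be handled by the concavity/KKT framework above: any candidate must satisfy $(A^{*})'\in\{N,\,cN_{\mu,v},\,0\}$ on each piece by the local Euler--Lagrange analysis, each interior transition point must satisfy the smooth-pasting condition coming from first-order optimality, and the monotonicity properties of $N_{\mu,v}/N$ dictated by the sign of $v-1$ force the junctions to coincide with $\alpha_{\mu,v}$ and $\beta_{\mu,v,s}$ only. Once the structural dichotomy is in place, the remaining work is bookkeeping: insert the junction points from Lemmas \ref{sol2}, \ref{sol0} and \ref{sol1} into the per-piece contributions and simplify using (\ref{Ix}) to produce (\ref{Z<1}), (\ref{Z=1}) and (\ref{Z>1}).
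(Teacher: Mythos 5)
Your direct part coincides with the paper's: the candidate $A_{\mu,v,s}$ you describe (constrained piece equal to $\Phi$, free pieces proportional to $\Phi_{\mu,v}$, junctions at $\alpha_{\mu,v}$ and $\beta_{\mu,v,s}$, truncation at $s$) is exactly the function built in Subsection \ref{Zdir.app}, and evaluating ${\cal F}(dA_{\mu,v,s}/dx,N_{\mu,v})$ piece by piece is how (\ref{Z<1})--(\ref{Z>1}) are assembled. Where you diverge is the optimality proof. The paper does not argue via concavity or Euler--Lagrange at all: its converse (Lemmas \ref{Zcon} and \ref{lem.converse2}) discretizes $[b,b']$, applies the Schwarz inequality on each cell, and then invokes the discrete rearrangement inequality of Lemma \ref{naiseki2} together with the monotonicity of $N/N_{\mu,v}$ (Lemma \ref{monotone}) to bound ${\cal F}(dA/dx,N_{\mu,v})$ for \emph{every} $A\in{\cal A}_s$ by the target expression, before letting the mesh go to zero. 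This route never needs the supremum to be attained and never needs to classify the structure of a maximizer.

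That last point is where your proposal has a genuine gap. First, the supremum in (\ref{eq:rn}) is generally \emph{not} attained in ${\cal A}_s$: the optimal profile has a derivative jump at $s$ (from $c\,N_{\mu,v}(s)>0$ down to $0$), so it is not continuously differentiable, and a "characterize the maximizer by KKT" argument has nothing to characterize unless you pass to a suitable closure or to approximating sequences. Second, and more importantly, you defer precisely the hard step: concavity guarantees that a feasible point is globally optimal only if the full variational inequality $J'(A^{*};A-A^{*})\le 0$ holds for all feasible $A$, and after integration by parts this requires not just the Euler--Lagrange equation on free pieces and derivative matching at junctions, but also the sign condition $\bigl(\sqrt{N_{\mu,v}/N}\bigr)'\ge 0$ on the whole active set $\{A^{*}=\Phi\}$ (where the admissible perturbations are one-sided), plus control of the boundary term at $-\infty$. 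Verifying that sign condition is exactly where (\ref{conbeta}), the inequality $\alpha_{\mu,v}\le\beta_{\mu,v,s}$ of Lemma \ref{sol1}, and the interval ${\cal S}_{\mu,v}$ of Lemma \ref{monotone} must be used, and your proposal asserts rather than performs this check ("I expect this to be handled\dots"). Without it, matching one-sided derivatives at the junctions is only a necessary condition and does not rule out other alternations of free and constrained pieces. The KKT route can almost certainly be completed along these lines, but as written the proposal stops short of the step that actually proves optimality; the paper's Lemma \ref{naiseki2} argument is the substitute for it and cannot be omitted or merely presumed.
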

Theorem \ref{Zform} is proven in Subsection \ref{Zform.app}
by using lemmas in Subsections  \ref{Zdir.app} and \ref{Zcon.app}.

Using the explicit form in Theorem \ref{Zform}, 
we can prove the following basic property of the Rayleigh-normal distribution function.
\begin{prp}\Label{cum}
The generalized Rayleigh-normal distribution function $Z_{v,s}$ is a cumulative distribution function for $0<v<\infty$.
\end{prp}
Proposition \ref{cum} is proven in Subsection \ref{cum.app}.

Next we show the concrete form of the generalized Rayleigh-normal distribution function in the case of  $v\to 0$.
\begin{prp}\Label{lim1}
\begin{eqnarray}
\lim_{v\to0}Z_{v,s}(\mu)
=
\left\{
\begin{array}{cll}
\Phi\left(\mu\right) & {\it if} & \mu< s
\\
\frac{1}{2}(1+\Phi\left(\mu\right)) & {\it if} & \mu= s
\\
1&{\it if}&\mu> s
\end{array}
\right.\Label{eqlim1}
\end{eqnarray}
\end{prp}
Proposition $\ref{lim1}$ is proven in Subsection \ref{lim1.app}.
The function itself in Proposition \ref{lim1} is not right continuous, and thus, not a cumulative distribution function.
However, 
if we redefine the function value by $1$ only at $\mu=s$ in \eqref{eqlim1},
the function in \eqref{eqlim1} becomes right continuous, and thus is 
a cumulative distribution function.
Nevertheless, 
we define the generalized Rayleigh-normal distribution with $v=0$ as a left-continuous function as follows
to describe the asymptotics of RNC via restricted storage later:
\begin{eqnarray}
Z_{0,s}(\mu)
:=
\left\{
\begin{array}{cll}
\Phi\left(\mu\right) & {\it if} & \mu\le s\\
1&{\it if}&\mu> s.
\end{array}
\right.
\Label{Z0}
\end{eqnarray}
 
We also have the concrete form of the generalized Rayleigh-normal distribution function in the case of $v\to \infty$.
\begin{prp}\Label{lim2}
\begin{eqnarray}
\lim_{v\to\infty}Z_{v,\sqrt{v}s}(\sqrt{v}\mu)
=\Phi\left(\mu-\min\{s,0\})\right)
\end{eqnarray}
\end{prp}
Proposition $\ref{lim2}$ is proven in Subsection \ref{lim2.app}.

The graphs of the generalized Rayleigh-normal distribution functions can be plotted as in Figs. \ref{RNs} and \ref{RNv}.
%
\begin{figure}[t]
 \begin{center}
 \hspace*{0em}\includegraphics[width=80mm, height=50mm]{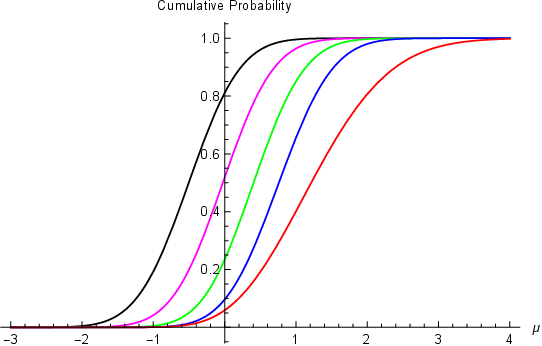}
 \end{center}
 \caption{
The black, purple, green, blue and red lines represent the generalized Rayleigh-normal distribution functions with parameter $s=-0.5$, $0$, $0.5$, $1$ and $\infty$ at $v=1/3$.
}
 \Label{RNs}
\end{figure}
\begin{figure}[t]
 \begin{center}
 \hspace*{0em}\includegraphics[width=80mm, height=50mm]{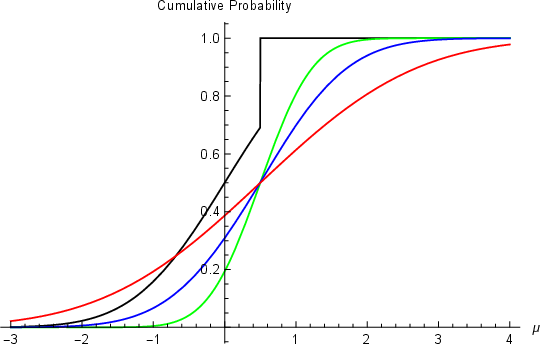}
 \end{center}
 \caption{
The black, green, blue and red lines represent the generalized Rayleigh-normal distribution functions with $v=0$, $1/3$, $1$ and $3$ at $s=0.5$.
}
 \Label{RNv}
\end{figure}

\section{Non-Asymptotics for Random Number Conversion via Restricted Storage}
\Label{sec:NARNC}

We introduce two kinds of conversion methods of probability distributions, i.e., 
deterministic conversions and majorization conversions as follows.
%

\subsection{Deterministic Conversion}
\Label{subsec:DC}

In this subsection, 
we consider approximate conversion problems 
when the conversion is routed through a storage with limited size.

Let $\calP(\calX)$ be the set of all probability distributions on a finite set $\calX$.
For $P\in\calP(\calX)$ and a map $f:\mathcal{X}\to\mathcal{Y}$, 
we define the probability distribution $W_f(P)\in\calP(\calY)$ by 
\begin{eqnarray}
W_f(P)(y):=\sum_{x\in W^{-1}(y)}P(x).
\Label{eq:deterministic}
\end{eqnarray}
We call a map $W_f:\calP(\mathcal{X}) \to \calP(\mathcal{Y})$ defined in (\ref{eq:deterministic}) 
a {\it deterministic conversion}.

In order to treat the quality of conversion,
we introduce the fidelity (or the Bhattacharyya coefficient) $F$ between two probability distributions over the same discrete set $\mathcal{Y}$ as
\begin{eqnarray}
F(Q, Q'):=\sum_{y\in\mathcal{Y}}\sqrt{Q(y)}\sqrt{Q'(y)}.
\end{eqnarray}
Since this value $F(Q, Q')$ relates to the Hellinger distance $d_H$ as $d_H(Q, Q')=\sqrt{1-F(Q, Q')}$ \cite{Vaa98},
it represents how close two probability distributions $Q$ and $Q'$.
Then, we define the maximal fidelity $F^{\cal D}$ from $P\in\calP(\calX)$ to $Q\in\calP(\calY)$ 
among deterministic conversions by
\begin{eqnarray}
F^{\cal D}(P\to Q)
&:=&\sup_{W:\calP(\calX) \to \calP(\calY)}
\{F(W(P), Q)| W~ \text{is a deterministic conversion}\}\\
&=&\sup_{f:\mathcal{X}\to\mathcal{Y}}
F(W_f(P), Q)
\end{eqnarray}
Moreover, when the size of a storage is limited,
the maximal fidelity via restricted storage with size of $N$ bits
is defined by
\begin{eqnarray*}
&&\hspace{-1.5em}F^{\cal D}(P\to Q|N)\\
&&\hspace{-1.9em}:=\sup_{f:\mathcal{X}\to\{0,1\}^N, 
f':\{0,1\}^N\to\mathcal{Y}}
F(W_{f'}\circ W_f(P), Q)
\end{eqnarray*}
where $\{0,1\}^N$ represents the space of $N$-bits.

When a confidence coefficient $0<\nu<1$ is fixed,
we define the maximal conversion number $L$ of copies of $Q$ 
by deterministic conversions with the initial distribution $P$ as
\begin{eqnarray*}
L^{\cal D}(P, Q|\nu)
:=\max
\{L|
\exists f:\mathcal{X}\to\mathcal{Y}^L,~
F(W_f(P), Q^L)\ge\nu
\}.
\end{eqnarray*}
Moreover, when the size of the storage is limited,
the maximum conversion number from $P$ to $Q$ via a restricted storage with size of $N$ bits
is defined by 
\begin{eqnarray*}
&&\hspace{-1.5em}L^{\cal D}(P, Q|\nu, N)\\
&&\hspace{-1.5em}:=\max
\left\{L\Bigg|
\begin{array}{l}
\exists f:\mathcal{X}\to\{0,1\}^N, 
\exists f':\{0,1\}^N\to\mathcal{Y},\\
F(W_{f'}\circ W_f(P), Q^L)\ge\nu
\end{array}
\right\}. \\
\end{eqnarray*}
Then the above values can be rewritten as
\begin{eqnarray*}
L^{\cal D}(P, Q|\nu)&=&
\max\{L|F^{\cal D}(P\to Q^L)\ge\nu\},\\
L^{\cal D}(P, Q|\nu,N)&=&\max\{L|F^{\cal D}(P\to Q^L|N)\ge\nu\}.
\end{eqnarray*}
In particular, when the source distribution is $n$-fold i.i.d. of $P$, 
we define
\begin{eqnarray*}
L^{\cal D}_n(P, Q|\nu)&:=&L^{\cal D}(P^n, Q|\nu),\\
L^{\cal D}_n(P, Q|\nu, N)&:=& L^{\cal D}(P^n, Q|\nu, N). 
\end{eqnarray*}
One of main issues is the asymptotic expansion of $L^{\cal D}_n(P, Q|\nu, N)$ up to the second order $\sqrt{n}$.

\subsection{Majorization Conversion}
\Label{subsec:MC}

In order to relax the condition for deterministic conversions, 
we introduce majorization conversions.
This relaxed condition is useful for the proofs of converse parts. 
Moreover, the concept of majorization conversions is essentially required for entanglement conversion in quantum information.
For a probability distribution $P$ on a finite set, 
let $P^{\downarrow}$ be a probability distribution on $\{1,2,...,|\calX|\}$ 
and $P^{\downarrow}_i$ denote the $i$-th element of $\{P(x)\}_{x\in\mathcal{X}}$ sorted in decreasing order for $1\le i\le|\mathcal{X}|$,
where $|{\cal X}|$ represents the cardinality of the set $\mathcal{X}$.
When two probability distributions $P\in\calP(\calX)$ and $Q\in\calP(\calY)$ 
satisfy $\sum_{i=1}^lP^{\downarrow}_i \le \sum_{i=1}^lQ^{\downarrow}_i$ for any $l$, we say that $P$ is majorized by $Q$ and written as $P\prec Q$.
Here, we note that the sets $\calX$ and $\calY$ 
do not necessarily coincide with each other, 
and the majorization relation is a partial order on a set of probability distributions on finite sets \cite{MO79,Arn86}. 
Then, a map $W'$ from $\calP(\calX)$ to $\calP(\calY)$ is called a {\it majorization conversion} 
when $P\prec W'(P)$ for an arbitrary probability distribution $P \in \calP(\calX)$.

Then, we introduce the maximal fidelity among majorization conversions as
\begin{eqnarray}
F^{\cal M}(P\to Q)
&:=&\sup_{W':\calP(\calX) \to \calP(\calY)}
\{F(W'(P), Q)| W'\text{ is a majorization conversion}\}\\
&=& \sup_{P'\in\calP(\calY)} \{F(P', Q)| P\prec P' \}
\end{eqnarray}
where $P$ and $Q$ are probability distributions on $\mathcal{X}$ and $\mathcal{Y}$, respectively.
%
Moreover, when the size of the storage is limited,
the maximal fidelity via restricted storage with size of $N$ bits
is given by
\begin{eqnarray*}
&&F^{\cal M}(P\to Q|N)\\
&:=&\sup_{P''\in\calP(\calY)}
\{F(P'', Q)|
\exists P'\in\mathcal{P}(\{0,1\}^N),
P\prec P'\prec P''
\}.
\end{eqnarray*}

Similar to the deterministic conversion,
when confidence coefficient $0<\nu<1$ is fixed,
we define the maximum conversion number $L$ of $Q^L$ which can be approximated from $P$ 
by majorization conversions as
\begin{eqnarray*}
L^{\cal M}(P, Q|\nu, N)
&:=&\max \{L|F^{\cal M}(P\to Q^L|N)\ge\nu\}.
\end{eqnarray*}
Moreover, when the size of the storage is limited,
the maximum conversion number from $P$ to $Q$ via restricted storage with size of $N$ bits
is defined by 
\begin{eqnarray*}
&&\hspace{-1.5em}L^{\cal M}(P, Q|\nu, N)\\
&&\hspace{-1.5em}:=
\max
\left\{L\Bigg|
\begin{array}{l}
\exists P''\in\calP(\calY),
\exists P'\in\mathcal{P}(\{0,1\}^N),
\\
P\prec P'\prec P'', 
F(P'', Q^L)\ge\nu
\end{array}
\right\}.
\end{eqnarray*}
Then the above values can be rewritten as
\begin{eqnarray}
\hspace{-0.5em}L^{\cal M}(P, Q|\nu)&=&\max\{L|F^{\cal M}(P\to Q^L)\ge\nu\},\nonumber\\
\hspace{-0.9em}L^{\cal M}(P, Q|\nu,N)&=&\max\{L|F^{\cal M}(P\to Q^L|N)\ge\nu\}.\Label{M1}
\end{eqnarray}
In particular, when the source distribution is $n$-fold i.i.d. of $P$, 
we define
\begin{eqnarray*}
L^{\cal M}_n(P, Q|\nu)&:=&L^{\cal M}(P^n, Q|\nu),\\
L^{\cal M}_n(P, Q|\nu, N)&:=& L^{\cal M}(P^n, Q|\nu, N).
\end{eqnarray*}
One of main issues of this paper is the asymptotic expansion of $L^{\cal M}_n(P, Q|\nu, N)$ up to the order $\sqrt{n}$.
This quantity plays an important role in quantum information theory.

\subsection{Basic Properties of Two Conversions }
\Label{subsec:RPTC}

In this subsection,
we summarize some properties about deterministic and majorization conversions.

First, we summarize some properties about the maximum fidelity of two conversions.
The following lemma holds for the uniform distribution $U_N$ in the non-asymptotic setting.
\begin{lem}\cite{KH13-2}\Label{opt con}
For a probability distribution $P$ and a natural number $N$, 
we define the following distribution ${\cal C}_N(P)$ 
on $\{1, \ldots, N\}$ as a distribution approximating the uniform distribution:
\begin{eqnarray}
{\cal C}_N(P)(j)
:=\left\{
\begin{array}{lll}
P^{\downarrow}(j)&\hbox{ if }&1\le j\le J_{P,N}-1\\
\frac{\sum_{i=J_{P,N}}^{|\mathcal{X}|} P^{\downarrow}(i)}{N+1-J_{P,N}}&\hbox{ if }&~J_{P,N}\le j\le N
\end{array}
\right.\Label{PL}
\end{eqnarray}
where 
\begin{eqnarray}
J_{P,N}
&:=&
\left\{
\begin{array}{ll}
1&\hbox{if } P^{\downarrow}(1)\le \frac{1}{N}\\
\max\left\{j \in \{2, \ldots, N\} 
\left|\frac{\sum_{i=j}^{|\mathcal{X}|} P^{\downarrow}(i)}{N+1-j}<P^{\downarrow}(j-1)\right.\right\} &\hbox{otherwise.}
\end{array}\right.
\Label{J}
\end{eqnarray}
Then, 
$P\prec {\cal C}_N(P)$ and 
the following equation hold:
\begin{eqnarray}
\hspace{-1.5em}F^{\mathcal{M}}(P\to U_N)
&\hspace{-0.1em}=&F({\cal C}_N(P), U_N)\nonumber\\
&\hspace{-0.1em}=&\sqrt{\frac{1}{N}}\left(\sum_{j=1}^{J_{P,N}-1}\sqrt{P^{\downarrow}(j)}+\sqrt{(N+1-J_{P,N})\sum_{i=J_{P,N}}^{|\mathcal{X}|} P^{\downarrow}(i)}\right).
\end{eqnarray}
\end{lem}

%
In addition, the following lemma holds.
\begin{lem}\Label{opt trans}
For probability distributions $P\in\calP(\calX)$,  $Q\in\calP(\calY)$ and a natural number $N$, 
\begin{eqnarray}
F^{\cal M}(P\to Q|N)=F^{\cal M}({\cal C}_{2^N}(P) \to Q)
\end{eqnarray}
where ${\cal C}_{2^N}(P)$ was defined in (\ref{PL}).
\end{lem}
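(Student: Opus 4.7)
The plan is to prove equality via two inequalities, the substance of which is the observation that ${\cal C}_{2^N}(P)$ is the minimum element, in the majorization order, of the set $\{P'\in\mathcal{P}({\cal B}_N): P\prec P'\}$. For the direction $F^{\cal M}(P\to Q|N)\geq F^{\cal M}({\cal C}_{2^N}(P)\to Q)$, I would first check that ${\cal C}_{2^N}(P)\in\mathcal{P}({\cal B}_N)$, since its support has cardinality at most $2^N$ by construction, and that $P\prec {\cal C}_{2^N}(P)$. Writing $J:=J_{P,2^N}$ and $c:=(1-\sum_{i=1}^J P^{\downarrow}(i))/(2^N-J)$, the defining property of $J$ forces $P^{\downarrow}(j)\leq c$ for $j>J$, whence $\sum_{i=1}^l {\cal C}_{2^N}(P)^{\downarrow}(i)\geq\sum_{i=1}^l P^{\downarrow}(i)$ for every $l$. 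Thus ${\cal C}_{2^N}(P)$ is a valid intermediate distribution in the definition of $F^{\cal M}(P\to Q|N)$, and any $P''$ with ${\cal C}_{2^N}(P)\prec P''$ is reachable from $P$ through the chain $P\prec {\cal C}_{2^N}(P)\prec P''$; taking the supremum over such $P''$ gives the inequality.

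For the reverse inequality, the key claim is: every $P'\in\mathcal{P}({\cal B}_N)$ with $P\prec P'$ also satisfies ${\cal C}_{2^N}(P)\prec P'$. Granting this, any chain $P\prec P'\prec P''$ with $P'\in\mathcal{P}({\cal B}_N)$ extends to ${\cal C}_{2^N}(P)\prec P''$ by transitivity, so $F(P'',Q)\leq F^{\cal M}({\cal C}_{2^N}(P)\to Q)$, and the supremum gives $F^{\cal M}(P\to Q|N)\leq F^{\cal M}({\cal C}_{2^N}(P)\to Q)$. To prove the key claim I would use the standard characterization $X\prec Y$ iff $\sum_i (X_i-t)_+\leq \sum_i (Y_i-t)_+$ for every $t\geq 0$. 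A direct case analysis shows that $\sum_i ({\cal C}_{2^N}(P)_i-t)_+$ equals $\sum_i (P_i-t)_+$ when $t>c$, since on both sides only the top $J$ entries contribute (using $P^{\downarrow}(j)\leq c$ for $j>J$), and equals $1-2^N t$ when $0\leq t\leq c$, since all $2^N$ entries of ${\cal C}_{2^N}(P)$ are then at least $t$. In the regime $t>c$ the required inequality against $P'$ is immediate from $P\prec P'$; in the regime $t\leq c$ the support-size constraint $|\supp(P')|\leq 2^N$ gives $\sum_{i:P'_i\leq t}P'_i\leq (2^N-k(t))\,t$ with $k(t):=|\{i:P'_i>t\}|$, which rearranges to $\sum_i (P'_i-t)_+\geq 1-2^N t$, as required.

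The main obstacle I anticipate is organizing the two regimes cleanly and verifying the boundary conditions tied to the definition of $J_{P,2^N}$; in particular, confirming $P^{\downarrow}(j)\leq c$ for all $j>J$ (including the edge cases $J=0$ and $J=2^N-1$, where the defining set for $J$ is respectively empty or pushes against its upper bound) is the critical technical ingredient on which both the validity of $P\prec{\cal C}_{2^N}(P)$ and the case analysis for the key claim rest.
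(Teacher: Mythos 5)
Your proposal is correct, and its overall reduction is the same as the paper's: both arguments rest on the fact that ${\cal C}_{2^N}(P)$ is the minimum, in the majorization order, of the set of distributions supported on at most $2^N$ points that majorize $P$, with the easy direction following from $P\prec{\cal C}_{2^N}(P)$ and the hard direction from the key claim that ${\cal C}_{2^N}(P)\prec P'$ for every admissible intermediate $P'$. Where you genuinely diverge is in the proof of that key claim. The paper proves it by induction on the storage size $M$, splitting on whether $P'(1)=P(1)$ or $P'(1)>P(1)$ and, in the latter case, constructing an explicit intermediate distribution $Q'$ with $P\prec Q'\prec P'$ and $Q'(1)=P(1)$ so as to invoke the inductive hypothesis on the restricted tails. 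You instead use the threshold characterization $X\prec Y\iff\sum_i(X_i-t)_+\le\sum_i(Y_i-t)_+$ for all $t\ge0$ (valid here since both vectors sum to one), which splits cleanly into the regime $t>c$, where $\sum_i({\cal C}_{2^N}(P)_i-t)_+=\sum_i(P_i-t)_+$ and the bound follows from $P\prec P'$, and the regime $t\le c$, where the cardinality constraint $|\supp(P')|\le 2^N$ alone forces $\sum_i(P'_i-t)_+\ge 1-2^Nt$. Your route is non-inductive and arguably cleaner: it isolates exactly which property of $P'$ is used in each regime (majorization of $P$ for large $t$, support size for small $t$) and avoids the delicate bookkeeping of the paper's intermediate construction. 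Both approaches hinge on the same elementary fact, which you correctly flag as the critical ingredient, namely $P^{\downarrow}(j)\le c$ for $j>J_{P,2^N}$; this does follow from the maximality in the definition of $J_{P,2^N}$ (e.g.\ from $(2^N-J)c=P^{\downarrow}(J+1)+(2^N-J-1)A_{J+1}$ with $A_{J+1}\ge P^{\downarrow}(J+1)$, plus the edge cases you mention), so your sketch fills in to a complete proof.
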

We provide the proof of Lemma \ref{opt trans} in Section \ref{opt trans.app}.
Note that ${\cal C}_\phi(P)$ depends on the source distribution $P$ and 
does not on the target distribution $Q$ in Lemma \ref{opt trans}.
This fact is essential in the asymptotics for $F^{\cal M}(P\to Q|N)$.

We remark that $P\prec W(P)$ holds for a deterministic conversion 
$W:\calP(\calX)\to\calP(\calY)$,
and thus, a deterministic conversion is a majorization conversion.
Therefore, we have the relations
\begin{eqnarray}
F^{\cal D}(P\to Q) &\le& F^{\cal M}(P\to Q),\Label{fidelity-ineq} \\
F^{\cal D}(P\to Q|N) &\le& F^{\cal M}(P\to Q|N). 
\Label{fidelity-ineq2}
\end{eqnarray}
Next, we summarize some properties about the maximum conversion number of two conversion.
From (\ref{fidelity-ineq}) and (\ref{fidelity-ineq2}), we have
\begin{eqnarray}
L^{\cal M}_n(P, Q|\nu) &\ge& L^{\cal D}_n(P, Q|\nu),\Label{number-ineq}\\
L^{\cal M}_n(P, Q|\nu, N)&\ge& L^{\cal D}_n(P, Q|\nu, N).
\Label{number-ineq2}
\end{eqnarray}

One of main issues of this paper is to derive the asymptotic behaviors of 
$L^{\cal M}_n(P, Q|\nu, N)$ and $L^{\cal D}_n(P, Q|\nu, N)$
as stated above.
Fortunately, 
when either the source distribution $P$ or the target distribution $Q$ is a uniform distribution,
their asymptotic behaviors are evaluated by direct conversions without storage in the following way.

\begin{prp}\Label{opt trans cor}
\begin{eqnarray}
L^{\cal D}_n(U_N, Q|\nu, m\log N)
&\ge& L^{\cal D}_{\min\{ n ,m\}}(U_N, Q|\nu), \Label{D1} \\
L^{\cal M}_n(U_N, Q|\nu, m\log N)
&=& L^{\cal M}_{\min\{ n ,m\}}(U_N, Q|\nu), \Label{M1}
\end{eqnarray}
where $\log$ indicate the logarithm to the base $2$. 
\end{prp}

\begin{prp}\Label{opt trans cor2}
Let $i={\cal D}$ or ${\cal M}$.
When 
$m \ge L^i_n(P, U_N|\nu) $,
\begin{eqnarray}
L^i_n(P, U_N|\nu, m\log  N)
=L^i_n(P, U_N|\nu). \Label{D2-1}
\end{eqnarray}
Otherwise,
\begin{eqnarray}
m \le L^i_n(P, U_N|\nu, m\log  N) \le m -2\log_N \nu. \Label{D2-2}
\end{eqnarray}
\end{prp}

We provide the proofs of Lemmas \ref{opt trans cor} and \ref{opt trans cor2}  in Appendices \ref{opt trans cor.app} and \ref{opt trans cor2.app}, respectively.

\section{Asymptotics for Random Number Conversion via Restricted Storage}
\Label{sec:ARNC}

When the number of copies of an initial distribution is $n$,
we  consider the relation of the size $S_n$ of storage and the number $T_n$ of copies of a target distribution in this section.
\begin{df}
A sequence $\{(S_n,T_n)\}_{n=1}^{\infty}$ is called $\nu$-{\it achievable}  with respect to the deterministic conversion or the majorization conversion
if it satisfies
\begin{eqnarray}
\underset{n\to\infty}{\rm liminf} F^{i}(P^{ n}\to Q^{T_n}|{S_n})
\ge \nu
\end{eqnarray}
for $i={\cal D}$ or ${\cal M}$, respectively.
\end{df}
For a sequence $\{(S_n,T_n)\}$,
smaller $S_n$ and larger $T_n$ give a better performance.
Hence, we say that a sequence $\{(S_n,T_n)\}$ {\it dominates} another one $\{(S'_n,T'_n)\}$
when there exists $N\in\N$ such that $S_n \le S'_n$ and $T_n \ge T'_n$ for $n\ge N$.
Similarly, we say that a sequence $\{(S_n,T_n)\}$ {\it simulates} another sequence $\{(S'_n,T'_n)\}$ when there exists a sequence $\{a_n\}\subset(0,1]$ such that  $(S'_n,T'_n)=(S_{a_nn},T_{a_nn})$.


When a $\nu$-achievable sequence $\{(S_n,T_n)\}$ dominates a sequence $\{(S'_n,T'_n)\}$,
the sequence $\{(S'_n,T'_n)\}$ is also $\nu$-achievable obviously.
Moreover, the following lemma holds.
\begin{lem}\Label{simulate}
When a $\nu$-achievable sequence $\{(S_n,T_n)\}$ simulates a sequence $\{(S'_n,T'_n)\}$,
the sequence $\{(S'_n,T'_n)\}$ is also $\nu$-achievable.
\end{lem}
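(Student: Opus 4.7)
The plan is to lift any near-optimal conversion from $P^{a_n n}$ (via storage $S_{a_n n}$) to one from $P^n$ by first discarding the excess copies via the coordinate projection $\pi_n:\mathcal{X}^n\to\mathcal{X}^{a_n n}$. Since $a_n\in(0,1]$ gives $a_n n\le n$, this projection is a genuine deterministic map with $\pi_n(P^n)=P^{a_n n}$; because deterministic maps are in particular majorization conversions (one has $P^n\prec \pi_n(P^n)$), the same construction simultaneously handles both $i={\cal D}$ and $i={\cal M}$.

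Concretely, I would fix $\varepsilon>0$ and, for each $n$, pick near-optimal converters for the index $a_n n$: when $i={\cal D}$, maps $W_{a_n n}:\mathcal{X}^{a_n n}\to{\cal B}_{S_{a_n n}}$ and $W'_{a_n n}:{\cal B}_{S_{a_n n}}\to\mathcal{Y}^{T_{a_n n}}$ whose composition witnesses $F^{\cal D}(P^{a_n n}\to Q^{T_{a_n n}}\mid S_{a_n n})-\varepsilon$; when $i={\cal M}$, a chain $P^{a_n n}\prec\tilde P'\prec\tilde P''$ with $\tilde P'\in\mathcal{P}({\cal B}_{S_{a_n n}})$ attaining the same value. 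Pre-composing with $\pi_n$ (in the deterministic case via $(W_{a_n n}\circ\pi_n,\,W'_{a_n n})$; in the majorization case by prepending $P^n\prec \pi_n(P^n)=P^{a_n n}$ to the chain and using transitivity of $\prec$) leaves the output distribution unchanged, so
\begin{equation*}
F^i(P^n\to Q^{T'_n}\mid S'_n)\ \ge\ F^i(P^{a_n n}\to Q^{T_{a_n n}}\mid S_{a_n n})-\varepsilon.
\end{equation*}
Letting $\varepsilon\to 0$ and applying $\liminf_n$ then yields $\liminf_n F^i(P^n\to Q^{T'_n}\mid S'_n)\ge \nu$, which is exactly the claimed $\nu$-achievability of $\{(S'_n,T'_n)\}$.

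The only real subtlety lies in the very last $\liminf$ step: deducing $\liminf_n F^i(P^{a_n n}\to Q^{T_{a_n n}}\mid S_{a_n n})\ge \nu$ from the hypothesis $\liminf_n F^i(P^n\to Q^{T_n}\mid S_n)\ge \nu$ needs $a_n n\to\infty$, since otherwise the re-indexed quantities form a bounded collection to which the tail estimate $F^i(P^n\to Q^{T_n}\mid S_n)\ge\nu-\varepsilon$ for $n\ge N_0$ does not apply. I would therefore open the proof by making this hypothesis explicit, observing that it is tacit in the very notion of ``simulates'' (otherwise $\{(S_{a_n n},T_{a_n n})\}$ degenerates to a bounded reindexing and is not a genuine asymptotic sequence). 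Once this is in place, the remaining argument is the clean two-line composition above, and no further estimate is needed.
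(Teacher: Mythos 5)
Your proof is correct and follows essentially the same route as the paper's: the paper likewise reduces $F^i(P^{m}\to Q^{T_{a_m m}}|S_{a_m m})$ to $F^i(P^{n_m}\to Q^{T_{n_m}}|S_{n_m})$ with $n_m:=a_m m\le m$ (i.e.\ extra source copies can only help, which is your projection/majorization step) and then takes the $\liminf$. Your explicit remark that one needs $a_n n\to\infty$ for the final $\liminf$ step is a fair observation — the paper's proof relies on this tacitly as well — but it does not change the argument.
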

We provide the  proof of Lemma \ref{simulate} in Section \ref{simulate.app}.




\subsection{First-Order Rate Region }~
\Label{subsec:FORR}

In this subsection,
we assume that a sequence $\{(S_n,T_n)\}$ is represented by $S_n=s_1n+o(n)$ and $T_n=t_1n+o(n)$ with the first-order rates $s_1>0$ and $t_1>0$
and focus on the first-order asymptotics of RNC via restricted storage.
In the following, Then, we omit the $o(n)$ term unless otherwise noted.
%
\begin{df}
A first-order rate pair $(s_1,t_1)$ is called  $\nu$-{\it achievable} 
when a sequence $\{(s_1n,t_1n)\}$ is $\nu$-achievable.
The set of $\nu$-achievable rate pairs for $i={\cal D}$ and ${\cal M}$ is denoted by
\begin{eqnarray}
\hspace{-0em}{\cal R}^{1,i}_{P,Q}(\nu)
:=\left\{\left(s_1,t_1\right)\bigg|
\underset{n\to\infty}{\rm liminf}F^{i}(P^{ n}\to Q^{t_1 n}|{s_1 n})
\ge \nu \right\}.
\label{reg1D}
\end{eqnarray}
\end{df}

Then, we have the following characterization.
\begin{thm}\Label{region1}
For $\nu\in (0,1)$,
\begin{eqnarray}
&&\hspace{-1em}{\cal R}^{1,{\cal D}}_{P,Q}(\nu)={\cal R}^{1,{\cal M}}_{P,Q}(\nu)\nonumber
\\&\hspace{-2em}=&
\hspace{-1em}\left\{\left(s_1,t_1\right)\bigg|0 < s_1, 0< t_1\le \frac{\min\{H(P), s_1 \}}{H(Q)} \right\},
\label{reg1}
\end{eqnarray}
where $H(P)$ and $H(Q)$ are the Shannon entropy of $P$ and $Q$, respectively.
\end{thm}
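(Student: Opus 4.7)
The plan is to sandwich both regions to equal the right-hand side of (\ref{reg1}). Since (\ref{fidelity-ineq2}) gives $F^{\cal D}(\cdot\,|\,N)\le F^{\cal M}(\cdot\,|\,N)$, the inclusion ${\cal R}^{1,{\cal D}}_{P,Q}(\nu)\subseteq{\cal R}^{1,{\cal M}}_{P,Q}(\nu)$ is automatic, so it is enough to prove (i) the achievability $\{(s_1,t_1)\mid 0<s_1,\ 0<t_1\le\min\{H(P),s_1\}/H(Q)\}\subseteq{\cal R}^{1,{\cal D}}_{P,Q}(\nu)$ and (ii) the converse that any $(s_1,t_1)\in{\cal R}^{1,{\cal M}}_{P,Q}(\nu)$ satisfies $t_1\le\min\{H(P),s_1\}/H(Q)$; together these two inclusions sandwich both rate regions onto the stated set.

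\textbf{Achievability.} Fix $(s_1,t_1)$ with $0<t_1<\min\{H(P),s_1\}/H(Q)$ strictly and pick an intermediate rate $\kappa$ with $t_1 H(Q)<\kappa<\min\{H(P),s_1\}$. I would build a two-step deterministic conversion $\mathcal{X}^n\to{\cal B}_{\lfloor s_1 n\rfloor}\to\mathcal{Y}^{\lceil t_1 n\rceil}$ routed through the uniform distribution $U_{2^{K_n}}$ on $2^{K_n}$ symbols with $K_n:=\lfloor\kappa n\rfloor$. The first step is the classical intrinsic-randomness extraction map $W_1\colon\mathcal{X}^n\to{\cal B}_{K_n}\subseteq{\cal B}_{\lfloor s_1 n\rfloor}$ with $F(W_1(P^n),U_{2^{K_n}})\to 1$, which is possible since $\kappa<H(P)$. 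The second step is a classical simulation $W_2\colon{\cal B}_{K_n}\to\mathcal{Y}^{\lceil t_1 n\rceil}$ with $F(W_2(U_{2^{K_n}}),Q^{\lceil t_1 n\rceil})\to 1$, possible since $\kappa>t_1 H(Q)$; both are instances of the storage-free first-order RNC results in \cite{KH13-2,Hay08}. Composition together with monotonicity of fidelity under stochastic maps and the triangle inequality for Hellinger distance yields $F^{\cal D}(P^n\to Q^{\lceil t_1 n\rceil}\,|\,\lfloor s_1 n\rfloor)\to 1$, which eventually exceeds any prescribed $\nu\in(0,1)$. Boundary points with $t_1=\min\{H(P),s_1\}/H(Q)$ are absorbed into the $o(n)$ slack of the definition by letting $\kappa\uparrow\min\{H(P),s_1\}$.

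\textbf{Converse and main obstacle.} The converse splits into two independent constraints. For the storage bound $t_1 H(Q)\le s_1$, the elementary inequality (\ref{jimei}) gives
\begin{equation*}
F^{\cal M}(P^n\to Q^{t_1 n}\,|\,s_1 n) \le \Bigl(\sum_{i=1}^{2^{s_1 n}}(Q^{t_1 n})^{\downarrow}_i\Bigr)^{1/2},
\end{equation*}
and I would prove that this right-hand side tends to $0$ whenever $s_1<t_1 H(Q)$ by an information-spectrum argument: fix $\kappa$ with $s_1<\kappa<t_1 H(Q)$, partition the outcomes at the threshold $2^{-\kappa n}$, and obtain
\begin{equation*}
\sum_{i=1}^{2^{s_1 n}}(Q^{t_1 n})^{\downarrow}_i \le \Pr\bigl[-\log Q^{t_1 n}(X)\le \kappa n\bigr]+2^{s_1 n-\kappa n};
\end{equation*}
the first term vanishes by the law of large numbers applied to the self-information of $Q^{t_1 n}$, whose mean $t_1 n H(Q)$ exceeds $\kappa n$, and the second term vanishes since $s_1<\kappa$. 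For the source bound $t_1 H(Q)\le H(P)$, enlarging storage only increases fidelity, so $F^{\cal M}(P^n\to Q^{t_1 n}\,|\,s_1 n)\le F^{\cal M}(P^n\to Q^{t_1 n})$, and the right-hand side tends to $0$ when $t_1 H(Q)>H(P)$ by the classical first-order converse for storage-free RNC \cite{KH13-2}. Either vanishing contradicts $\liminf F^{\cal M}\ge\nu>0$. The main technical delicacy is the AEP estimate for non-uniform $Q$, where the top $2^{s_1 n}$ atoms of $Q^{t_1 n}$ are spread across many type classes; the two-bucket splitting above handles this cleanly, while all other ingredients are either standard or already available from (\ref{fidelity-ineq2}) and (\ref{jimei}) in the excerpt.
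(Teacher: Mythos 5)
Your proof is correct and takes essentially the same approach as the paper: achievability routes through a uniform distribution of rate $\kappa$ strictly between $t_1 H(Q)$ and $\min\{H(P),s_1\}$ (using the storage-free intrinsic-randomness and resolvability results from \cite{KH13-2}), and the converse reduces the storage-limited fidelity to the storage-free fidelity from a uniform source of size $2^{s_1 n}$ via majorization (your use of (\ref{jimei}) together with the information-spectrum/AEP estimate is equivalent to the paper's bound $F^{\cal M}(P^n\to Q^{T_n}|s_1n)\le F^{\cal M}(U_2^{s_1n}\to Q^{T_n})$ followed by citing the first-order converse of \cite{KH13-2}). The only cosmetic difference is that you prove the two constraints $t_1\le s_1/H(Q)$ and $t_1\le H(P)/H(Q)$ independently and take the minimum, whereas the paper splits into cases $s_1\ge H(P)$ and $s_1<H(P)$ and proves only the binding constraint in each; your version is slightly more explicit about the composition step (triangle inequality for Hellinger distance) and the AEP estimate.
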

We give the proof of Theorem \ref{region1} in Section \ref{region1.app}.
From Theorem \ref{region1}, ${\cal R}^{1,{\cal D}}_{P,Q}(\nu)$ and ${\cal R}^{1,{\cal M}}_{P,Q}(\nu)$ coincide with each other and do not depend on $\nu\in (0,1)$. 
In the following, we denote the rate regions by ${\cal R}^{1}_{P,Q}$ simply.

We say that $(s_1,t_1)$ dominates or simulates $(s'_1,t'_1)$ 
when the sequence $\{(s_1n, t_1n)\}$ dominates or simulates the sequence $\{(s'_1n,t'_1n)\}$.
Then, $(s_1,t_1)$  dominates $(s'_1,t'_1)$ if and only if $s_1\le s'_1$ and $t_1\ge t'_1$.
Similarly, $(s_1,t_1)$ simulates $(s'_1,t'_1)$ if and only if ${s'_1}/{s_1}={t'_1}/{t_1}\le 1$.
\begin{df}
When no other achievable rate pair dominates
$(s_1,t_1)\in{\cal R}^{1}_{P,Q}$,
the rate pair $(s_1,t_1)$ is called semi-admissible.
Moreover, when no other rate pair dominates or simulates $(s_1,t_1)\in{\cal R}^{1}_{P,Q}$,
the rate pair $(s_1,t_1)$ is called admissible.
\end{df}
We obtain the following corollary by Theorem \ref{region1}.
\begin{cor}\Label{adm}
The set of semi-admissible rate pairs is given by  
\begin{eqnarray}
\left\{\left(s_1,\frac{s_1 }{H(Q)}\right)\bigg|0 < s_1 \le H(P) \right\}
\Label{semiad}
\end{eqnarray}
and $(H(P),H(P)/H(Q))$ is the unique admissible rate pair.
\end{cor}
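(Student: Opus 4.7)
The plan is to use the explicit form of ${\cal R}^{1}_{P,Q}$ from Theorem \ref{region1} together with the coordinate-wise characterizations of ``better'' and ``simulate'' stated just above Corollary \ref{adm}, and then to inspect the boundary of the region case by case.

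First I would recall from Theorem \ref{region1} that
$$
{\cal R}^{1}_{P,Q}
=\bigl\{(s_1,t_1)\,\bigm|\, 0<s_1,\ 0<t_1\le \tfrac{\min\{H(P),s_1\}}{H(Q)}\bigr\},
$$
and note that, by the remark preceding the corollary, $(s_1',t_1')$ is better than $(s_1,t_1)$ iff $s_1'\le s_1$ and $t_1'\ge t_1$.

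Next, to determine the semi-admissible pairs, I would split the region into cases according to where $(s_1,t_1)$ sits. If $s_1>H(P)$, then $(H(P),t_1)\in{\cal R}^{1}_{P,Q}$ is strictly better, so $(s_1,t_1)$ is not semi-admissible. If $s_1\le H(P)$ but $t_1<s_1/H(Q)$, then $(s_1,s_1/H(Q))$ is strictly better, so again not semi-admissible. The remaining candidates are the pairs on the curve $t_1=\min\{H(P),s_1\}/H(Q)$ with $0<s_1\le H(P)$, i.e.\ $t_1=s_1/H(Q)$. For such a pair, any better $(s_1',t_1')\in{\cal R}^1_{P,Q}$ must satisfy $s_1'\le s_1\le H(P)$ and $t_1'\ge t_1=s_1/H(Q)$, but the region constraint forces $t_1'\le s_1'/H(Q)\le s_1/H(Q)=t_1$, so equality holds throughout and $(s_1',t_1')=(s_1,t_1)$. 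This yields precisely the set \eqref{semiad}.

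Finally, for admissibility I would use the characterization that $(s_1',t_1')$ simulates $(s_1,t_1)$ iff $s_1/s_1'=t_1/t_1'\le 1$, i.e.\ $(s_1,t_1)=a(s_1',t_1')$ for some $a\in(0,1]$. Since an admissible pair is in particular semi-admissible, it lies on the line $t_1=s_1/H(Q)$ with $0<s_1\le H(P)$. For such a pair, a simulating partner has the form $(s_1',t_1')=(c s_1,c t_1)$ with $c\ge 1$, and belongs to ${\cal R}^1_{P,Q}$ iff $c s_1\le H(P)$. If $s_1<H(P)$, one may pick $c>1$ with $c s_1\le H(P)$, producing a genuine simulator; so $(s_1,t_1)$ is not admissible. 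Only $s_1=H(P)$ forces $c=1$, leaving $(H(P),H(P)/H(Q))$ as the unique admissible pair.

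The main obstacle is the admissibility step: one must verify carefully that every semi-admissible pair with $s_1<H(P)$ really does admit a genuinely distinct simulator inside the region, which is where the boundary constraint $c s_1\le H(P)$ plays the decisive role. Everything else reduces to straightforward comparisons of coordinates using the explicit description of ${\cal R}^1_{P,Q}$.
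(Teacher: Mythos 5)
Your proposal is correct and follows exactly the route the paper intends: the paper states Corollary \ref{adm} as an immediate consequence of Theorem \ref{region1} without writing out a proof, and your case analysis using the explicit region together with the coordinate-wise characterizations of ``better'' and ``simulate'' supplies precisely the omitted details. The key points --- that the boundary constraint $t_1'\le s_1'/H(Q)$ forces equality in the semi-admissibility check, and that $c s_1\le H(P)$ admits $c>1$ exactly when $s_1<H(P)$ --- are handled correctly.
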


The rate region is illustrated as Fig. \ref{1st}.
Then, the set of semi-admissible rate pairs are illustrated as the line with the slope $H(Q)^{-1}$ and the admissible rate pair is dotted at the tip of the line.
%
\begin{figure}[t]
 \begin{center}
 \hspace*{0em}\includegraphics[width=80mm, height=55mm]{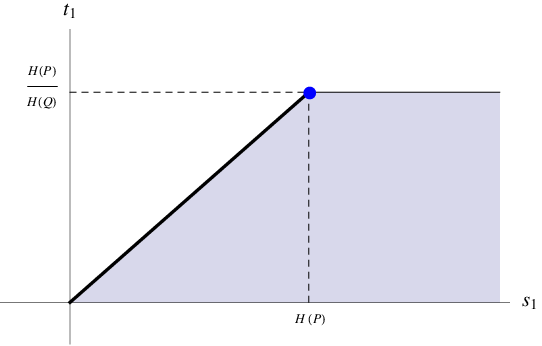}
 \end{center}
 \caption{
The first-order rate region ${\cal R}^{1,{\cal D}}_{P,Q}(\nu)$ and ${\cal R}^{1,{\cal M}}_{P,Q}(\nu)$.
The thick line corresponds to the semi-admissible rate pairs.
}
 \Label{1st}
\end{figure}
%
We note that the admissible first-order rate pair can determine whether a rate pair is in the rate region.
That is, a rate pair is in the rate region if and only if 
the admissible rate pair simulates or dominates the rate pair.
Thus, the admissible rate pair uniquely determines the whole of rate region although it is a single point in the boundary of the rate region.

In later discussion, 
we separately treat the problem according to whether a semi-admissible rate pair is the admissible rate pair or not.

\subsection{Second-Order Rate Region }~
\Label{subsec:SORR}

In this subsection,
we fix a first-order rate pair $(s_1,t_1)$ of each sequence $\{(S_n,T_n)\}$
and assume it to be $\nu$-achievable.
Let the sequence $(S_n,T_n)$ be represented by $S_n=s_1n+s_2\sqrt{n}+o(\sqrt{n})$ and $T_n=t_1n+t_2\sqrt{n}+o(\sqrt{n})$ with second-order rates $s_2\in\R$ and $t_2\in\R$.
Then we focus on the second-order asymptotics of RNC via restricted storage in terms of $s_2$ and $t_2$.
We omit the $o(\sqrt{n})$ term unless otherwise noted.
%
\begin{df}\Label{rate-region2}
A second-order rate pair $(s_2,t_2)$ is called  $\nu$-{\it achievable} 
when the sequence $\{(s_1n+s_2\sqrt{n},t_1n+t_2\sqrt{n})\}$ is $\nu$-achievable.
The set of $\nu$-achievable rate pairs for $i={\cal D}$ and ${\cal M}$ is denoted  by
\begin{eqnarray}
&&{\cal R}^{2,i}_{P,Q}(s_1,t_1,\nu)\nonumber\\
&:=&
\left\{\left(s_2,t_2\right)\bigg|
\liminf_{n\to\infty} F^i\left(P^{ n}\to Q^{t_1n+t_2\sqrt{n}}|{s_1n+s_2\sqrt{n}}\right)\ge \nu \right\}.
\label{reg2D}\nonumber
\end{eqnarray}
\end{df}

%
If the first-order rate pair is $\nu$-achievable and not semi-admissible,
the second-order rate region is trivially the whole of $\R^2$.
In the following, we treat the case that the first-order rate pair $(s_1,t_1)$ is semi-admissible, i.e., $0< s_1 \le H(P)$ and $t_1=s_1/H(Q)$. 
Then, we set as
 \begin{eqnarray*}
&&F^i_{P,Q,s_1,s_2}(t_2)\\
&:=&
\liminf_{n\to\infty} F^i\left(P^{ n}\to Q^{\frac{s_1}{H(Q)}n+t_2\sqrt{n}}|{s_1n+s_2\sqrt{n}}\right).
\end{eqnarray*}

%
%
\begin{lem}\Label{fid}
Let $P$ and $Q$ be arbitrary probability distributions on finite sets
and $0<s_1\le H(P)$.
Then, there is a continuous function $F_{P,Q,s_1,s_2}:\R\to[0,1]$
satisfying the following conditions.
(1) The function $F_{P,Q,s_1,s_2}$ is strictly monotonically decreasing on $F_{P,Q,s_1,s_2}^{-1}((0,1))$.
(2) The relation 
\begin{eqnarray}
F_{P,Q,s_1,s_2}(t_2)
=F^{\cal D}_{P, Q,s_1,s_2}(t_2)
=F^{\cal M}_{P, Q,s_1,s_2}(t_2)
\Label{feq}
\end{eqnarray}
following holds for an arbitrary $t_2\in\R$.
\end{lem}
Lemma \ref{fid} is derived from Theorems \ref{region2-non-admissible}, \ref{main}, \ref{region2-uni1} and \ref{region2-uni2} in the later subsections.
From the above lemma, 
we obtain the asymptotic expansions of the maximal conversion numbers.
%
%
\begin{thm} \Label{expansion}
Let $P$ and $Q$ be arbitrary probability distributions on finite sets.
For arbitrary $s_1>0$, $s_2\in\R$ and $\nu\in(0,1)$,
\begin{eqnarray}
&&\hspace{-0em}L^{\cal D}_n(P,Q|\nu, {s_1n+s_2\sqrt{n}})
\cong L^{\cal M}_n(P,Q|\nu, {s_1n+s_2\sqrt{n}})\nonumber\\
&\hspace{-0.5em}\cong&\hspace{-0.5em}\frac{\min\{H(P),s_1\}}{H(Q)}n+
 F_{P,Q,s_1,s_2}^{-1}(\nu)\sqrt{n},\Label{exp.=2}
\end{eqnarray}
where $\cong$ means that the difference between the right-hand side and the left-hand side of $\cong$ is $o(\sqrt{n})$.
\end{thm}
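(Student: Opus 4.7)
The plan is to derive Theorem \ref{expansion} as an operational inversion of Lemma \ref{fid}: once we know that the limiting fidelity function is continuous, strictly monotone, and common to both conversion classes, the asymptotic expansion of $L^i_n$ is obtained by a standard two-sided squeeze on the parameter $t_2$.

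First I would fix $s_1>0$, $s_2\in\R$, $\nu\in(0,1)$, set $t_1:=\min\{H(P),s_1\}/H(Q)$, and note that by Corollary \ref{adm} this is the unique semi-admissible first-order rate coefficient associated with storage rate $s_1$, so the problem is genuinely a second-order question. Using the defining identity of $L^i_n$, the assertion $L^i_n(P,Q|\nu,s_1n+s_2\sqrt{n}) \ge t_1 n + t_2 \sqrt{n}$ is equivalent to $F^i(P^{n}\to Q^{t_1 n+t_2\sqrt{n}}\,|\,s_1 n+s_2\sqrt{n}) \ge \nu$, so locating $L^i_n$ up to $o(\sqrt{n})$ amounts to identifying the critical threshold value of $t_2$.

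Second, Lemma \ref{fid} supplies a single function $F_{P,Q,s_1,t_1,s_2}$ which both $F^{\cal D}_{P,Q,s_1,t_1,s_2}$ and $F^{\cal M}_{P,Q,s_1,t_1,s_2}$ coincide with, and which is continuous and strictly monotonically decreasing on its preimage of $(0,1)$. Consequently, for every $\nu\in(0,1)$ there is a unique solution $t_2^{\star}:=F^{-1}_{P,Q,s_1,t_1,s_2}(\nu)$. Pick an arbitrary $\epsilon>0$. For the direct part, at $t_2=t_2^{\star}-\epsilon$ strict monotonicity gives $F_{P,Q,s_1,t_1,s_2}(t_2) > \nu$, and the liminf equality (\ref{feq}) then forces $F^i(P^{n}\to Q^{t_1 n + t_2\sqrt{n}}\,|\,s_1 n + s_2\sqrt{n}) \ge \nu$ for all sufficiently large $n$, yielding $L^i_n \ge t_1 n + (t_2^{\star}-\epsilon)\sqrt{n}$ eventually. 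For the converse part, at $t_2=t_2^{\star}+\epsilon$ the limiting fidelity falls strictly below $\nu$, so the same mechanism gives $F^i(\cdots)<\nu$ eventually, hence $L^i_n < t_1 n + (t_2^{\star}+\epsilon)\sqrt{n}$. Letting $\epsilon\downarrow 0$ delivers the claimed expansion, and since $t_2^{\star}$ depends only on $F_{P,Q,s_1,t_1,s_2}$ and not on $i$, the $\cong$ identities between $L^{\cal D}_n$ and $L^{\cal M}_n$ in (\ref{exp.=2}) follow simultaneously.

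The main obstacle is the converse half of the squeeze: the definition of $F^i_{P,Q,s_1,t_1,s_2}$ as a $\liminf$ only guarantees that $F^i(\cdots)<\nu$ along a subsequence of $n$, which a priori gives the upper bound on $L^i_n$ only along that subsequence. This is handled by upgrading the liminf identity to a genuine limit in the course of proving Lemma \ref{fid} (via Theorems \ref{region2-non-admissible}, \ref{main}, \ref{region2-uni1}, \ref{region2-uni2}), together with the elementary monotonicity of $F^i(P^n\to Q^L|N)$ in $L$, which allows small $\epsilon$-perturbations of $t_2$ to be absorbed into the $o(\sqrt{n})$ error. Given these ingredients the inversion is routine; the substantive work is entirely carried by Lemma \ref{fid}.
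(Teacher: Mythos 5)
Your proposal is correct and follows essentially the same route as the paper: the paper derives Theorem \ref{expansion} in one short paragraph by invoking Theorem \ref{region1} for the first-order rate and the identity $F_{P,Q,s_1,t_1,s_2}(t_2)=\nu$ (from Lemma \ref{fid}) for the second-order rate, and you are simply supplying the two-sided $\epsilon$-squeeze and the liminf-to-limit remark that the paper leaves implicit.
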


Theorem \ref{expansion} is derived as follows.
When we expand as $L^{i}_n(P,Q|\nu, {s_1n+s_2\sqrt{n}})=t_1n+t_2\sqrt{n}$ for $i={\cal D}$ or ${\cal M}$,
the first order rate $t_1$ is determined by Lemma \ref{region1} as $t_1=\frac{\min\{H(P),s_1\}}{H(Q)}$.
Moreover, since the second order rate $t_2$ satisfies $F_{P, Q,s_1,s_2}(t_2)=\nu$ from the definition of $t_2$,
we have Theorem \ref{expansion}.
%

Moreover, Theorem \ref{expansion} implies the following theorem about the second-order rate regions.
\begin{thm} \Label{general-2reg}
Let $P$ and $Q$ be arbitrary probability distributions on finite sets.
For $0<s_1\le H(P)$, $s_2\in\R$ and $\nu\in(0,1)$,
\begin{eqnarray}
&&{\cal R}^{2,{\cal D}}_{P,Q}\left(s_1,\frac{s_1}{H(Q)},\nu\right)
={\cal R}^{2,{\cal M}}_{P,Q}\left(s_1,\frac{s_1}{H(Q)},\nu\right)\nonumber\\
&=&\left\{\left(s_2,t_2\right)\bigg|
t_2\le F^{-1}_{P, Q,s_1,s_2}(\nu) \right\}.
\nonumber
\end{eqnarray}
\end{thm}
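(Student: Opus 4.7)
The plan is to reduce Theorem \ref{general-2reg} directly to Lemma \ref{fid} together with Definition \ref{rate-region2}. The key observation is that Lemma \ref{fid} collapses the two approximate conversion problems (deterministic and majorization) into a single asymptotic fidelity function $F_{P,Q,s_1,s_1/H(Q),s_2}$ that is continuous and strictly monotonically decreasing on $F^{-1}((0,1))$. Once this identification is in hand, the membership condition $F^{i}_{P,Q,s_1,t_1,s_2}(t_2)\ge \nu$ in the definition of the rate region becomes a single scalar inequality in $t_2$ whose solution set is precisely the claimed half-line.

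First, I would fix the semi-admissible first-order rate pair $(s_1, s_1/H(Q))$ with $s_1 \in (0, H(P)]$ and any $s_2 \in \R$. By Definition \ref{rate-region2}, a pair $(s_2, t_2)$ lies in ${\cal R}^{2,{\cal D}}_{P,Q}(s_1, s_1/H(Q), \nu)$ if and only if $F^{\cal D}_{P,Q,s_1,s_1/H(Q),s_2}(t_2)\ge \nu$, and analogously for ${\cal R}^{2,{\cal M}}_{P,Q}(s_1, s_1/H(Q), \nu)$. Applying equation (\ref{feq}) of Lemma \ref{fid}, both quantities equal the common function $F_{P,Q,s_1,s_1/H(Q),s_2}(t_2)$, so the two rate regions automatically coincide and it suffices to analyze the single inequality $F_{P,Q,s_1,s_1/H(Q),s_2}(t_2)\ge\nu$.

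Second, I would translate this inequality into $t_2\le F^{-1}_{P,Q,s_1,s_1/H(Q),s_2}(\nu)$. Since $\nu\in(0,1)$ and $F_{P,Q,s_1,s_1/H(Q),s_2}$ is continuous and strictly monotonically decreasing on $F^{-1}((0,1))$, the inverse $F^{-1}_{P,Q,s_1,s_1/H(Q),s_2}(\nu)$ is well-defined and unique. On the interval where $F$ takes values in $(0,1)$, strict monotonicity directly yields the equivalence. For $t_2$ outside this interval, $F$ saturates at $0$ or $1$ by the monotonicity of the fidelity, so the inequality $F(t_2)\ge \nu$ either fails (giving $t_2 > F^{-1}(\nu)$) or holds trivially (giving $t_2 \le F^{-1}(\nu)$), so all boundary cases are covered.

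The main obstacle is not in this reduction but rather in Lemma \ref{fid} itself, whose proof (via Theorems \ref{region2-non-admissible}, \ref{main}, \ref{region2-uni1}, and \ref{region2-uni2}) must verify that the deterministic and majorization second-order fidelities agree asymptotically and produce a well-behaved limit function with the required continuity and strict monotonicity. Once Lemma \ref{fid} is granted, the passage to the rate region is essentially bookkeeping about inverting a strictly decreasing continuous map, and no further asymptotic analysis is required.
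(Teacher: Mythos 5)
Your proposal is correct and follows essentially the same route as the paper: the paper derives Theorem~\ref{general-2reg} from Theorem~\ref{expansion}, which in turn is obtained by inverting the common function of Lemma~\ref{fid} using the first-order rate from Theorem~\ref{region1}; you simply unfold this chain and go directly from Lemma~\ref{fid} and Definition~\ref{rate-region2} to the half-line description, which is a presentational shortcut rather than a different argument.
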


We say that $(s_2,t_2)$ dominates or simulates $(s'_2,t'_2)$ 
when the sequence $\{(s_1n+s_2\sqrt{n},t_1n+t_2\sqrt{n})\}$ dominates or simulates
the sequence $\{(s_1n+s'_2\sqrt{n},t_1n+t'_2\sqrt{n})\}$.
Then, $(s_2,t_2)$  dominates $(s'_2,t'_2)$ if and only if $s_2\le s'_2$ and $t_2\ge t'_2$.
In addition, the following lemma holds. 
\begin{lem}\Label{simulate-2nd}
A $\nu$-achievable rate pair $(s_2,t_2)$ simulates another one $(s'_2,t'_2)$
if and only if $s_2\ge s'_2$ and 
\begin{eqnarray}
t'_2
=t_2+\frac{t_1}{s_1}(s'_2-s_2).
\Label{eq-st}
\end{eqnarray}
\end{lem}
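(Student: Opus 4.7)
The plan is to unfold the definition of simulation in the second-order regime and match coefficients of $n$ and $\sqrt{n}$. Writing $S_n = s_1 n + s_2 \sqrt{n} + o(\sqrt{n})$, $T_n = t_1 n + t_2 \sqrt{n} + o(\sqrt{n})$ and the primed analogues, the simulation condition $(S'_n,T'_n) = (S_{a_n n},T_{a_n n})$ with $\{a_n\}\subset(0,1]$ becomes
\begin{equation*}
s_1 n + s'_2\sqrt{n} + o(\sqrt{n}) = s_1 a_n n + s_2\sqrt{a_n n} + o(\sqrt{n})
\end{equation*}
together with the analogous $T$-identity. The $O(n)$ part forces $a_n\to 1$, and matching the $O(\sqrt{n})$ part then shows that necessarily $a_n = 1 + b_n/\sqrt{n}$ for a bounded sequence $\{b_n\}$.

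For the forward implication I would substitute $a_n = 1 + b_n/\sqrt{n}$ into both identities and read off the $\sqrt{n}$-coefficients. The $S$-equation reduces to $s_1 b_n + s_2 = s'_2 + o(1)$, which forces $b_n\to (s'_2-s_2)/s_1$; the $T$-equation reduces to $t_1 b_n + t_2 = t'_2 + o(1)$, and taking $n\to\infty$ yields precisely $t'_2 = t_2 + (t_1/s_1)(s'_2-s_2)$. The requirement $a_n\le 1$ further forces $b_n\le 0$ for large $n$, hence $s_2\ge s'_2$.

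For the converse I would set $a_n := 1 + (s'_2 - s_2)/(s_1\sqrt{n})$, which lies in $(0,1]$ for $n$ large enough thanks to $s_2\ge s'_2$. Substituting into $S_{a_n n}$ and $T_{a_n n}$, a direct Taylor expansion using the hypothesized identity for $t'_2$ yields $S_{a_n n}=S'_n+o(\sqrt{n})$ and $T_{a_n n}=T'_n+o(\sqrt{n})$, so the simulation holds within the accuracy already baked into the second-order expansion. The main obstacle is purely bookkeeping: one must verify that the $o(\sqrt{n})$ remainders in $S$ and $T$ remain $o(\sqrt{n})$ when their argument $n$ is replaced by $a_n n$, and that the potential non-integrality of $a_n n$ (handled by rounding to a nearby integer) contributes only $O(1)$, which is absorbed by the $o(\sqrt{n})$ slack. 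Both facts follow from the expansion $\sqrt{a_n n}=\sqrt{n}\,(1+O(1/\sqrt{n}))$ valid for $a_n=1+O(1/\sqrt{n})$.
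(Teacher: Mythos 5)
Your proposal is correct and follows essentially the same route as the paper: both unfold the definition of simulation into the two asymptotic identities obtained by matching the $n$- and $\sqrt{n}$-coefficients, deduce $a_n\to 1$ with $1-a_n=O(1/\sqrt{n})$ and hence $s_2\ge s'_2$ and the linear relation (\ref{eq-st}) for the "only if" direction, and construct an explicit $a_n=1+O(1/\sqrt{n})$ for the converse (the paper solves a quadratic in $\sqrt{a_n}$ where you write the solution in closed form, which is the same thing up to the $o(\sqrt{n})$ slack). Your extra remarks on the rounding of $a_n n$ and on the stability of the $o(\sqrt{n})$ remainders are harmless refinements the paper leaves implicit.
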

We provide the  proof of Lemma \ref{simulate-2nd} in Section \ref{simulate-2nd.app}.

\begin{df}
Let $(s_2,t_2)$ be a $\nu$-achievable second-order rate pair.
The rate pair $(s_2,t_2)$ is called semi-admissible
when no other $\nu$-achievable rate pair dominates $(s_2,t_2)$.
Moreover, the rate pair $(s_2,t_2)$ is called admissible
when no other $\nu$-achievable rate pair dominates or simulates $(s_2,t_2)$.
\end{df}
In the following subsections,
we separately derive the concrete forms of  second-order rate regions
and determine the set of second-order semi-admissible and admissible rate pairs
for the non-admissible and the admissible first-order rate pair.

Unlike the first-order case,
the set of admissible second-order rate pairs does not necessarily consist of a single point 
and there are also the cases that 
multiple admissible rate pairs exist and
no admissible rate pair exists
as shown in later subsections. 
On the other hand, 
similar to the first-order asymptotics,
the admissible second-order rate pairs can determine whether a rate pair is in the rate region.
That is, 
a rate pair is in the rate region
if and only if there is an admissible rate pair such that the admissible rate pair simulates or 
dominates the rate pair.
Thus, the admissible rate pairs uniquely determine the whole of rate region although those are a subset of the boundary of the rate region.
Moreover, 
since any admissible rate pair does not simulate or dominate another admissible one,
a proper subset of the admissible rate pairs can not determine the rate region as above.
In the sense, the admissible rate pairs can be regarded as the ``minimal generator" of the rate region, and hence, are of special importance in the rate pairs.

\subsection{Second-Order Asymptotics: Non-Admissible Case}~
\Label{subsec:NEC}

We derive the second-order rate region in the following.
We say that a second-order rate pair $(s_2,t_2)$ is $(s_1,t_1,\nu)$-{\it achievable} 
by deterministic conversions or majorization conversions 
when $(s_2,t_2)\in{\cal R}^{2,{\cal D}}_{P,Q}(s_1,t_1,\nu)$ or ${\cal R}^{2,{\cal M}}_{P,Q}(s_1,t_1,\nu)$.

\begin{thm}\Label{region2-non-admissible}
When $(s_1,t_1)$ is semi-admissible but not admissible,
the function 
\begin{eqnarray}
F_{P,Q,s_1,s_2}(t_2)
=\sqrt{\Phi\left(\sqrt{\frac{H(Q)}{V(Q)s_1}}(s_2-H(Q)t_2)\right)}
\Label{non-ex-fid}
\end{eqnarray}
is continuous and strictly monotonically decreasing on $F_{P,Q,s_1,t_1,s_2}^{-1}((0,1))$ and satisfies (\ref{feq}),
where 
\begin{eqnarray}
V(Q):=\displaystyle\sum_{x\in\mathcal{X}}Q(x)(-\mathrm{log}Q(x)-H(Q))^2. 
\end{eqnarray} 
\end{thm}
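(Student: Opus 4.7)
Since $(s_1,t_1)$ is semi-admissible (Corollary \ref{adm}) but not admissible, we must have $0 < s_1 < H(P)$ and $t_1 = s_1/H(Q)$. In this regime the source entropy strictly exceeds the storage capacity, so the two-stage conversion should factor through a distribution on $\mathcal{B}_{S_n}$ that is asymptotically uniform, and all second-order fluctuations come from the uniform-to-$Q^{T_n}$ stage. The plan is therefore to prove matching upper and lower bounds in the stated $\sqrt{\Phi(\cdot)}$ form, which will simultaneously establish (\ref{feq}) and the continuity/monotonicity claims.

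For the upper bound on $F^{\cal M}(P^n\to Q^{T_n}|S_n)$, which by (\ref{fidelity-ineq2}) also bounds $F^{\cal D}$, I would apply Lemma \ref{opt trans} to rewrite it as $F^{\cal M}(\mathcal{C}_{2^{S_n}}(P^n)\to Q^{T_n})$. Since $\mathcal{C}_{2^{S_n}}(P^n)$ is supported on at most $2^{S_n}$ atoms we have $U_{2^{S_n}}\prec \mathcal{C}_{2^{S_n}}(P^n)$, so by anti-monotonicity of $P\mapsto F^{\cal M}(P\to Q^{T_n})$ in the majorization order this quantity is at most $F^{\cal M}(U_{2^{S_n}}\to Q^{T_n})=\sqrt{\sum_{i=1}^{2^{S_n}}(Q^{T_n})^{\downarrow}_i}$. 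A Berry--Esseen argument applied to $-\log Q^{T_n}(Y^{T_n})$ (which has mean $H(Q)T_n$ and variance $V(Q)T_n$) identifies this sum asymptotically with $\Pr[-\log Q^{T_n}(Y^{T_n})\le S_n]+o(1)$; together with $T_n=(s_1/H(Q))n+t_2\sqrt{n}$ and $S_n=s_1 n+s_2\sqrt{n}$, this yields the limit $\Phi\bigl(\sqrt{H(Q)/(V(Q)s_1)}\,(s_2-H(Q)t_2)\bigr)$, i.e.\ the square of the target fidelity.

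For the matching lower bound I construct a deterministic composition $W'_n\circ W_n$ with $W_n:\mathcal{X}^n\to\mathcal{B}_{S_n}$ and $W'_n:\mathcal{B}_{S_n}\to\mathcal{Y}^{T_n}$. Because $s_1<H(P)$ strictly, intrinsic randomness extraction produces $W_n$ with $F(W_n(P^n),U_{2^{S_n}})\to 1$. For $W'_n$ I take a careful integer discretization of the optimal majorization converter from $U_{2^{S_n}}$ to $Q^{T_n}$; the asymptotic equivalence of deterministic and majorization conversions for uniform-source RNC (in the spirit of Proposition \ref{opt trans cor} and \cite{KH13-2}) gives $F(W'_n(U_{2^{S_n}}),Q^{T_n})\ge \sqrt{\sum_{i=1}^{2^{S_n}}(Q^{T_n})^{\downarrow}_i}-o(1)$. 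I then combine the two stages via the triangle inequality for the Bhattacharyya angle $\arccos F(\cdot,\cdot)$ and the monotonicity of $F$ under deterministic maps; the stage-one contribution vanishes asymptotically and the composite fidelity matches the upper bound. Continuity and strict monotonicity of $t_2\mapsto F_{P,Q,s_1,s_1/H(Q),s_2}(t_2)$ on its non-trivial range follow from those of $\Phi$ and the affine dependence on $t_2$ in its argument.

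The main obstacle is the second-order sharpening of the AEP needed for the upper bound: relating $\sum_{i=1}^{2^{S_n}}(Q^{T_n})^{\downarrow}_i$ to $\Pr[-\log Q^{T_n}(Y^{T_n})\le S_n]$ up to $o(1)$ uniformly under $\sqrt{n}$-order perturbations requires a Berry--Esseen bound together with a careful slicing of the log-likelihood, rather than a first-order AEP-type statement. A secondary subtlety is the deterministic realization of the optimal majorization converter in stage two, whose rounding error has to be controlled carefully so as not to spoil the $\sqrt{\Phi(\cdot)}$ limit.
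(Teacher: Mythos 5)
Your proposal is correct and follows essentially the same route as the paper's proof: factor the conversion through a uniform distribution on the storage (possible with asymptotic fidelity $1$ since $s_1<H(P)$ strictly), and then the entire second-order content is the uniform-to-$Q^{T_n}$ RNC, whose asymptotic fidelity is exactly what the paper packages as Lemma~\ref{lem.uni} (the Berry--Esseen computation you sketch). The only cosmetic differences are that the paper gets the converse directly from the observation $U_{2^{S_n}}\prec P'$ for every $P'\in\mathcal{P}(\mathcal{B}_{S_n})$ rather than invoking Lemma~\ref{opt trans}, and it cites Lemma~\ref{lem.uni} rather than re-deriving the Gaussian limit.
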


We give the proof of Theorem \ref{region2-non-admissible} in Section \ref{region2-non-admissible.app}.
When $(s_1,t_1)$ is semi-admissible but not admissible,
from Theorems \ref{general-2reg} and \ref{region2-non-admissible},
the second-order rate region is given by
\begin{eqnarray}
&&\hspace{-1em}
{\cal R}^{2,{\cal D}}_{P,Q}(s_1,t_1,\nu)
={\cal R}^{2,{\cal M}}_{P,Q}(s_1,t_1,\nu)\nonumber
\\&\hspace{-2em}=&
\hspace{-1em}\left\{\left(s_2,t_2\right)\bigg|t_2 \le \frac{s_2}{H(Q)} -\sqrt{\frac{V(Q) s_1}{H(Q)^3}} \Phi^{-1}(\nu^2) \right\}.
\label{reg1}
\end{eqnarray}
In particular,
the set of admissible rate pairs is represented by 
\begin{eqnarray}
\left\{\left(s_2, \frac{s_2}{H(Q)} -\sqrt{\frac{V(Q) s_1}{H(Q)^3}} \Phi^{-1}(\nu^2)\right)\bigg|s_2\in\R \right\}.
\end{eqnarray}
In this case, 
there is no admissible rate pair.
The second-order rate region is illustrated as Fig. \ref{2nd}
and the boundary of the region is the set of semi-admissible rate pairs from Lemma \ref{simulate-2nd}.

\begin{figure}[t]
 \begin{center}
 \hspace*{0em}\includegraphics[width=80mm, height=55mm]{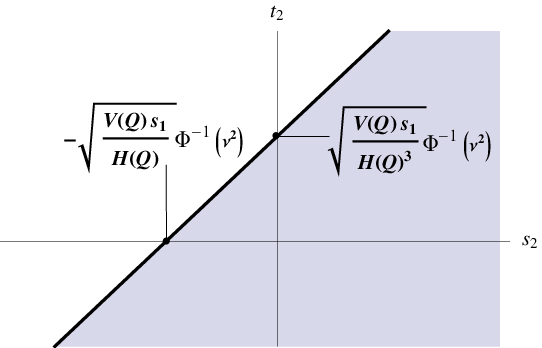}
 \end{center}
 \caption{
The second-order rate region ${\cal R}^{2,{\cal D}}_{P,Q}(s_1,t_1,\nu)$ and ${\cal R}^{2,{\cal M}}_{P,Q}(s_1,t_1,\nu)$ when a first-order rate pair $(s_1,t_1)$ is semi-admissible but not admissible.
}
 \Label{2nd}
\end{figure}
%
\begin{figure}[t]
 \begin{center}
 \hspace*{0em}\includegraphics[width=80mm, height=55mm]{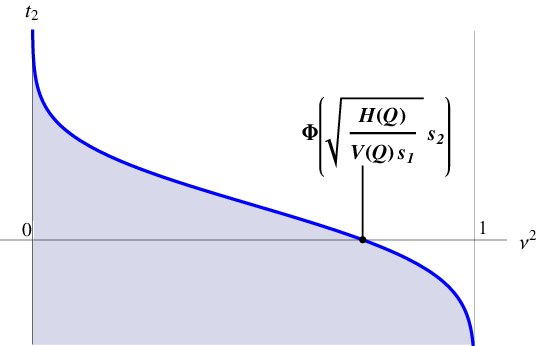}
 \end{center}
 \caption{
The relation between permissible accuracy and second-order rate of the number of copies of a target distribution.
}
 \Label{2nd-fid}
\end{figure}

\subsection{Second-Order Asymptotics: Admissible Case}~
\Label{subsec:Uniform}

The remaining problem is to identify the second-order
rate region at the admissible first-order rate pair. 
Hence, we fix as $s_1 = H(P)$ and $t_1 =  \frac{H(P)}{H(Q)}$ 
and simply denote as
\begin{eqnarray}
F^i_{P,Q,s_2}(t_2)
&:=&F_{P,Q,H(P),s_2}^i(t_2),\label{2ndF}\\
{\cal R}^{2,i}_{P,Q}(\nu)
&:=&{\cal R}^{2,{i}}_{P,Q}\left(H(P),\frac{H(P)}{H(Q)},\nu\right)
\label{2ndR}
\end{eqnarray}
for $i={\cal D}$ or ${\cal M}$ in the following subsections.


First, we treat the case when both $P$ and $Q$ are non-uniform distributions.
Here, we introduce two values as 
\begin{eqnarray}\Label{C}
C_{P,Q}&:=&\frac{H(P)}{V(P)}\left(\frac{H(Q)}{V(Q)}\right)^{-1},\\
D_{P,Q}&:=&\frac{H(Q)}{\sqrt{V(P)}}.
\end{eqnarray}
Then, the optimal accuracy $F_{P, Q,s_2}(t_2)$ is charcterized by the generalized Rayleigh-normal distribution function as follows.
\begin{thm}\Label{main}
When $P$ and $Q$ are non-uniform distributions,
the following equation holds:
\begin{eqnarray}
F_{P, Q,s_2}(t_2)
=\sqrt{1-Z_{C_{P,Q},\frac{s_2}{\sqrt{V(P)}}}(t_2D_{P,Q})}\Label{equ}
\end{eqnarray}
\end{thm}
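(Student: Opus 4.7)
The plan is to use Lemma \ref{opt trans} to reduce the restricted-storage problem to a direct majorization conversion, and then apply a CLT-based identification between that discrete problem and the continuous optimization in Definition \ref{rn}.

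First I would write, via Lemma \ref{opt trans},
\[
F^{\cal M}(P^n \to Q^{T_n} \mid S_n) \;=\; F^{\cal M}\!\left({\cal C}_{2^{S_n}}(P^n) \to Q^{T_n}\right),
\]
with $S_n = H(P) n + s_2 \sqrt{n}$ and $T_n = \tfrac{H(P)}{H(Q)} n + t_2 \sqrt{n}$. The operator ${\cal C}_{2^{S_n}}$ keeps the largest atoms of $P^n$ (those with $-\log P^n \lesssim S_n$) and smears the remainder uniformly. Under the rescaling $x = (-\log P^n - H(P) n)/\sqrt{V(P) n}$, the storage cutoff corresponds to the threshold $s = s_2/\sqrt{V(P)}$, and the sorted cumulative profile of ${\cal C}_{2^{S_n}}(P^n)$ becomes a nondecreasing step function $A_n$ with $A_n(x) = 1$ for $x \ge s$, converging by the classical CLT for $-\log P^n$ to an element of ${\cal A}_s$. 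On the target side, using the same $\sqrt{V(P)n}$-scale, the centered log-mass of $Q^{T_n}$ is asymptotically normal with mean $(H(Q) T_n - H(P) n)/\sqrt{V(P) n} \to t_2 D_{P,Q}$ and variance $V(Q)T_n/(V(P)n) \to C_{P,Q}$, i.e.\ it converges to $N_{t_2 D_{P,Q}, C_{P,Q}}$.

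The maximum fidelity for a direct majorization conversion admits a variational formula as a supremum of square-root matchings between monotone step functions built from the sorted probabilities. Passing to the limit in this sup, the discrete matching functional converges to the continuous fidelity ${\cal F}\!\left(dA/dx,\, N_{t_2 D_{P,Q}, C_{P,Q}}\right)$, optimized over $A\in {\cal A}_s$. This gives the upper bound $F^{\cal M}_{P,Q,s_2}(t_2) \le \sqrt{1 - Z_{C_{P,Q}, s}(t_2 D_{P,Q})}$. For the matching lower bound I would take an $\varepsilon$-optimal $A^\ast\in{\cal A}_s$ in the definition of $Z_{C_{P,Q},s}$, discretize it to build an explicit deterministic conversion $P^n\to Q^{T_n}$ routed through ${\cal B}_{S_n}$ (mimicking the constructive part of the Rayleigh-normal analysis in \cite{KH13-2}), and invoke (\ref{fidelity-ineq2}) to transfer the deterministic lower bound to $F^{\cal M}$. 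Combining the two bounds yields equality of $F^{\cal D}_{P,Q,s_2}$ and $F^{\cal M}_{P,Q,s_2}$ and identifies the common value as $\sqrt{1 - Z_{C_{P,Q}, s_2/\sqrt{V(P)}}(t_2 D_{P,Q})}$.

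The main obstacle is a uniform CLT/Berry--Esseen type control: because the fidelity is a sum of square roots, identifying discrete sums with the continuous fidelity integral demands uniform control of the empirical log-probability distribution of $P^n$ (and of $Q^{T_n}$) across the whole real line, not merely on compact sets. One also has to verify that the boundary condition $A_n(s) = 1$ genuinely survives in the limit, i.e.\ that an asymptotically optimal conversion cannot improve its fidelity by shifting mass past the truncation point --- this is precisely the reason the constraint $A(s) = 1$ (rather than $\Phi \le A \le 1$ alone) enters the definition of ${\cal A}_s$. The hypothesis that $P$ and $Q$ are non-uniform is used to ensure $V(P), V(Q) > 0$, so that the rescaling is non-degenerate; the uniform case is handled separately by Theorems \ref{region2-uni1} and \ref{region2-uni2}.
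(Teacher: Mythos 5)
Your proposal is correct and follows essentially the same route as the paper: the converse is obtained by slicing the support into sets $S_n^P(x_{i-1}^I,x_i^I)$, applying the Schwarz and rearrangement inequalities (Lemma \ref{naiseki2}) together with the central-limit evaluation of $Q^{T_n\downarrow}(S_n^P(x))$ (Lemma \ref{lem.central}) and letting the mesh go to zero, while the direct part discretizes an $\epsilon$-optimal $A\in{\cal A}_{s_2/\sqrt{V(P)}}$ and reuses the constructive map of \cite{KH13-2} routed through the storage, exactly as in Subsections \ref{main.app} and \ref{main2.app}. Your opening reduction via Lemma \ref{opt trans} is only a cosmetic variant of the paper's direct use of $P^n\prec P'\prec P''$ with $P'$ supported on $S_n^P(s_2)$, and the technical caveats you flag (tail control via truncation, survival of the boundary condition $A(s)=1$, non-degeneracy from $V(P),V(Q)>0$) are precisely the points the paper's argument addresses.
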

To obtain Theorem \ref{main},
it is enough to show the direct part 
\begin{eqnarray}
F^{\cal D}_{P, Q,s_2}(t_2)
&\ge&\sqrt{1-Z_{C_{P,Q},\frac{s_2}{\sqrt{V(P)}}}(t_2D_{P,Q})},\Label{maxim1}
\end{eqnarray}
and the converse part 
\begin{eqnarray}
F^{\cal M}_{P, Q,s_2}(t_2)
&\le&\sqrt{1-Z_{C_{P,Q},\frac{s_2}{\sqrt{V(P)}}}(t_2D_{P,Q})}\Label{maxim2}
\end{eqnarray}
by (\ref{fidelity-ineq2}).
In particular, to prove the direct part (\ref{maxim1}),
it is enough to show the following lemma.
\begin{lem}\Label{separation} 
Let $\epsilon>0$.
For a non-uniform probability distribution $P$ on a finite set, 
there exists a sequence of maps $f_n:\calX^n \to \{0,1\}^{{H(P)n+s_2\sqrt{n}}} $ 
such that 
\begin{eqnarray}
\liminf_{n\to\infty} F(W_{f_n}(P^{n}), U_2^{H(P)n + s_2\sqrt{n}}) 
&\ge& \limsup_{n\to\infty}F^{\calD}(P^{n}\to U_2^{H(P)n + s_2\sqrt{n}}) - \epsilon.
\Label{sepopt'}
\end{eqnarray}
Moreover, 
for  two non-uniform probability distributions $P$ and $Q$ on finite sets, 
there exists a sequence of maps $f'_n: \{0,1\}^{{H(P)n+s_2\sqrt{n}}} \to \calY^{\frac{H(P)}{H(Q)}n + t_2\sqrt{n}}$
such that 
\begin{eqnarray}
\liminf_{n\to\infty} F(W_{f'_n}\circ W_{f_n}(P^{n}), Q^{\frac{H(P)}{H(Q)}n + t_2\sqrt{n}})
&\ge&\sqrt{1-Z_{C_{P,Q},\frac{s_2}{\sqrt{V(P)}}}(t_2D_{P,Q})} -\epsilon.
\Label{sepopt}
\end{eqnarray}
\end{lem}

The inequality (\ref{sepopt'}) shows that the conversion  $W_{f_n}$ is  almost optimal as a uniform random number generation.
Combining Lemma \ref{separation} with Theorem \ref{main},
such a conversion $W_{f_n}$ is almost optimal also as a random number compression to the storage.
Moreover, since $f_n$ does not depend on the target distribution $Q$,
the compression $W_{f_n}$ to the storage is universal with respect to the choice of the target distribution $Q$.
%
We prove Thoerem \ref{main} by showing Lemma \ref{separation} and (\ref{maxim2}) in Subsections\ref{main.app} and \ref{main2.app}.

Then we obtain the second-order rate region by Theorems \ref{general-2reg} and \ref{main}.
Moreover, since the explicit value of the generalized Rayleigh-normal distribution function in (\ref{equ}) is given in Theorem \ref{Zform}, 
we can determine the concrete form of the second-order rate region.
The second-order rate region is illustrated as Figs. \ref{fig-case3} and  \ref{fig-case2} for $C_{P,Q}<1$ and $C_{P,Q}\ge1$, respectively.

When $C_{P,Q}<1$,
there is no semi-admissible rate pair and the boundary of the rate region represents the set of admissible rate pairs.
When $C_{P,Q}\ge1$,
the straight line in the boundary represents semi-admissible rate pairs from Lemma \ref{simulate-2nd}
and the curved line does admissible rate pairs.

\begin{figure}[t]
\begin{center}
 \hspace*{0em}\includegraphics[width=80mm, height=55mm]{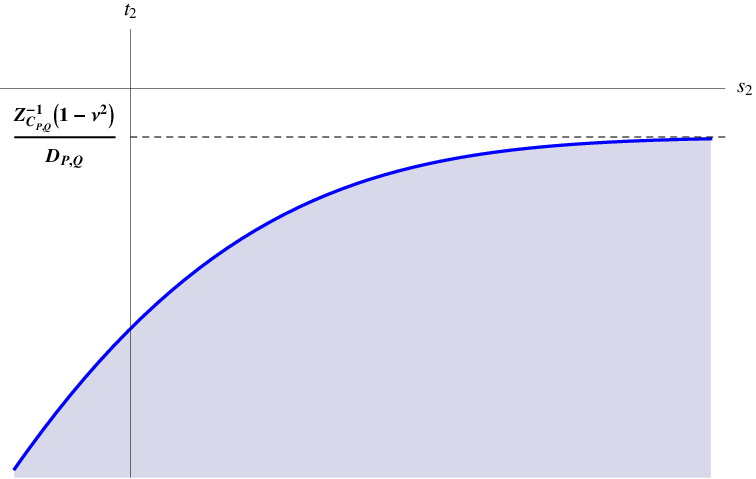}
 \end{center}
 \caption{
The second-order rate region ${\cal R}^{2,{\cal D}}_{P,Q}(s_1,t_1,\nu)$ and ${\cal R}^{2,{\cal M}}_{P,Q}(s_1,t_1,\nu)$ when $(s_1,t_1)$ is an admissible first-order rate pair and both $P$ and $Q$ are uniform with $C_{P,Q}<1$. 
}
 \Label{fig-case3}
\end{figure}

\begin{figure}[t]
\begin{center}
 \hspace*{0em}\includegraphics[width=80mm, height=55mm]{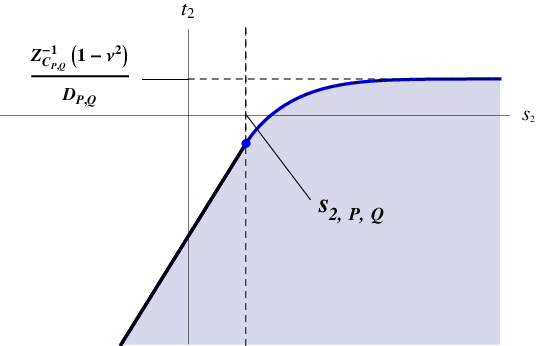}
 \end{center}
 \caption{
The second-order rate region ${\cal R}^{2,{\cal D}}_{P,Q}(s_1,t_1,\nu)$ and ${\cal R}^{2,{\cal M}}_{P,Q}(s_1,t_1,\nu)$ when $(s_1,t_1)$ is an admissible first-order rate pair and both $P$ and $Q$ are uniform with $C_{P,Q}\ge1$. 
The boundary of the region is straight line on the left side of a threshold value $s_{2,P,Q}$.
In particular, $\frac{Z_{1}^{-1}(1-\nu^2)}{D_{P,Q}}=\frac{\sqrt{-8V(P)\ln\nu}}{H(Q)}$ and $s_{2,P,Q}=\sqrt{V(P)}\Phi^{-1}(\nu^2)$ when $C_{P,Q}=1$.
}
 \Label{fig-case2}
\end{figure}

When either $P$ or $Q$ is the uniform distribution $U_l$ with size $l$, 
the asymptotics is reduced to the problem of resolvability or intrinsic randomness, and the second-order rate regions are obtained as follows.

\begin{thm}\Label{region2-uni1}
When $P=U_l$ and $Q$ is a non-uniform distribution,
the following equation holds:
\begin{eqnarray}
&&\hspace{-2em}F_{U_l,Q,s_2}(t_2)=\sqrt{\Phi\left(\sqrt{\frac{H(Q)}{V(Q)\log l}}(\min\{s_2,0\}-H(Q)t_2)\right)}.
\Label{2nd-dil}
\end{eqnarray}
In particular, the above value is described by the limit of the generalized Rayleigh-normal distribution function as follows: 
\begin{eqnarray}
F_{U_l,Q,s_2}(t_2)
=\lim_{P\to U_l}\sqrt{1-Z_{C_{P,Q},\frac{s_2}{\sqrt{V(P)}}}(t_2D_{P,Q})}.\nonumber
\end{eqnarray}
\end{thm}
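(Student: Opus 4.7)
The plan is to prove Theorem \ref{region2-uni1} in two stages: first derive the explicit formula (\ref{2nd-dil}) by reducing RNC via restricted storage to conventional resolvability, then verify the limit representation via Proposition \ref{lim2}. Since $P=U_l$, Proposition \ref{opt trans cor} provides the equality $L^{\cal M}_n(U_l,Q|\nu,m\log l) = L^{\cal M}_{\min\{n,m\}}(U_l,Q|\nu)$ and its deterministic analogue with $\ge$. Writing the storage budget as $S_n = n\log l + s_2\sqrt{n} = m\log l$ gives $m = n + s_2\sqrt{n}/\log l$, so $\min\{n,m\}\log l = n\log l + \min\{s_2,0\}\sqrt{n} + o(\sqrt{n})$. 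The problem thereby reduces to the second-order asymptotics of ordinary resolvability $L^{\cal M}_k(U_l,Q|\nu)$ for $k$ near $n$.

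For ordinary resolvability, the relation $U_{l^k}\prec P'$ is equivalent to $|\supp P'|\le l^k$, and a Cauchy--Schwarz optimization concentrates the best $P'$ on the top $l^k$ atoms of $Q^T$, yielding $F^{\cal M}(U_l^k\to Q^T)=\sqrt{\sum_{i=1}^{l^k}(Q^T)^{\downarrow}(i)}$. Standard information-spectrum reasoning identifies this top-$l^k$ sum with $\Pr[-\log Q^T(X)\le k\log l]$ up to $o(1)$, and the central limit theorem applied to the self-information $-\log Q^T(X)$ (mean $TH(Q)$, variance $TV(Q)$) produces the Gaussian approximation $\Phi\bigl((k\log l-TH(Q))/\sqrt{TV(Q)}\bigr)$. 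Substituting $T=(n\log l)/H(Q)+t_2\sqrt{n}$ and $k\log l = n\log l + \min\{s_2,0\}\sqrt{n} + o(\sqrt{n})$, the argument of $\Phi$ simplifies to $\sqrt{H(Q)/(V(Q)\log l)}\,\bigl(\min\{s_2,0\}-H(Q)t_2\bigr)$, establishing (\ref{2nd-dil}) for the majorization case. For the deterministic case, the direct part follows from (\ref{D1}) combined with the asymptotic equivalence of deterministic and majorization resolvability from \cite{KH13-2}, while the converse follows from $F^{\cal D}\le F^{\cal M}$ in (\ref{fidelity-ineq2}).

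For the limit representation, Proposition \ref{lim2} gives $\lim_{v\to\infty}Z_{v,\sqrt{v}s}(\sqrt{v}\mu)=\Phi(\mu-\min\{s,0\})$. As $P\to U_l$, $V(P)\to 0$ and $H(P)\to\log l$, so $v := C_{P,Q}=V(Q)H(P)/(V(P)H(Q))\to\infty$. Matching $\sqrt{v}\,s = s_2/\sqrt{V(P)}$ and $\sqrt{v}\,\mu = t_2 D_{P,Q}$ produces $s \to s_2\sqrt{H(Q)/(V(Q)\log l)}$ and $\mu \to t_2 H(Q)\sqrt{H(Q)/(V(Q)\log l)}$; pulling the positive factor $\sqrt{H(Q)/(V(Q)\log l)}$ out of the $\min$ and using $\Phi(-x)=1-\Phi(x)$, the limit of $1-Z_{C_{P,Q},s_2/\sqrt{V(P)}}(t_2D_{P,Q})$ equals $\Phi\bigl(\sqrt{H(Q)/(V(Q)\log l)}(\min\{s_2,0\}-H(Q)t_2)\bigr)$, which is exactly the square of the right-hand side of (\ref{2nd-dil}).

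The main obstacle is the information-spectrum step in the second paragraph: pinning down the top-$l^k$ sum to $\sqrt{n}$-precision requires controlling the cardinality condition and the probability simultaneously. This is the standard second-order resolvability argument (cf.~\cite{Hay08,NH13,KH13-2}), so the required machinery is available but must be applied carefully enough to secure the $o(\sqrt{n})$ remainder that feeds into the Gaussian approximation.
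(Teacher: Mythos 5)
Your proposal is correct and follows essentially the same route as the paper's proof: both reduce the $P=U_l$ case to conventional storage-free resolvability via a $\min$-type argument (you via Proposition \ref{opt trans cor} phrased in conversion numbers, yielding $\min\{n,m\}\log l=n\log l+\min\{s_2,0\}\sqrt n+o(\sqrt n)$; the paper via an explicit case split on the sign of $s_2$ with direct fidelity bounds), and both then invoke the known second-order Gaussian behaviour of uniform-to-$Q$ conversion (Lemma \ref{lem.uni} in the paper, your CLT/information-spectrum sketch). A welcome addition is your explicit verification of the limit-representation claim via Proposition \ref{lim2} together with the substitutions $\sqrt{v}s=s_2/\sqrt{V(P)}$ and $\sqrt{v}\mu=t_2D_{P,Q}$, which the paper's own proof of this theorem leaves implicit.
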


We give the proof of Lemma \ref{region2-uni1} in Section \ref{region2-uni1.app}.
When $P=U_l$ and $(s_1,t_1)$ is the admissible rate pair $(\log l,\frac{\log l}{H(Q)})$, 
from Theorem \ref{general-2reg} and Lemma \ref{region2-uni1},
the second-order rate region is given by
\begin{eqnarray}
&&\hspace{-1em}
{\cal R}^{2}_{U_l,Q}\left(\nu\right)
\nonumber
\\&\hspace{-2em}=&
\hspace{-1em}\left\{\left(s_2,t_2\right)\bigg|
t_2 \le \frac{\min\{s_2,0\}}{H(Q)} -\sqrt{\frac{V(Q)\log l}{H(Q)^3}} \Phi^{-1}(\nu^2)
\right\}.
\label{reg2-uni1}
\end{eqnarray}
%
%
The second-order rate region is illustrated as Fig. \ref{2nd-reg2-uni1}.
Then the line with the slope $H(Q)^{-1}$ is the set of semi-admissible rate pairs 
from Lemma \ref{simulate-2nd}
and the extreme point is the unique admissible pair.
%
\begin{figure}[t]
 \begin{center}
 \hspace*{0em}\includegraphics[width=80mm, height=55mm]{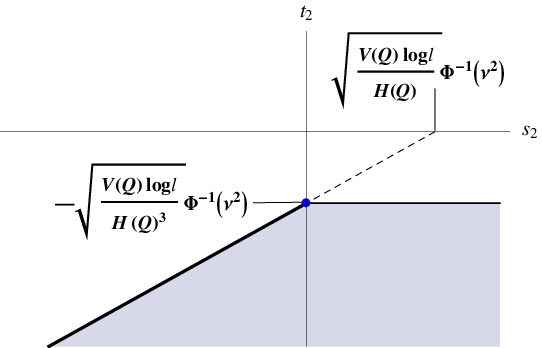}
 \end{center}
 \caption{
The second-order rate region ${\cal R}^{2,{\cal D}}_{U_l,Q}(s_1,t_1,\nu)$ and ${\cal R}^{2,{\cal M}}_{U_l,Q}(s_1,t_1,\nu)$ when $(s_1,t_1)$ is an admissible first-order rate pair.
}
 \Label{2nd-reg2-uni1}
\end{figure}

\begin{thm}\Label{region2-uni2}
When $P$ is a non-uniform distribution and $Q=U_l$,
the following equation holds:
\begin{eqnarray}
F_{P,U_l,s_2}(t_2)
=
\left\{
\begin{array}{cll}
\sqrt{\Phi\left(\frac{-\log l}{\sqrt{V(P)}}t_2\right)}&{\it if}&(\log l) t_2\le s_2
\\
0&{\it if}&otherwise.
\end{array}
\right.
\label{2nd-con}
\end{eqnarray}
In particular, the above value is described by the limit of the generalized Rayleigh-normal distribution function as follows: 
\begin{eqnarray}
F_{P,U_l,s_2}(t_2)
=\sqrt{1-Z_{0,\frac{s_2}{\sqrt{V(P)}}}(t_2D_{P,U_l})},\nonumber
\end{eqnarray}
\end{thm}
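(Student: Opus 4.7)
The plan is to split into the two cases based on the sign of $s_2-(\log l)t_2$ (the leading order slack between the storage budget and the bits needed to represent the target), and prove matching direct and converse bounds for $F^{\mathcal{D}}$ and $F^{\mathcal{M}}$ separately; the explicit Rayleigh-normal form then follows by substitution into (\ref{Z0}).

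First, in the sub-admissible case $(\log l)t_2\le s_2$, compute $L\log l = s_1 n+(\log l)t_2\sqrt{n}\le s_1 n+s_2\sqrt{n}=S$ (using $s_1=H(P)=t_1\log l$), so $2^S\ge l^L$ for every $n$. Hence every deterministic $W:\mathcal{X}^n\to\mathcal{Y}^L$ factorises through a storage of size $S$ via any injection $\mathcal{Y}^L\hookrightarrow\mathcal{B}_S$, giving $F^{\mathcal{D}}(P^n\to U_l^L\,|\,S)\ge F^{\mathcal{D}}(P^n\to U_l^L)$. Combined with the trivial monotonicity $F^{\mathcal{M}}(P^n\to U_l^L\,|\,S)\le F^{\mathcal{M}}(P^n\to U_l^L)$ and $F^{\mathcal{D}}\le F^{\mathcal{M}}$, the problem reduces to the storage-free second-order intrinsic randomness asymptotics for $P^n\to U_l^L$, which yield
\begin{eqnarray*}
F^{\mathcal{D}}(P^n\to U_l^L),\ F^{\mathcal{M}}(P^n\to U_l^L) \ \longrightarrow\ \sqrt{\Phi\!\left(\tfrac{-\log l}{\sqrt{V(P)}}t_2\right)}
\end{eqnarray*}
via a CLT on $-\tfrac{1}{n}\log P^n$ together with the Cauchy--Schwarz-optimal aggregation of $\mathcal{X}^n$ into $l^L$ cells of nearly equal $P^n$-mass (direct part), and Lemmas \ref{opt con}--\ref{opt trans} followed by the same CLT to upper-bound the fidelity (converse part). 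This is the same circle of ideas used for Theorem \ref{region2-uni1}, specialised to the target being uniform.

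Second, in the strict sub-admissible case $(\log l)t_2>s_2$, apply the universal bound (\ref{jimei}) directly to the uniform target $U_l^L$:
\begin{eqnarray*}
F^{\mathcal{M}}(P^n\to U_l^L\,|\,S)\le\sqrt{\sum_{i=1}^{2^S}(U_l^L)^{\downarrow}_i}=\sqrt{2^S/l^L}=2^{(s_2-(\log l)t_2)\sqrt{n}/2},
\end{eqnarray*}
which vanishes exponentially in $\sqrt{n}$ since $s_2-(\log l)t_2<0$. Together with $F^{\mathcal{D}}\le F^{\mathcal{M}}$ this forces $F^{\mathcal{D}}_{P,U_l,s_2}(t_2)=F^{\mathcal{M}}_{P,U_l,s_2}(t_2)=0$.

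For the Rayleigh-normal form, substitute the definition (\ref{Z0}) with $s=s_2/\sqrt{V(P)}$ and $\mu=t_2 D_{P,U_l}=t_2\log l/\sqrt{V(P)}$: the case split $\mu\le s$ coincides with $(\log l)t_2\le s_2$, and on this region $Z_{0,s}(\mu)=\Phi(\mu)$, so $\sqrt{1-Z_{0,s}(\mu)}=\sqrt{\Phi(-\mu)}$ matches the first branch of (\ref{2nd-con}); outside it $Z_{0,s}(\mu)=1$ and $\sqrt{1-Z_{0,s}(\mu)}=0$ matches the second. The main obstacle is the storage-free second-order intrinsic randomness claim invoked in the first case — it is standard in the literature \cite{Hay08,NH13,KH13-2} but requires some care because the ``admissible'' pair $(s_1,t_1)=(H(P),H(P)/\log l)$ forces the Gaussian correction to be evaluated exactly at the critical first-order rate, so the aggregation must be analysed up to the $\sqrt{n}$ scale rather than via the crude typical-set argument.
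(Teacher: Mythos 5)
Your proof is correct and follows essentially the same route as the paper: in both cases you observe that when $(\log l)t_2\le s_2$ the storage is large enough to carry $U_l^L$ without loss, reducing to storage-free intrinsic randomness (the paper's Lemma \ref{lem.uni}), and when $(\log l)t_2>s_2$ you give an exponentially vanishing upper bound on the fidelity. The only cosmetic difference is that in the second case you invoke the universal bound (\ref{jimei}) directly to get $F^{\cal M}\le\sqrt{2^S/l^L}$, whereas the paper's proof passes through $F^{\cal M}(U_2^{H(P)n+s_2\sqrt n}\to U_l^{\frac{H(P)}{\log l}n+t_2\sqrt n})$ and evaluates the same quantity; both yield the same $2^{(s_2-(\log l)t_2)\sqrt n/2}\to 0$.
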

where $Z_{0,s}$ was defined in (\ref{Z0}).

We give the proof of Lemma \ref{region2-uni2} in Section \ref{region2-uni2.app}.
When $Q=U_l$ and $(s_1,t_1)$ is the admissible rate pair $(H(P),\frac{H(P)}{\log l})$, 
from Theorem \ref{general-2reg} and Lemma \ref{region2-uni2},
the second-order rate region is given by
\begin{eqnarray}
&&\hspace{-1em}
{\cal R}^{2,{\cal D}}_{P,U_l}\left(\nu\right)
=
{\cal R}^{2,{\cal M}}_{P,U_l}\left(\nu\right)
\nonumber
\\&\hspace{-2em}=&
\hspace{-1em}\left\{\left(s_2,t_2\right)\bigg|
t_2 \le\frac{\min\{s_2,-\sqrt{V(P)}\Phi^{-1}(\nu^2)\}}{\log l}
\right\}.
\label{reg2-uni2}
\end{eqnarray}
%
The second-order rate region is illustrated as Fig. \ref{2nd-reg2-uni2}.
Then the line with the slope $H(Q)^{-1}=(\log l)^{-1}$ is the set of semi-admissible rate pairs 
from Lemma \ref{simulate-2nd}
and the extreme point is the unique admissible pair.

\begin{figure}[t]
 \begin{center}
 \hspace*{0em}\includegraphics[width=80mm, height=55mm]{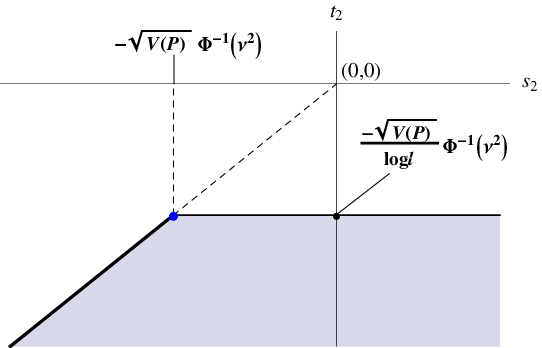}
 \end{center}
 \caption{
The second-order rate region ${\cal R}^{2,{\cal D}}_{P,U_l}(s_1,t_1,\nu)$ and ${\cal R}^{2,{\cal M}}_{P,U_l}(s_1,t_1,\nu)$ when $(s_1,t_1)$ is an admissible first-order rate pair.
}
 \Label{2nd-reg2-uni2}
\end{figure}

\section{Related topics}\Label{sec:RCR}

\subsection{Random Number Compression}~

As a special case of RNC via restricted storage, 
we consider random number compression.
Here, our random number compression is given as a two-stage random number conversion, namely, the combination of compression conversion and decompression conversion.
Compression conversion maps an initial random number 
subject to a probability distribution $P^n$ 
to another random number on a storage with size of ${H(P)n+s_2\sqrt{n}}$ bits.
After that, decompression conversion maps the random number on the storage to a random number approximately subject to the initial probability distribution $P^n$.
The process corresponds to RNC via restricted storage when $Q=P$ and $t_2=0$.
Then, the optimal accuracy of random number compression is given by Theorems \ref{Zform} and \ref{main} as follows:
\begin{eqnarray}
F_{P,P,s_2}(0)
=\sqrt{\Phi\left(\frac{s_2}{\sqrt{V(P)}}\right)}.
\end{eqnarray}
Thus, we obtain the following corollary.
\begin{cor}\Label{=}
Let $P$ be an arbitrary non-uniform probability distribution on a finite set.
For random number compression,
the minimum size of storage to guarantee an accuracy $\nu$ is represented by ${H(P)n+\sqrt{V(P)\Phi^{-1}(\nu^2)}\sqrt{n}}$.
\end{cor}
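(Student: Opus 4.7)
\textbf{Proof proposal for Corollary \ref{=}.} My plan is to recognize random number compression as the special case $Q = P$, $t_1 = 1$, $t_2 = 0$ of RNC via restricted storage, and then read off the minimum storage size directly from Theorems \ref{main} and \ref{Zform}. Since $t_1 = 1$ is forced by the compression-then-decompression setup, Corollary \ref{adm} dictates that the first-order rate of the storage equals $s_1 = H(P)$; hence the storage size must take the form $H(P) n + s_2 \sqrt{n} + o(\sqrt{n})$, and the only remaining task is to identify the smallest second-order coefficient $s_2$ that asymptotically achieves the required accuracy $\nu$.

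To carry this out, I would first apply Theorem \ref{main} with $Q = P$ non-uniform, using the specializations $C_{P,P} = 1$ and $D_{P,P} = H(P)/\sqrt{V(P)}$ read off from the definitions in (\ref{C}). Since the target second-order rate is $t_2 = 0$, the argument $t_2 D_{P,P}$ of the generalized Rayleigh-normal distribution vanishes, yielding
\begin{equation*}
F_{P,P,s_2}(0) \;=\; \sqrt{\,1 - Z_{1,\, s_2/\sqrt{V(P)}}(0)\,}.
\end{equation*}
The value $Z_{1,s}(0)$ lies on the branch $v = 1$, $\mu \le 0$ of Theorem \ref{Zform}, where $Z_{1,s}(\mu) = \Phi(\mu - s)$; specializing to $\mu = 0$ gives $Z_{1,\, s_2/\sqrt{V(P)}}(0) = \Phi(-s_2/\sqrt{V(P)}) = 1 - \Phi(s_2/\sqrt{V(P)})$, so that
\begin{equation*}
F_{P,P,s_2}(0) \;=\; \sqrt{\,\Phi\!\left(s_2/\sqrt{V(P)}\right)\,}.
\end{equation*}
Setting this equal to $\nu$ and inverting gives $s_2 = \sqrt{V(P)}\,\Phi^{-1}(\nu^2)$; since the right-hand side above is strictly increasing in $s_2$, this is indeed the minimum second-order coefficient compatible with asymptotic accuracy $\nu$.

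Combining $s_1 = H(P)$ with $s_2 = \sqrt{V(P)}\,\Phi^{-1}(\nu^2)$ through Theorem \ref{expansion} then yields the claimed minimum storage size $H(P) n + \sqrt{V(P)}\,\Phi^{-1}(\nu^2)\sqrt{n}$ modulo $o(\sqrt{n})$ corrections. There is essentially no obstacle here, since the argument reduces to a direct substitution into characterizations already established in Theorems \ref{main} and \ref{Zform}; the only mild subtlety worth recording is that the boundary value $\mu = 0$ falls in the $\mu \le 0$ branch of the $v = 1$ case of Theorem \ref{Zform}, which makes the closed-form evaluation of $Z_{1,s}(0)$ immediate.
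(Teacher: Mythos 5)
Your proposal is correct and follows essentially the same route as the paper: the paper also specializes Theorems \ref{main} and \ref{Zform} to $Q=P$, $t_2=0$ (so $C_{P,P}=1$, argument $t_2D_{P,P}=0$ landing in the $v=1$, $\mu\le 0$ branch) to obtain $F_{P,P,s_2}(0)=\sqrt{\Phi(s_2/\sqrt{V(P)})}$ and then inverts at accuracy $\nu$. Your computation even makes explicit that the intended expression is $\sqrt{V(P)}\,\Phi^{-1}(\nu^2)$, which is what the derivation gives (the corollary's displayed formula has a misplaced radical).
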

%

Note that the purpose of the random number compression is not to recover the initial random number itself but to regenerate a random number subject to the same distribution $P^n$
and the process itself differs from the data compression.
However, Corollary \ref{=} shows that the minimum size of storage in data compression has the same form with that of random number compression (see the equation (1) in \cite{Hay08}).

\subsection{Relation with Conventional RNC}

We have treated RNC via restricted storage.
On the other hand, in the previous paper \cite{KH13-2}, we treated random number conversion without restriction of storage.
Here, it is expected that the rate of the generated copies of the target distribution
approaches to the conversion rate in the previous paper as the size of storage gets larger.
In the following, we discuss this relation in terms of the asymptotic maximum fidelity of RNC.

When the first-order rate of the size of storage is the entropy of the source distribution,
the asymptotic maximal fidelity in RNC with restricted storage is given as
\begin{eqnarray}\Label{prelimit}
F_{P,Q,s_2}(t_2)
:=
F^{\cal D}_{P,Q,s_2}(t_2)
=
F^{\cal M}_{P,Q,s_2}(t_2).
\end{eqnarray}
On the other hand,
the asymptotic maximal fidelity in RNC without restricted storage is given as follows shown in \cite{KH13-2}
\begin{eqnarray}\Label{postlimit}
F_{P,Q}(t_2)
&:=&\lim_{n\to\infty}F^{\cal D}(P^n\to Q^{\frac{H(P)}{H(Q)}n+t_2\sqrt{n}})\nonumber\\
&=&
\lim_{n\to\infty}F^{\cal M}(P^n\to Q^{\frac{H(P)}{H(Q)}n+t_2\sqrt{n}}).
\end{eqnarray}
Fig. \ref{loss ratio} represents the graph of the ratio $F_{P,Q,s_2}(t_2)/F_{P,Q}(t_2)$  with respect to $s_2\in\R$ when $C_{P,Q}=1$.
We can read off that the value of $F_{P,Q,s_2}(t_2)$ converges to that of $F_{P,Q}(t_2)$ for each $t_2\in\R$ when $s_2$ goes to infinity
and
the existence of storage does not affect the accuracy (i.e. the asymptotic maximum fidelity) of RNC via restricted storage so much as long as the second-order rate is large enough 
even when the first-order rate strictly achieves the optimal value. 
In particular,  when $s_2$ tends to infinity, the second order asymptotic expansion in Theorem \ref{expansion} recovers Theorem $3$ of \cite{KH13-2} for RNC without restricted storage
by Theorems \ref{region2-non-admissible}, \ref{main}, \ref{region2-uni1}, \ref{region2-uni2} and (\ref{limit}).
\begin{figure}[t]
 \begin{center}
 \hspace*{0em}\includegraphics[width=80mm, height=55mm]{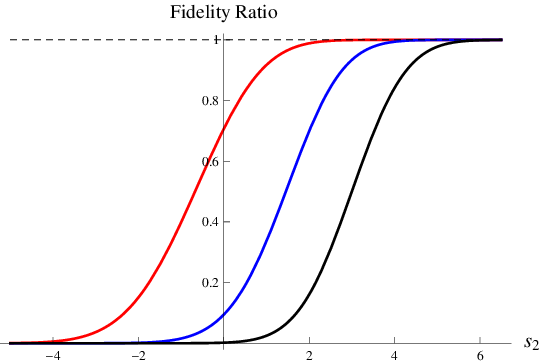}
 \end{center}
 \caption{
The graph of the ratio $\frac{F_{P,Q,s_2}(t_2)}{F_{P,Q}(t_2)}$ with respect to the second-order rate $s_2$ of storage when $C_{P,Q}=V(P)=H(Q)=1$. 
The left red line shows the case when $t_2\le0$.
The middle blue and the right black lines show the cases when $t_2=-3$ and $t_2=-6$.
In particular, the ratio of fidelities does not depend on $t_2$ if $t_2\le0$.
}
 \Label{loss ratio}
\end{figure}
%

\section{Application to Quantum Information Theory}
\Label{sec:AQIT}~

In this section, we apply the results of RNC via restricted storage for quantum information theory.
%
\begin{figure*}[t]
 \begin{center}
 \hspace*{-8em}\includegraphics[width=120mm, height=35mm]{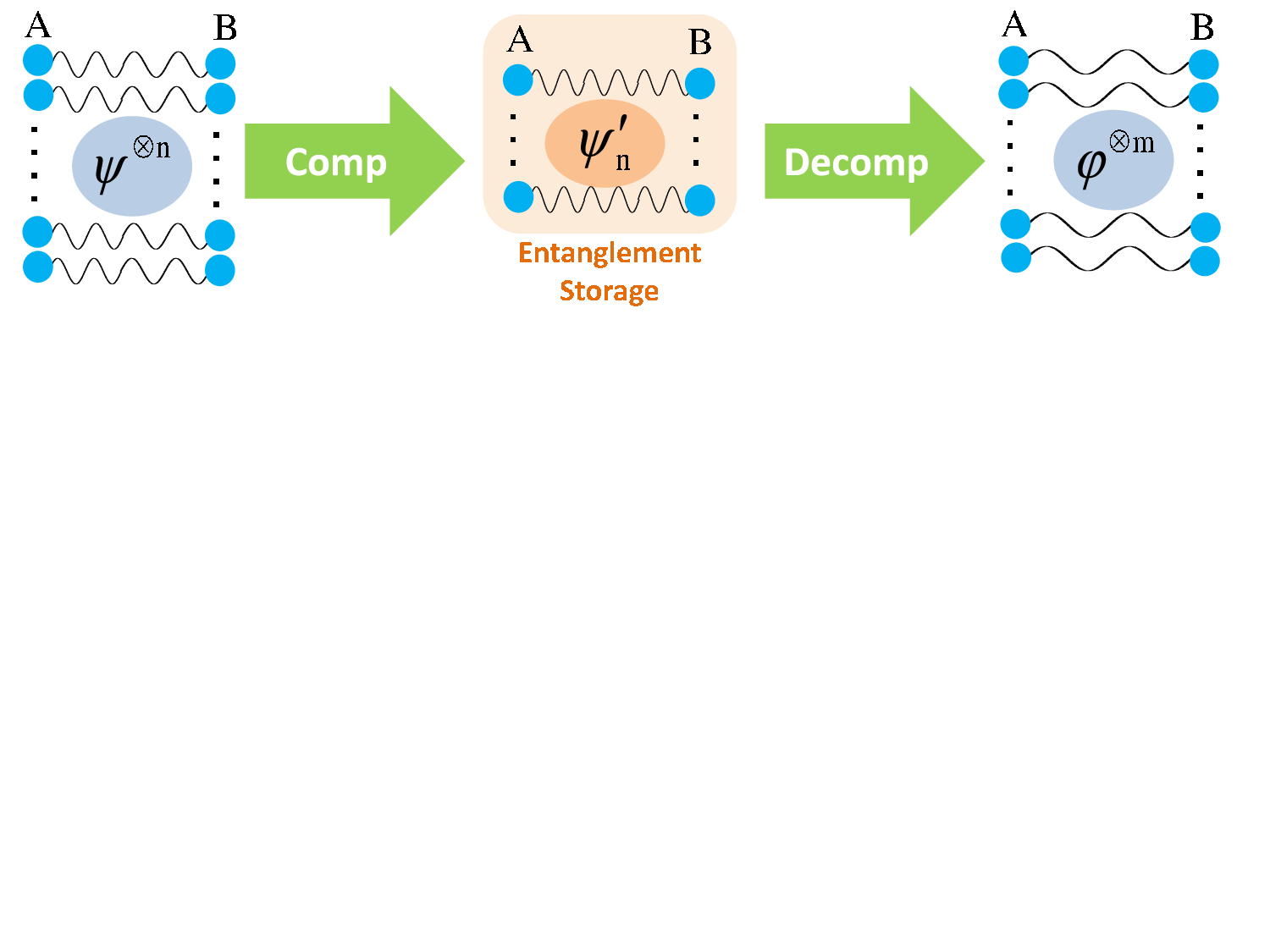}
 \end{center}
 \caption{
Process of entanglement compression by LOCC.
}
 \Label{compression}
\end{figure*}

\subsection{LOCC Conversion via Restricted Storage}\Label{LCRS.app}

When two distant parties perform some quantum protocol using a specific suitable entangled state (e.g. quantum teleportation, superdense coding, channel estimation),
those parties need to prepare the desired entangled state.
To do so,
the parties share some initial entangled states which are not necessarily the desired entangled states by a quantum communication channel,
and then,
they generate the desired entangled states by performing LOCC for given entangled states.
However, 
a quantum protocol which is performed may not be determined at the time of sharing of initial entangled states.
Then, 
it is desirable to store entangled states in some storage
and,
after the determination of a quantum protocol which is performed,
to be able to convert the stored states to desired states depending on the quantum protocol. 
To model the situation, we consider the following two-step process.
In the first part, an initial state is converted into the storage by LOCC.
In the second part, the converted state is converted again to a target state by LOCC.
We call such a process LOCC conversion via {\it entanglement storage}.
In the following, 
let us represent the quantum system of entanglement storage by ${\cal H}_{qubit}^{\otimes N}$ where ${\cal H}_{qubit}:=\C^2\otimes\C^2$,
and
we analyze the asymptotic behavior of LOCC conversion via entanglement storage
when an initial state and a target state are i.i.d. and pure.

We consider the maximam recovery number by LOCC:
\begin{eqnarray*}
&&L_n^{\cal Q}(\psi, \ph|\nu, N)\nonumber\\
&&:=\hspace{0em}\max
\left\{L\in\N\Bigg|
\begin{array}{l}
\exists\Gamma:\mathcal{S}(\mathcal{H}^{\otimes n})\to\mathcal{S}({\cal H}_{qubit}^{\otimes N}) : \text{LOCC},\\ 
\exists\Gamma':\mathcal{S}({\cal H}_{qubit}^{\otimes N}) \to\mathcal{S}(\mathcal{H}'^{\otimes L}) : \text{LOCC},\\
F(\Gamma' \circ\Gamma(\psi^{\otimes n}), \ph^{\otimes L})\ge\nu.
\end{array}
\right\}.
\end{eqnarray*}
Here, note that the converted state in the entanglement storage is not necessarily pure, 
and thus, two-step process of LOCCs may not be simply represented by majorization conversion for the Schmidt coefficients of an initial state in general.
Therefore, the results for majorization conversion of probability distributions can not be directly applied for the maximam recovery number by LOCC from its definition yet.
To analyse the maximam recovery number,
we introduce the maximum accuracy of LOCC conversion via entanglement storage as follows:
\begin{eqnarray*}
&&\hspace{-1.5em}F^{\cal Q}(\psi\to \ph|N)\\
&&\hspace{-1.7em}:=\sup_{\Gamma,\Gamma'}
\left\{F(\Gamma'\circ \Gamma(\psi), \ph)\Bigg|
\begin{array}{l}
\Gamma:\mathcal{S}(\mathcal{H}) \to \mathcal{S}({\cal H}_{qubit}^{\otimes N}): \text{LOCC},\\ 
\Gamma':\mathcal{S}({\cal H}_{qubit}^{\otimes N}) \to \mathcal{S}(\mathcal{H}'): \text{LOCC}
\end{array}
\right\}
\end{eqnarray*}
where 
$\psi$ and $\ph$ are quantum states on bipartite systems ${\cal H}$ and ${\cal H}'$ respectively, 
${\cal S}(\cal H)$ is the set of all quantum states on ${\cal H}$.
%
Then, we obtain 
\begin{eqnarray}
L_n^{\cal Q}(\psi, \ph|\nu, N)
=\max\{L\in\N| F^{\cal Q}(\psi\to \ph^{\otimes L}|N)\ge\nu\}
\Label{Q1}
\end{eqnarray}
by the definition.
Moreover,
the following lemma holds for the squared Schmidt coefficients $P_{\psi}$ and $P_{\ph}$ of $\psi$ and $\ph$.
\begin{lem}\Label{q-to-c}
\begin{eqnarray}
F^{\cal Q}(\psi\to \ph|N)
=F^{\cal M}(P_{\psi} \to P_{\ph}|N)
\end{eqnarray}
\end{lem}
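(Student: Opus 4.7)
The plan is to prove the two inequalities $F^{\cal Q}(\psi\to\phi|N) \ge F^{\cal M}(P_\psi\to P_\phi|N)$ and $F^{\cal Q}(\psi\to\phi|N) \le F^{\cal M}(P_\psi\to P_\phi|N)$ separately. In both directions the bridge between the quantum two-step LOCC picture and the classical two-step majorization picture is the pair of Nielsen-type theorems: a pure bipartite state $\psi$ can be deterministically converted to $\phi'$ by LOCC iff $P_\psi \prec P_{\phi'}$, and it can be converted by LOCC to an ensemble $\{p_k,|\psi_k\rangle\}$ iff $P_\psi \prec \sum_k p_k P_{\psi_k}$ (Vidal). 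The storage constraint translates cleanly because ${\cal H}_{qubit}^{\otimes N}$ has local dimension $2^N = |{\cal B}_N|$, so every pure state on it has squared Schmidt coefficients in ${\cal P}({\cal B}_N)$.

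For the direct part, I would fix any chain $P_\psi \prec P' \prec P''$ with $P' \in {\cal P}({\cal B}_N)$ witnessed in the definition of $F^{\cal M}(P_\psi \to P_\phi|N)$. Take a pure state $|\psi'\rangle \in {\cal H}_{qubit}^{\otimes N}$ whose squared Schmidt coefficients are $P'$, and a pure state $|\phi'\rangle \in {\cal H}'$ whose squared Schmidt coefficients are $P''$ and whose Schmidt bases and ordering are aligned with those of $\phi$, so that $|\langle\phi|\phi'\rangle|=F(P'',P_\phi)$. Nielsen's theorem then supplies deterministic LOCC maps $\Gamma$ sending $\psi$ to $|\psi'\rangle\langle\psi'|$ and $\Gamma'$ sending $|\psi'\rangle\langle\psi'|$ to $|\phi'\rangle\langle\phi'|$, so $F(\Gamma'\circ\Gamma(\psi),\phi)=F(P'',P_\phi)$; taking the supremum over chains gives the direct inequality.

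The converse is the more subtle half because the intermediate state $\sigma = \Gamma(\psi)$ may be mixed, so one cannot apply Nielsen twice verbatim. My plan is to unravel the LOCC into stochastic branches: write $\sigma = \sum_k p_k |\psi_k\rangle\langle\psi_k|$ with $|\psi_k\rangle \in {\cal H}_{qubit}^{\otimes N}$. Vidal's theorem yields $P_\psi \prec \bar P := \sum_k p_k P_{\psi_k}$, and $\bar P \in {\cal P}({\cal B}_N)$ by the Schmidt-rank bound. Applying the same unraveling to $\Gamma'$ branch-by-branch gives an ensemble $\{p_k q^{(k)}_l, |\phi_{kl}\rangle\}$ for $\Gamma'\circ\Gamma(\psi)$, and $P'' := \sum_{k,l} p_k q^{(k)}_l P_{\phi_{kl}}$ satisfies $\bar P \prec P''$, so $(P_\psi,\bar P,P'')$ is admissible in the definition of $F^{\cal M}(P_\psi\to P_\phi|N)$.

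It remains to bound the fidelity. The pointwise bound $|\langle\phi|\phi_{kl}\rangle| \le F(P_\phi, P_{\phi_{kl}})$ (Schmidt-basis rearrangement for inner products of pure states) gives $F(\Gamma'\circ\Gamma(\psi),\phi)^2 = \sum_{k,l} p_k q^{(k)}_l |\langle\phi|\phi_{kl}\rangle|^2 \le \sum_{k,l} p_k q^{(k)}_l F(P_\phi,P_{\phi_{kl}})^2$. Expanding $F(P_\phi,P_{\phi_{kl}})^2 = \sum_{x,y}\sqrt{P_\phi(x)P_\phi(y)}\sqrt{P_{\phi_{kl}}(x)P_{\phi_{kl}}(y)}$ and applying Cauchy--Schwarz in the index $(k,l)$ with weights $p_k q^{(k)}_l$ contracts $\sqrt{P_{\phi_{kl}}(x)P_{\phi_{kl}}(y)}$ against $\sqrt{P''(x)P''(y)}$, giving the key step $\sum_{k,l} p_k q^{(k)}_l F(P_\phi,P_{\phi_{kl}})^2 \le F(P_\phi,P'')^2$. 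Hence $F(\Gamma'\circ\Gamma(\psi),\phi) \le F(P'',P_\phi) \le F^{\cal M}(P_\psi\to P_\phi|N)$, which combined with the direct part closes the argument. The main obstacle is this last Cauchy--Schwarz step: it is what turns the branch-wise classical-fidelity estimate into a bound on the averaged classical distribution $P''$, and it is what allows a possibly mixed intermediate storage state to be compared to a single majorization chain.
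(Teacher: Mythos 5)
Your proof is correct, but the converse follows a genuinely different route from the paper's. The paper pivots on Lemma~\ref{universal}: there is a single pure state $\psi_{2^N}$ (squared Schmidt coefficients ${\cal C}_{2^N}(P_\psi)$) such that, by Jonathan--Plenio's criterion, every state reachable in the storage from $\psi$ by LOCC is also reachable from $\psi_{2^N}$; hence the restricted quantum problem collapses to the no-storage problem $F^{\cal Q}(\psi_{2^N}\to\phi)=F^{\cal M}({\cal C}_{2^N}(P_\psi)\to P_\phi)$, and Lemma~\ref{opt trans} then identifies the latter with $F^{\cal M}(P_\psi\to P_\phi|N)$. You instead keep the possibly mixed intermediate state, unravel both LOCC maps into pure branches, apply Jonathan--Plenio/Vidal twice to obtain the chain $P_\psi\prec\bar P\prec P''$ with $\bar P\in\mathcal{P}({\cal B}_N)$ and $P''=\sum_{k,l}p_kq^{(k)}_l P_{\phi_{kl}}^\downarrow$, and then contract the branch-wise quantum fidelities against the averaged classical distribution by Cauchy--Schwarz. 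That Cauchy--Schwarz contraction, together with the pointwise estimate $|\langle\phi|\phi_{kl}\rangle|\le F(P_\phi^\downarrow,P_{\phi_{kl}}^\downarrow)$ (the von Neumann trace inequality for the SVD of the Schmidt matrices), is exactly the step that makes a mixed intermediate comparable to a single majorization chain, and it checks out. Your route is more self-contained — it does not invoke Lemma~\ref{opt trans} — whereas the paper's route delivers the extra structural fact, exploited later in the LOCC-compression discussion, that the optimal first-step LOCC is the target-independent entanglement concentration $\psi\mapsto\psi_{2^N}$.
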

We give the proof of Lemma \ref{q-to-c} in Section \ref{q-to-c.app}.
Here, as stated above, a converted state by LOCC in storage is not necessarily a pure state.
However, in the optimal process, 
we can assume that the converted state by LOCC in storage is pure from the proof of Lemma \ref{q-to-c}.
From (\ref{M1}), (\ref{Q1}) and Lemma \ref{q-to-c}, 
the following proposition holds.
\begin{prp}\Label{expansion2}
\begin{eqnarray*}
L_n^{\cal Q}(\psi, \ph|\nu, N)
=L_n^{\cal M}(P_{\psi}, P_{\ph}|\nu, N)
\nonumber\label{q-exp}
\end{eqnarray*}
\end{prp}
In particular, the asymptotic expansion of $L_n^{\cal Q}$ is obtained by Theorem \ref{expansion}.

Next, let us consider the rate regions of LOCC conversion via entanglement storage.
For simplicity, 
we employ the following abbreviate notation:
\begin{eqnarray*}
F^{\cal Q}_{\psi,\ph,s_1}(t_1)
:=
\liminf_{n\to\infty} F^{\cal Q}\left(\psi^{\otimes n}\to \ph^{\otimes t_1n\sqrt{n}}|s_1n\right).
\end{eqnarray*}
In order to treat the asymptotic relation between the second-order rates of storage and target entangled state, 
Then we define the second-order rate region as
%
\begin{eqnarray}
&&{\cal R}^{1,{\cal Q}}_{\psi,\ph}(\nu)
:=
\left\{\left(s_1,t_1\right)\bigg|
F^{\cal Q}_{\psi,\ph,s_1}(t_1)\ge \nu \right\}.
\label{reg2q}\nonumber
\end{eqnarray}
When $S_{\psi}$ is the von Neumann entropy of the partial density matrix of $\psi$,
Lemma \ref{q-to-c} and Theorem \ref{region1} imply the following theorem about first-order rate region.
\begin{prp} \Label{Qreg1}
Let $\psi$ and $\ph$ be pure entangled states on finite dimensional bipartite quantum systems.
For $0<s_1\le S_{\psi}$, $s_2\in\R$ and $\nu\in(0,1)$,
\begin{eqnarray}
&&{\cal R}^{1,{\cal Q}}_{\psi,\ph}\left(\nu\right)
={\cal R}^{1,{\cal M}}_{P_{\psi},P_{\ph}}\left(\nu\right).
\nonumber
\end{eqnarray}
\end{prp}
Similarly,
we employ the following abbreviate notation:
\begin{eqnarray*}
&&F^{\cal Q}_{\psi,\ph,s_1,t_1,s_2}(t_2)\\
&&:=
\liminf_{n\to\infty} F^{\cal Q}\left(\psi^{\otimes n}\to \ph^{\otimes t_1n+t_2\sqrt{n}}|s_1n+s_2\sqrt{n}\right).
\end{eqnarray*}
Then we define the second-order rate region as 
%
\begin{eqnarray}
&&{\cal R}^{2,{\cal Q}}_{\psi,\ph}(s_1,t_1,\nu)
:=
\left\{\left(s_2,t_2\right)\bigg|
F^{\cal Q}_{\psi,\ph,s_1,t_1,s_2}(t_2)\ge \nu \right\}.
\label{reg2q}\nonumber
\end{eqnarray}
%
Then, Lemma \ref{q-to-c} and Theorem \ref{general-2reg} imply the following theorem about the second-order rate region.
\begin{prp} \Label{Qreg2}
Let $\psi$ and $\ph$ be pure entangled states on finite dimensional bipartite quantum systems.
For $0<s_1\le S_{\psi}$, $s_2\in\R$ and $\nu\in(0,1)$,
\begin{eqnarray}
&&{\cal R}^{2,{\cal Q}}_{\psi,\ph}\left(s_1,\frac{s_1}{S_{\ph}},\nu\right)
={\cal R}^{2,{\cal M}}_{P_{\psi},P_{\ph}}\left(s_1,\frac{s_1}{H(P_{\ph})},\nu\right).
\nonumber
\end{eqnarray}
\end{prp}
Therefore, the second-order rate region is obtained by Theorem \ref{general-2reg}.
and is especially described by the generalized Rayleigh-normal distribution function  at the semi-admissible rate pairs by Theorem \ref{main}.

\subsection{Entangled State Compression by LOCC}\Label{ESCL.app}

When an initial state $\ph$ equals a target state $\psi$, 
the LOCC conversion via restricted entanglement storage is regarded as a compression process for entangled states.
There already exist some studies about LOCC compression for entangled states.
In particular, Schumacher \cite{Sch95} derived the optimal first-order rate of LOCC compression for entangled states in the framework of the first-order asymptotics.
Here, we consider the LOCC compression in the framework of the second-order asymptotics
and derive some observations which essentially can not be obtained from the first-order asymptotics.
When the size of storage has the optimal first-order compression rate $S_{\psi}$ and the second-order rate $s_2$,
the difference between the numbers of the initial and recovered copies is given as 
\begin{eqnarray}
n-L_n(\psi, \psi|\nu, s_2)
\cong -F_{P_{\psi},P_{\psi},s_2}^{-1}(\nu)\sqrt{n}, \Label{loss2}
\end{eqnarray}
where the concrete form of $F_{P_{\psi},P_{\psi},s_2}$ was given in Themrem \ref{main}.
The formula (\ref{loss2}) relates with the irreversibility of entanglement concentration \cite{KH13}. 
That is, when $s_2$ is smaller than $\sqrt{V(P_{\psi})}\Phi^{-1}(\nu^2)$ for a required accuracy $\nu$,
the right-hand side in (\ref{loss2}) is positive from Corollary \ref{=} and represents the loss which inevitably occurs even in the optimal compression process.
Moreover, from Lemma \ref{opt con} and the proof of Lemma \ref{q-to-c}, 
the LOCC conversion in the optimal compression coincides with LOCC conversion used in the optimal entanglement concentration. 
In addition, (\ref{loss2}) also relates with LOCC cloning \cite{KH13-2}.
That is, when $s_2$ is larger than $\sqrt{V(P_{\psi})}\Phi^{-1}(\nu^2)$,
the right-hand side in (\ref{loss2}) is negative from Corollary \ref{=} and it represents that the number of copies of the recovered state after the compression process exceeds that of the initial state under the accuracy constraint.
While we argued about approximate LOCC cloning without entanglement storage (or with infinite storage) in \cite{KH13-2},
the above fact says that approximate LOCC cloning can be realized even when there is entanglement storage with the tight first-order rate $S_{\psi}$ as long as the second-order rate of the size of storage is large enough.

\section{Proofs of Theorems, Propositions and Lemmas}
\Label{sec:Proof}

\subsection{Proof of Lemma\ref{sol2}}\Label{sol2.app}
The existence of the unique solution of the equation (\ref{threshold0})  is equivalent to the existence of the unique zero point of the function
\begin{eqnarray}
f(x):=(\Phi_{\mu,v}(s)-\Phi_{\mu,v}(x))-(1-\Phi(x))\frac{\phi_{\mu,v}(x)}{\phi(x)}.
\Label{f}
\end{eqnarray}
Since
\begin{eqnarray}
\frac{df}{dx}(x)
= -\frac{1-v}{v}\left(\frac{\mu}{1-v}-x\right)\frac{\phi_{\mu,v}(x)}{\phi(x)}(1-\Phi(x)) 
\Label{differential}
\end{eqnarray}
 and $0<v<1$,
the function $f$ is strictly monotonically decreasing when $x<\frac{\mu}{1-v}$ and is strictly monotonically increasing when $x>\frac{\mu}{1-v}$.
Since
\begin{eqnarray}
\lim_{x\to-\infty}f(x)&=&\Phi_{\mu,v}(s)>0, \Label{f>0}\\
\lim_{x\to\infty}f(x)&=&\Phi_{\mu,v}(s)-1<0,  \Label{f<0}
\end{eqnarray}
the function $f$ has the unique zero point $\beta_{\mu,v,s}<\frac{\mu}{1-v}$ due to the intermediate value theorem.
In addition, $\beta_{\mu,v,s}<s$  holds 
because the left-hand side of (\ref{threshold0}) is negative for any $x>s$ although the right-hand side is always positive.
\endproof

\subsection{Proof of Lemma \ref{sol0}}\Label{sol0.app}
The existence of the unique solution of the equation (\ref{threshold0})  is equivalent to the existence of the unique zero point of the function (\ref{f}).
Since
\begin{eqnarray}
\frac{df}{dx}=-\mu\frac{\phi_{\mu,1}}{\phi}(1-\Phi), 
\end{eqnarray}
the function $f$ is strictly monotonically decreasing over $\R$ because of $\mu>0$.
Since $f$ satisfies (\ref{f>0}) and (\ref{f<0}),
the function $f$ has the unique zero point $\beta_{\mu,v,s}$ due to the intermediate value theorem.
In addition, $\beta_{\mu,v,s}<s$  holds 
because the left-hand side of (\ref{threshold0}) is negative for any $x>s$ although the right-hand side is always positive.
\endproof

\subsection{Proof of Lemma \ref{sol1}}\Label{sol1.app}
There exists the unique solution $\alpha_{\mu,v}$ of (\ref{threshold1'}) with respect to $x$ in Lemma $3$ of \cite{KH13-2}.
Next, we show that there are two solutions $\beta'_{\mu,v}<\beta_{\mu,v}$ for the equation (\ref{threshold0}) and $\beta_{\mu,v}$ satisfies $\beta_{\mu,v}>\alpha_{\mu,v}$ under the condition $s>\Phi_{\mu,v}^{-1}\left(\frac{\Phi_{\mu,v}(\alpha_{\mu,v})}{\Phi(\alpha_{\mu,v})}\right)$.
Here, the existence of the solutions is equivalent to the existence of the zero points of the function (\ref{f}).
Since $f$ satisfies (\ref{differential}) and $v>1$,
the function $f$ is strictly monotonically increasing when $x<\frac{\mu}{1-v}$ and is strictly monotonically decreasing $x>\frac{\mu}{1-v}$.
Here, by the definition of $\alpha_{\mu,v}$ and the condition $s>\Phi_{\mu,v}^{-1}\left(\frac{\Phi_{\mu,v}(\alpha_{\mu,v})}{\Phi(\alpha_{\mu,v})}\right)$, we obtain the following inequality:
\begin{eqnarray}
f(\alpha_{\mu,v})
&=&\Phi_{\mu,v}(s)-\Phi_{\mu,v}(\alpha_{\mu,v})\nonumber
\\&&-(1-\Phi(\alpha_{\mu,v}))\frac{\Phi_{\mu,v}(\alpha_{\mu,v})}{\Phi(\alpha_{\mu,v})}>0.
\end{eqnarray}
Moreover, since
\begin{eqnarray}
\displaystyle\lim_{x\to-\infty}f(x)&=&-\infty, 
\\\displaystyle\lim_{x\to\infty}f(x)
&\le&\lim _{x\to\infty}(\Phi_{\mu,v}(s)-\Phi_{\mu,v}(x))\nonumber
\\&=&\Phi_{\mu,v}(s)-1<0, 
\end{eqnarray}
the function $f$ has two zero points $\beta'_{\mu,v}<\beta_{\mu,v}$ and $\beta_{\mu,v}>\alpha_{\mu,v}$ due to the intermediate value theorem.
\endproof

\subsection{Lemmas for Direct Part of Theorem \ref{Zform}}\Label{Zdir.app}

The following lemma is given as Lemma $22$ in \cite{KH13-2}.
\begin{lem}\Label{monotone}
The ratio $\frac{\phi(x)}{\phi_{\mu,v}(x)}$ is strictly monotonically decreasing only on the interval $\mathcal{I}_{\mu,v}$ defined by 
\begin{eqnarray}
\mathcal{I}_{\mu,v}
=\left\{
\begin{array}{ccl}
\R&\textit{if}& v=1~ {\rm and}~ \mu>0\\
 \emptyset &\textit{if}& v=1~ {\rm and}~ \mu\le 0\\
(\frac{\mu}{1-v},\infty)&\textit{if}& v>1\\
(-\infty,\frac{\mu}{1-v})&\textit{if}& v<1,
\end{array}
\right.
\Label{area}
\end{eqnarray}
where $ \emptyset$ is the empty set.
\end{lem}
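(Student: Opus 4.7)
The plan is to reduce the claim to a one-line computation with the logarithmic derivative of the ratio. Writing $N(x)=\frac{1}{\sqrt{2\pi}}e^{-x^2/2}$ and $N_{\mu,v}(x)=\frac{1}{\sqrt{2\pi v}}e^{-(x-\mu)^2/(2v)}$, the ratio becomes
\begin{eqnarray*}
\frac{N(x)}{N_{\mu,v}(x)}=\sqrt{v}\,\exp\!\left(\frac{(1-v)x^2-2\mu x+\mu^2}{2v}\right).
\end{eqnarray*}
Since $\exp$ is strictly increasing and $\sqrt{v}>0$, strict monotonic decrease of the ratio is equivalent to strict decrease of the exponent, i.e., to the derivative $g'(x)=\frac{(1-v)x-\mu}{v}$ being negative.

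Next, I would split into three cases according to the sign of $1-v$. When $v=1$, the derivative reduces to $-\mu/v=-\mu$, which is everywhere negative iff $\mu>0$ (giving $\mathcal{S}_{\mu,v}=\R$) and otherwise never negative (giving $\mathcal{S}_{\mu,v}=\emptyset$, since at $\mu=0$ the ratio is the constant $\sqrt{v}$ and for $\mu<0$ the ratio is strictly increasing). When $v<1$, the coefficient $1-v$ is positive, so $(1-v)x-\mu<0$ is equivalent to $x<\mu/(1-v)$, yielding $\mathcal{S}_{\mu,v}=(-\infty,\mu/(1-v))$. When $v>1$, the coefficient $1-v$ is negative, so dividing flips the inequality and gives $x>\mu/(1-v)$, yielding $\mathcal{S}_{\mu,v}=(\mu/(1-v),\infty)$.

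I would close by noting the ``only on'' part: outside each of these sets the same derivative computation shows $g'(x)\geq 0$, so the ratio is non-decreasing there, which rules out strict decrease on any larger interval. There is no real obstacle here; the only subtlety is the boundary convention (whether to include the unique critical point $x=\mu/(1-v)$) and the degenerate subcase $v=1$, $\mu\le 0$, both of which are resolved directly by the sign analysis above.
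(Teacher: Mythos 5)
Your computation is correct and is the standard, direct argument: rewrite the ratio as $\sqrt{v}\,e^{g(x)}$ with $g(x)=\frac{(1-v)x^2-2\mu x+\mu^2}{2v}$, observe that strict monotonicity of the ratio is governed by the sign of $g'(x)=\frac{(1-v)x-\mu}{v}$, and carry out the sign analysis by cases on $1-v$; the resulting intervals agree exactly with $\mathcal{S}_{\mu,v}$. Note that the paper itself does not supply a proof here but imports the statement verbatim from \cite{KH13-2} (where it is Lemma~22), so there is no internal proof to compare against; your argument is the one any reader would supply. The one point you flag yourself --- that the closed interval $(-\infty,\mu/(1-v)]$ (resp.\ $[\mu/(1-v),\infty)$) is in fact also a domain of strict decrease, because the derivative vanishes only at the single boundary point --- means that ``only on $\mathcal{S}_{\mu,v}$'' should be read as ``on the open set where $g'<0$'' rather than as the literal maximal interval of strict decrease; this is a harmless convention and does not affect any use of the lemma in the paper.
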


Using $\beta_{\mu,v,s}$ and $\alpha_{\mu,v}$ in Lemmas \ref{sol2}, \ref{sol0} and \ref{sol1},
we define a function $A_{\mu,v,s}:\R\to[0,1]$ which has different forms depending on $v>0$ as follows.
When $v<1$,
\begin{eqnarray*}
&&A_{\mu,v,s}(x)\\
&=&\left\{
\begin{array}{ll}
\Phi(x) & \hspace{-10em} \textit{if}~x\le \beta_{\mu,v,s} \\
\Phi(\beta_{\mu,v,s})
+\frac{1-\Phi\left(\beta_{\mu,v,s}\right)}{\Phi_{\mu,v}(s)-\Phi_{\mu,v}(\beta_{\mu,v,s})}
(\Phi_{\mu,v}(x)-\Phi_{\mu,v}(\beta_{\mu,v,s})) & \\  
& \hspace{-10em} \textit{if}~\beta_{\mu,v,s}\le x\le s\\
1 & \hspace{-10em} \textit{if}~s\le x, \\
\end{array}
\right.
\end{eqnarray*}
When $v=1$,
\begin{eqnarray*}
&&A_{\mu,1,s}(x)\\
&=&\left\{
\begin{array}{ll}
\frac{\Phi_{\mu,1}(x)}{\Phi_{\mu,1}(s)} & \hspace{-13em}\textit{if}~\mu\le0, x\le s \\
\Phi(x) & \hspace{-13em} \textit{if}~\mu>0, x\le \beta_{\mu,v,s} \\
\Phi(\beta_{\mu,v,s}) 
+ \frac{1-\Phi\left(\beta_{\mu,1,s}\right)}{\Phi_{\mu,1}(s)-\Phi_{\mu,1}(\beta_{\mu,1,s})}
(\Phi_{\mu,1}(x)-\Phi_{\mu,1}(\beta_{\mu,v,s})) & \\
& \hspace{-13em} \textit{if}~\mu>0, \beta_{\mu,v,s}\le x\le s\\
1& \hspace{-13em} \textit{if}~s\le x.
\end{array}
\right.
\end{eqnarray*}
When $v>1$ and $s\le \Phi_{\mu,v}^{-1}\left(\frac{\Phi_{\mu,v}(\alpha_{\mu,v})}{\Phi(\alpha_{\mu,v})}\right)$, 
\begin{eqnarray}
&&A_{\mu,v,s}(x)
=\left\{
\begin{array}{ll}
\frac{\Phi_{\mu,v}(x)}{\Phi_{\mu,v}(s)} & \textit{if}~x\le s \\
1 & \textit{if}~s\le x .
\end{array}
\right.
\end{eqnarray}
Wthen $v>1$ and $s\ge \Phi_{\mu,v}^{-1}\left(\frac{\Phi_{\mu,v}(\alpha_{\mu,v})}{\Phi(\alpha_{\mu,v})}\right)$, 
\begin{eqnarray*}
&&A_{\mu,v,s}(x)\\
&=&\left\{
\begin{array}{ll}
\frac{\Phi(\alpha_{\mu,v})}{\Phi_{\mu,v}(\alpha_{\mu,v})}\Phi_{\mu,v}(x) & \hspace{-11em}\textit{if}~x\le \alpha_{\mu,v} \nonumber\\
\Phi(x) & \hspace{-11em}\textit{if}~\alpha_{\mu,v}\le x\le \beta_{\mu,v,s}\\
\Phi(\beta_{\mu,v,s})
+\frac{1-\Phi\left(\beta_{\mu,v,s}\right)}{\Phi_{\mu,v}(s)-\Phi_{\mu,v}(\beta_{\mu,v,s})}
(\Phi_{\mu,v}(x)-\Phi_{\mu,v}(\beta_{\mu,v,s}))&\\
 & \hspace{-11em}\textit{if}~\beta_{\mu,v,s}\le x\le s\\
1 & \hspace{-11em}\textit{if}~s\le x. \\
\end{array}
\right.
\end{eqnarray*}%

\begin{lem}\Label{lem.direct}
Suppose that  $\mu\in\R$ and $v>0$ satisfy
(i) $v<1$,
(ii) $v=1$ and $\mu>0$, 
or (iii) $v>1$ and $s> \Phi_{\mu,v}^{-1}\left(\frac{\Phi_{\mu,v}(\alpha_{\mu,v})}{\Phi(\alpha_{\mu,v})}\right)$.
For an arbitrary $\epsilon>0$,
there exist real numbers $b\le b'\le s$ which satisfy the following condition {\rm ($\star$)}:\\
\hspace{-1em}{\rm ($\star$)} There exist $a$ and $a'$ which satisfy the following three conditions:
\begin{eqnarray}
&&\hspace{-1.5em}~{\rm (I)}a\le b\le b'\le a' ,\label{I}\nonumber\\
&&\hspace{-1.5em}~{\rm (II)}\frac{\Phi(b)}{\Phi_{\mu,v}(b)}=\frac{\phi(a)}{\phi_{\mu,v}(a)} ~and\nonumber\\
&&\hspace{-1.5em}\hspace{2em}\frac{1-\Phi(b')}{\Phi_{\mu,v}(s)-\Phi_{\mu,v}(b')}=\frac{\phi(a')}{\phi_{\mu,v}(a')},\label{II}\\
&&\hspace{-1.5em}~{\rm (III)}\frac{\phi(x)}{\phi_{\mu,v}(x)}~is~monotonically~decreasing~on~(a,a').\label{III}\nonumber
\end{eqnarray}
Then such $b$ and $ b'$ satisfy the following inequality
\begin{eqnarray}
&&\sqrt{{\Phi}(b)} \sqrt{{\Phi}_{\mu,v}(b)}
+
\int_{b}^{b'}\sqrt{\phi(x)} \sqrt{\phi_{\mu,v}(x)}dx\nonumber\\
&& +\sqrt{1-{\Phi}(b')} \sqrt{{\Phi}_{\mu,v}(s)-{\Phi}_{\mu,v}(b')}\nonumber\\
&\le&{\cal F}\left(\frac{dA_{\mu,v}}{dx},\phi_{\mu,v}\right) +\epsilon.
\Label{lem.direct.ineq}
\end{eqnarray}
\end{lem}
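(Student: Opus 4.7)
The plan is to identify the LHS of (\ref{lem.direct.ineq}) with the fidelity ${\cal F}(d\tilde A/dx,N_{\mu,v})$ of an explicit piecewise auxiliary function built from $(b,b')$:
\begin{equation*}
\tilde A(x) := \begin{cases} \frac{\Phi(b)}{\Phi_{\mu,v}(b)}\Phi_{\mu,v}(x) & \text{if } x \le b \\ \Phi(x) & \text{if } b \le x \le b' \\ \Phi(b') + \frac{1-\Phi(b')}{\Phi_{\mu,v}(s)-\Phi_{\mu,v}(b')}\bigl(\Phi_{\mu,v}(x)-\Phi_{\mu,v}(b')\bigr) & \text{if } b' \le x \le s \\ 1 & \text{if } x \ge s. \end{cases}
\end{equation*}
On each of the four pieces, $d\tilde A/dx$ is either a constant multiple of $N_{\mu,v}$ or equal to $N$, so the four fidelity contributions evaluate, respectively, to $\sqrt{\Phi(b)\Phi_{\mu,v}(b)}$, $\int_{b}^{b'}\sqrt{N\,N_{\mu,v}}\,dx$, $\sqrt{(1-\Phi(b'))(\Phi_{\mu,v}(s)-\Phi_{\mu,v}(b'))}$, and $0$. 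Thus ${\cal F}(d\tilde A/dx,N_{\mu,v})$ equals the LHS.

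Next I would choose $(b,b')$ (and hence $(a,a')$) according to which branch of the hypothesis applies. When $v>1$ and $s>\Phi_{\mu,v}^{-1}(\Phi_{\mu,v}(\alpha_{\mu,v})/\Phi(\alpha_{\mu,v}))$, set $b=a=\alpha_{\mu,v}$ and $b'=a'=\beta_{\mu,v,s}$: condition (II) is then exactly the defining equations of $\alpha_{\mu,v}$ and $\beta_{\mu,v,s}$, (III) holds because $(\alpha_{\mu,v},\beta_{\mu,v,s})\subset\mathcal{S}_{\mu,v}$ by Lemmas \ref{sol1} and \ref{monotone}, and by construction $\tilde A$ coincides with $A_{\mu,v,s}$ on all of $\R$, so the inequality holds with equality (no $\epsilon$ needed). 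When $v<1$, or $v=1$ with $\mu>0$, I would set $b'=a'=\beta_{\mu,v,s}$ (which fulfills the second equation of (II) by Lemmas \ref{sol2} or \ref{sol0}) and take $b$ sufficiently large negative, with $a\in\mathcal{S}_{\mu,v}$ determined by the first equation of (II).

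For these latter cases $\tilde A$ agrees with $A_{\mu,v,s}$ on the whole interval $[b,\infty)$, so the fidelity difference reduces to
\begin{equation*}
{\cal F}(d\tilde A/dx,N_{\mu,v}) - {\cal F}(dA_{\mu,v,s}/dx,N_{\mu,v}) = \sqrt{\Phi(b)\Phi_{\mu,v}(b)} - \int_{-\infty}^{b}\sqrt{N(x)N_{\mu,v}(x)}\,dx,
\end{equation*}
which is non-negative by Cauchy--Schwarz and tends to $0$ as $b\to-\infty$ (both terms vanishing individually). Hence for $|b|$ large enough this difference is at most $\epsilon$, which gives the desired inequality.

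The main obstacle will be checking condition ($\star$) as $b$ is pushed arbitrarily negative, in particular the existence of $a\in\mathcal{S}_{\mu,v}$ with $a\le b$ solving $\Phi(b)/\Phi_{\mu,v}(b)=N(a)/N_{\mu,v}(a)$. Using the Mills-ratio asymptotics $\Phi(b)\sim N(b)/|b|$ and $\Phi_{\mu,v}(b)\sim v\,N_{\mu,v}(b)/|b-\mu|$ as $b\to-\infty$, one obtains $\Phi(b)/\Phi_{\mu,v}(b)\sim(|b-\mu|/(v|b|))\cdot N(b)/N_{\mu,v}(b)$, which is eventually strictly larger than $N(b)/N_{\mu,v}(b)$ both when $v<1$ and when $v=1$, $\mu>0$. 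Combined with the strict monotonicity of $N/N_{\mu,v}$ on $\mathcal{S}_{\mu,v}$ from Lemma \ref{monotone}, this forces the unique solution $a$ to lie strictly below $b$ inside $\mathcal{S}_{\mu,v}$, so that (I) and (III) both hold.
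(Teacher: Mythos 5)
Your proposal is correct and follows essentially the same route as the paper: in the $v>1$ case you take $b=a=\alpha_{\mu,v}$ and $b'=a'=\beta_{\mu,v,s}$ to get equality, and in the $v<1$ (and $v=1,\,\mu>0$) case you take $b'=a'=\beta_{\mu,v,s}$ and push $b$ to $-\infty$ to absorb the discrepancy into $\epsilon$. The only cosmetic differences are (i) you phrase the left-hand side as the fidelity of an explicit auxiliary function $\tilde A$ and track the exact discrepancy $\sqrt{\Phi(b)\Phi_{\mu,v}(b)}-\int_{-\infty}^{b}\sqrt{N N_{\mu,v}}\,dx$, whereas the paper simply bounds the first term by $\epsilon$ and discards the extension of the integral, and (ii) you locate $a$ via Mills-ratio asymptotics instead of the paper's appeal to the (Cauchy) mean value theorem; both arguments yield $a\le b$ with $a\in\mathcal{S}_{\mu,v}$.
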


\begin{proof}
First, we simultaneously treat the cases (i) $v<1$ and (ii) $v=1$ and $\mu>0$.
We take a constant $\lambda\in\R$ which satisfies $\lambda<\beta_{\mu,v}$ and $\sqrt{\Phi(\lambda)}\sqrt{\Phi_{\mu,v}(\lambda)}<\epsilon$.
We verify that $b=\lambda$ and $b'=\beta_{\mu,v}$ satisfy the condition ($\star$) in the following.
First, there exists a real number $a$ such that 
\begin{eqnarray}
\frac{\Phi(\lambda)}{\Phi_{\mu,v}(\lambda)}
=\frac{\Phi(\lambda) - \Phi(-\infty)}{\Phi_{\mu,v}(\lambda) - \Phi_{\mu,v}(-\infty)}
=\frac{\phi(a)}{\phi_{\mu,v}(a)}
\end{eqnarray}
and $a\le\lambda$ by the mean value theorem.
Moreover, since $\beta_{\mu,v}$ satisfies (\ref{threshold0}), $\beta_{\mu,v}$ can be taken as $a'=b'$.
Thus, the conditions (I) and (II) in ($\star$) hold.
Next, since $\frac{\phi(x)}{\phi_{\mu,v}(x)}$ is monotonically decreasing on $(\lambda,\beta_{\mu,v})$ from Lemma \ref{sol2} and Lemma \ref{monotone}, 
 the condition (III) in ($\star$) holds.
Therefore, $\lambda$ and $\beta_{\mu,v}$ satisfy the condition ($\star$).
Then the following holds: 
\begin{eqnarray}
&&\sqrt{\Phi(\lambda)}\sqrt{\Phi_{\mu,v}(\lambda)}
+\int_{\lambda}^{\beta_{\mu,v}}\sqrt{\phi(x)}\sqrt{\phi_{\mu,v}(x)}dx\nonumber\\
&&+\sqrt{1-\Phi(\beta_{\mu,v})}\sqrt{\Phi_{\mu,v}(s)-\Phi_{\mu,v}(\beta_{\mu,v})}\nonumber\\
&\le&\int_{-\infty}^{\beta_{\mu,v}}\sqrt{\phi(x)}\sqrt{\phi_{\mu,v}(x)}dx\nonumber\\
&&+\sqrt{1-\Phi(\beta_{\mu,v})}\sqrt{\Phi_{\mu,v}(s)-\Phi_{\mu,v}(\beta_{\mu,v})}
+\epsilon, \nonumber\\
&=&{\cal F}\left(\frac{dA_{\mu,v}}{dx}, \phi_{\mu,v}\right)+\epsilon.
\end{eqnarray}
Thus, the proof is completed for the case when (i) $v<1$ and (ii) $v=1$ and $\mu>0$.

Next, we treat the case when (iii) $v>1$ and $s> \Phi_{\mu,v}^{-1}\left(\frac{\Phi_{\mu,v}(\alpha_{\mu,v})}{\Phi(\alpha_{\mu,v})}\right)$.
Then we can  take as $a=b=\alpha_{\mu,v}$ and $a'=b'=\beta_{\mu,v,s}$ in ($\star$)
from Lemma \ref{sol1} and Lemma \ref{monotone}.
Then the following holds:
\begin{eqnarray*}
&&\hspace{-2em}\sqrt{\Phi(\alpha_{\mu,v})}\sqrt{\Phi_{\mu,v}(\alpha_{\mu,v})}
+\int_{\alpha_{\mu,v}}^{\beta_{\mu,v,s}}\sqrt{\phi(x)}\sqrt{\phi_{\mu,v}(x)}dx \nonumber\\
&&\hspace{-2em}+\sqrt{1-\Phi(\beta_{\mu,v,s})}\sqrt{\Phi_{\mu,v}(s)-\Phi_{\mu,v}(\beta_{\mu,v,s})}\nonumber\\
&\hspace{-2em}=&\hspace{-1.5em}{\cal F}\left(\frac{dA_{\mu,v}}{dx}, \phi_{\mu,v}\right).
\end{eqnarray*}
Thus, the proof is completed for the case when (iii) $v>1$ and $s> \Phi_{\mu,v}^{-1}\left(\frac{\Phi_{\mu,v}(\alpha_{\mu,v})}{\Phi(\alpha_{\mu,v})}\right)$.
\end{proof}

The following lemma is obvious by the definition of $A_{\mu,v}$.
\begin{lem}\Label{lem.direct2}
Suppose that  $\mu\in\R$ and $v>0$ satisfy
$v=1$ and $\mu\le0$, 
or $v>1$ and $s\le \Phi_{\mu,v}^{-1}\left(\frac{\Phi_{\mu,v}(\alpha_{\mu,v})}{\Phi(\alpha_{\mu,v})}\right)$.
Then,
the following equality holds
\begin{eqnarray}
{\cal F}\left(\frac{dA_{\mu,v}}{dx},\phi_{\mu,v}\right)
=\sqrt{{\Phi}_{\mu,v}(s)}.
\Label{lem.direct.ineq2}
\end{eqnarray}
\end{lem}

\subsection{Lemmas for Converse Part of Theorem \ref{Zform}}\Label{Zcon.app}

The following lemma is given as Lemma $15$ of \cite{KH13-2}.
\begin{lem}
\Label{naiseki2}
Let  $a=\{a_i\}_{i=0}^{I}$ and $b=\{b_i\}_{i=0}^{I}$ be probability distributions and satisfy $\frac{a_{i-1}}{b_{i-1}}>\frac{a_i}{b_i}$.
When $c=\{c_i\}_{i=0}^{I}$ is a probability distribution and satisfies 
\begin{eqnarray}
\sum_{i=0}^k a_k\le \sum_{i=0}^k c_k
\Label{gijimajo}
\end{eqnarray}
for any $k=0,1,...,I$,
the following holds:
\begin{eqnarray}
\sum_{i=0}^I \sqrt{a_i}\sqrt{b_i}\ge\sum_{i=0}^I \sqrt{c_i}\sqrt{b_i}.
\end{eqnarray}
Moreover, the equation holds for $c$ if and only if $c=a$.
\end{lem}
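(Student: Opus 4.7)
The plan is to combine the strict concavity of $\sqrt{\cdot}$ with an Abel summation driven by the monotonicity of the likelihood ratio. For each $i$, the tangent-line inequality for the concave function $x\mapsto\sqrt{x}$ at $a_i b_i$ (equivalently, tangency of $\sqrt{c_i b_i}$ in $c_i$ at $a_i$) gives $\sqrt{c_i b_i} \le \sqrt{a_i b_i} + \frac{1}{2}\sqrt{b_i/a_i}\,(c_i - a_i)$, with equality iff $c_i = a_i$. Summing over $i$, the target inequality $\sum_i\sqrt{a_i b_i}\ge\sum_i\sqrt{c_i b_i}$ reduces to the single weighted inequality $\sum_{i=0}^I \sqrt{b_i/a_i}\,(c_i - a_i) \le 0$.

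To establish this reduced inequality, I set $r_i := \sqrt{b_i/a_i}$ and $D_k := \sum_{i=0}^{k}(c_i - a_i)$, with the convention $D_{-1} := 0$. The hypothesis $a_{i-1}/b_{i-1} > a_i/b_i$ is precisely the statement that $r_i$ is strictly increasing in $i$; hypothesis (\ref{gijimajo}) gives $D_k \ge 0$ for every $k$; and the fact that $a$ and $c$ are probability distributions forces $D_I = 0$. An Abel summation then produces $\sum_{i=0}^I r_i(c_i - a_i) = \sum_{i=0}^I r_i (D_i - D_{i-1}) = r_I D_I - r_0 D_{-1} + \sum_{i=0}^{I-1}(r_i - r_{i+1})\,D_i$, and the boundary contributions vanish by the conventions above. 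Since every factor $r_i - r_{i+1}$ is strictly negative and every $D_i$ is nonnegative, the remaining sum is $\le 0$, which is exactly the reduced inequality.

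For the equality characterization, if the two fidelities coincide then both the concavity bound and the Abel bound must be saturated. Saturation of the Abel bound requires $D_i = 0$ for every $0 \le i \le I-1$, because each coefficient $r_i - r_{i+1}$ is strictly negative; combined with $D_{-1}=D_I=0$, this forces $c_i - a_i = D_i - D_{i-1} = 0$ for all $i$. The main obstacle I anticipate is the careful bookkeeping of the Abel summation (the sign convention for $D_{-1}$ and telescoping at $i = I$) and ensuring that the ratios $b_i/a_i$ are genuinely well-defined; the strict-monotonicity hypothesis essentially rules out vanishing $a_i$ at relevant indices, but I would verify this edge case cleanly in the writeup.
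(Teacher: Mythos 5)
Your argument is correct. The paper itself does not prove this lemma --- it imports it verbatim as Lemma 15 of \cite{KH13-2} --- so there is no in-paper proof to compare against; your tangent-line-plus-Abel-summation argument is a clean, self-contained substitute. The two steps fit together exactly as you claim: the concavity bound $\sqrt{c_i b_i}\le\sqrt{a_i b_i}+\tfrac12\sqrt{b_i/a_i}\,(c_i-a_i)$ reduces everything to $\sum_i r_i(c_i-a_i)\le 0$ with $r_i=\sqrt{b_i/a_i}$ strictly increasing, and the summation by parts with $D_{-1}=D_I=0$ and $D_k\ge 0$ closes it; saturation of the Abel bound alone already forces $D_k=0$ for all $k$ and hence $c=a$, so the equality case needs nothing further from the concavity step. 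The one edge case you flag is genuinely there but harmless: strict decrease of $a_i/b_i$ together with nonnegativity forces $a_i>0$ for all $i<I$, and if $a_I=0$ then $D_{I-1}=-c_I$ combined with $D_{I-1}\ge 0$ forces $c_I=0$, so the index $I$ contributes nothing to either side and can be discarded before running the argument. I would spell out that one paragraph in the final writeup, but the proof as proposed is sound.
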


\begin{lem}\Label{Zcon}
Suppose that  $\mu\in\R$ and $v>0$ satisfy
(i) $v<1$,
(ii) $v=1$ and $\mu>0$, 
or (iii) $v>1$ and $s> \Phi_{\mu,v}^{-1}\left(\frac{\Phi_{\mu,v}(\alpha_{\mu,v})}{\Phi(\alpha_{\mu,v})}\right)$.
When real numbers $b\le b'$ satisfy the condition {\rm ($\star$)} in Lemma \ref{lem.direct},
the following inequality holds:
\begin{eqnarray}
&&\sup_{A\in{\cal A}_{s}}{\cal F}\left(\frac{dA}{dx}, \phi_{\mu,v}\right)\nonumber\\
&&\le
\sqrt{\Phi(b)} \sqrt{\Phi_{\mu,v}(b)}
+
\int_{b}^{b'}\sqrt{\phi(x)} \sqrt{\phi_{\mu,v}(x)}dx \nonumber\\
&&~~~+\sqrt{1-\Phi(b')} \sqrt{\Phi_{\mu,b}(s)-\Phi_{\mu,v}(b')}.\Label{lem.con.ineq}
\end{eqnarray}
\end{lem}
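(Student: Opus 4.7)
The plan is as follows. For an arbitrary $A \in {\cal A}_{s}$ (for which $A' \equiv 0$ on $[s,\infty)$ since $A(s)=1$ and $A \le 1$) I split the fidelity integral $\int_{-\infty}^{s}\sqrt{A'(x)N_{\mu,v}(x)}\,dx$ into three pieces on $(-\infty,b]$, $[b,b']$, $[b',s]$. On the outer pieces I would apply Cauchy--Schwarz to get
\begin{eqnarray*}
\int_{-\infty}^{b}\sqrt{A'N_{\mu,v}}\,dx &\le& \sqrt{A(b)}\sqrt{\Phi_{\mu,v}(b)},\\
\int_{b'}^{s}\sqrt{A'N_{\mu,v}}\,dx &\le& \sqrt{1-A(b')}\sqrt{\Phi_{\mu,v}(s)-\Phi_{\mu,v}(b')}.
\end{eqnarray*}
Relative to the target bound, $\sqrt{A(b)}$ replaces $\sqrt{\Phi(b)}$ and $\sqrt{1-A(b')}$ replaces $\sqrt{1-\Phi(b')}$; by $A \ge \Phi$ these substitutions go in \emph{opposite} directions, so Cauchy--Schwarz alone cannot finish the argument. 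One must couple the three pieces.

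The coupling is achieved by applying Lemma \ref{naiseki2} to the middle piece together with two ``ghost'' cells that carry the outer masses. Pick a partition $b = x_0 < x_1 < \cdots < x_I = b'$ and set
\begin{eqnarray*}
&&a_0 = \Phi(b),\ a_i = \Phi(x_i)-\Phi(x_{i-1})\ (1\le i \le I),\ a_{I+1} = 1-\Phi(b'),\\
&&b_0 = \Phi_{\mu,v}(b),\ b_i = \Phi_{\mu,v}(x_i)-\Phi_{\mu,v}(x_{i-1}),\ b_{I+1} = \Phi_{\mu,v}(s)-\Phi_{\mu,v}(b'),\\
&&c_0 = A(b),\ c_i = A(x_i)-A(x_{i-1}),\ c_{I+1} = 1-A(b').
\end{eqnarray*}
Then $\{a_i\}$ and $\{c_i\}$ are probability distributions and $\{b_i\}$ becomes one after dividing by $\Phi_{\mu,v}(s)$ (a uniform rescaling that does not affect the conclusion of Lemma \ref{naiseki2}). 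The hypothesis $\sum_{i \le k} a_i \le \sum_{i \le k} c_i$ collapses to $\Phi(x_k) \le A(x_k)$ at each partition point, which is just $\Phi \le A$. For the ratio hypothesis $a_{i-1}/b_{i-1} > a_i/b_i$, the interior indices follow from the strict decrease of $N/N_{\mu,v}$ on $(a,a')\supseteq [b,b']$ guaranteed by (III) and Lemma \ref{monotone}, and the boundary indices use (II) to identify $a_0/b_0 = N(a)/N_{\mu,v}(a)$ and $a_{I+1}/b_{I+1} = N(a')/N_{\mu,v}(a')$, together with the bracketing $a \le b \le b' \le a'$ from (I).

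Invoking Lemma \ref{naiseki2} gives $\sum_i \sqrt{a_i b_i} \ge \sum_i \sqrt{c_i b_i}$. Letting the mesh of the partition tend to zero, the interior Riemann sums converge to $\int_b^{b'}\sqrt{N\,N_{\mu,v}}\,dx$ and $\int_b^{b'}\sqrt{A'N_{\mu,v}}\,dx$ respectively, and the boundary terms pass through unchanged, producing
\begin{eqnarray*}
&&\sqrt{A(b)}\sqrt{\Phi_{\mu,v}(b)} + \int_b^{b'}\sqrt{A'N_{\mu,v}}\,dx + \sqrt{1-A(b')}\sqrt{\Phi_{\mu,v}(s)-\Phi_{\mu,v}(b')}\\
&\le& \sqrt{\Phi(b)}\sqrt{\Phi_{\mu,v}(b)} + \int_b^{b'}\sqrt{N\,N_{\mu,v}}\,dx + \sqrt{1-\Phi(b')}\sqrt{\Phi_{\mu,v}(s)-\Phi_{\mu,v}(b')}.
\end{eqnarray*}
Adding the two outer Cauchy--Schwarz estimates absorbs the $\sqrt{A(b)}$ and $\sqrt{1-A(b')}$ terms on the left, and then taking the supremum over $A \in {\cal A}_s$ yields (\ref{lem.con.ineq}).

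The main obstacle will be the boundary strict-monotonicity check at $i = 1$ and $i = I+1$, especially in the extreme cases $a = b$ or $a' = b'$, where the augmented ratio is a limit of interior ratios and one must invoke the strict monotonicity of $N/N_{\mu,v}$ on the \emph{open} interval $(a,a')$ together with continuity at the endpoint in order to propagate strict inequality to the ghost cells. A smaller technicality is the passage to the Riemann limit, which is routine for continuous integrands but requires a brief uniform control on the approximation error across the partition.
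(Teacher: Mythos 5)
Your proposal is correct and follows essentially the same route as the paper's own proof of Lemma \ref{Zcon}: the same three-way split of the fidelity integral, Cauchy--Schwarz on the outer pieces, a partition of $[b,b']$ augmented by boundary cells so that Lemma \ref{naiseki2} applies (using (I)--(III) exactly as you state), and passage to the Riemann limit. Your handling of the normalization of $\{b_i\}$ and the endpoint strict-monotonicity check is in fact slightly more careful than the paper's terse treatment, but the underlying argument is the same.
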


\begin{proof}
We set a sequence $\{x_i^I\}_{i=0}^{I}$ for $I\in\N$ as $x_i^I:=b+\frac{b'-b}{I}i$.
Then, we have the following  for an arbitrary $A$ in ${\cal A}_s$ defined in Definition \ref{rn}:
\begin{eqnarray}
&&\hspace{0em}{\cal F}\left(\frac{dA}{dx},\phi_{\mu,v}\right) \nonumber\\
&\hspace{0em}=&
\hspace{0em}\int_{-\infty}^{b}\sqrt{\frac{dA}{dx}(x)} \sqrt{\phi_{\mu,v}(x)}dx
+\int_{b'}^{s}\sqrt{\frac{dA}{dx}(x)} \sqrt{\phi_{\mu,v}(x)}dx\nonumber\\
&&\hspace{0em}+\sum_{i=1}^{I} \int_{x_{i-1}^I}^{x_{i}^I}\sqrt{\frac{dA}{dx}(x)} \sqrt{\phi_{\mu,v}(x)}dx\nonumber\\
&\hspace{0em}\le&
\hspace{0em}\sqrt{A(b)} \sqrt{{\Phi}_{\mu,v}(b)} + \sqrt{1-A(b')} \sqrt{{\Phi}_{\mu,v}(s)-{\Phi}_{\mu,v}(b')} \Label{ine1}\\
&&\hspace{0em}+\sum_{i=1}^{I} \sqrt{A(x_i^I)-A(x_{i-1}^I)} \sqrt{\Phi_{\mu,v}(x_i^I)-\Phi_{\mu,v}(x_{i-1}^I)} \nonumber\\
&\hspace{0em}\le&
\hspace{0em}\sqrt{{\Phi}(b)} \sqrt{{\Phi}_{\mu,v}(b)} +\sqrt{1-{\Phi}(b')} \sqrt{{\Phi}_{\mu,v}(s)-{\Phi}_{\mu,v}(b')} \Label{ine2}\\
&&\hspace{0em}+\sum_{i=1}^{I} \sqrt{\Phi(x_i^I)-\Phi(x_{i-1}^I)} \sqrt{\Phi_{\mu,v}(x_i^I)-\Phi_{\mu,v}(x_{i-1}^I)} \nonumber\\
&\hspace{0em}=&
\hspace{0em}\sqrt{{\Phi}(b)} \sqrt{{\Phi}_{\mu,v}(b)} +\sqrt{1-{\Phi}(b')} \sqrt{{\Phi}_{\mu,v}(s)-{\Phi}_{\mu,v}(b')} \nonumber\\
&&\hspace{0em}+\sum_{i=1}^{I} \sqrt{\frac{\Phi(x_i^I)-\Phi(x_{i-1}^I)}{x_{i}^I-x_{i-1}^I}
} \sqrt{
\frac{\Phi_{\mu,v}(x_i^I)-\Phi_{\mu,v}(x_{i-1}^I)}{x_{i}^I-x_{i-1}^I}}
(x_{i}^I-x_{i-1}^I) \Label{riemannsum}
\end{eqnarray}
where 
the inequality (\ref{ine1}) is obtained from the Schwartz inequality and
the inequality (\ref{ine2}) is obtained from Lemmas \ref{monotone} and  \ref{naiseki2}.
Here, the mean value theorem guarantees the existence of $\bar{x}_i^I\in[x_{i-1},x_i]$ and ${\tilde{x}_i^I}\in[x_{i-1},x_i]$ for $2\le i\le I$ which satisfy
\begin{eqnarray}
\sqrt{\frac{\Phi(x_i^I)-\Phi(x_{i-1}^I)}{x_{i}^I-x_{i-1}^I} }
&=&\sqrt{\phi(\bar{x}_i^I)}, \Label{meanvalue1}\\
\sqrt{\frac{\Phi_{\mu,v}(x_i^I)-\Phi_{\mu,v}(x_{i-1}^I)}{x_{i}^I-x_{i-1}^I} }
&=&\sqrt{\phi_{\mu,v}({\tilde{x}_i^I})}\nonumber\\
&=& \sqrt{\phi_{\mu,v}(\bar{x}_i^I) + (\phi_{\mu,v}({\tilde{x}_i^I}) - \phi_{\mu,v}(\bar{x}_i^I))}\nonumber\\ 
&\le& \sqrt{\phi_{\mu,v}(\bar{x}_i^I)} + \sqrt{\phi_{\mu,v}({\tilde{x}_i^I}) - \phi_{\mu,v}(\bar{x}_i^I)}.
\Label{meanvalue2}
\end{eqnarray}
Thus,
\begin{eqnarray}
&&\sum_{i=1}^{I}\sqrt{\frac{\Phi(x_i^I)-\Phi(x_{i-1}^I)}{x_{i}^I-x_{i-1}^I}
} \sqrt{
\frac{\Phi_{\mu,v}(x_i^I)-\Phi_{\mu,v}(x_{i-1}^I)}{x_{i}^I-x_{i-1}^I}}
(x_{i}^I-x_{i-1}^I)\nonumber\\ 
&\le&
\sum_{i=1}^{I} \sqrt{\phi(\bar{x}_i^I)}  \sqrt{\phi_{\mu,v}(\bar{x}_i^I)} (x_{i}^I-x_{i-1}^I)\nonumber\\
&& + \sum_{i=1}^{I} \sqrt{\phi(\bar{x}_i^I)}  \sqrt{\phi_{\mu,v}({\tilde{x}_i^I}) - \phi_{\mu,v}(\bar{x}_i^I)}  (x_{i}^I-x_{i-1}^I)\Label{meanvalue}\\
&\overset{I\to\infty}{\longrightarrow}&
\int_{b}^{b'}\sqrt{\phi(x)} \sqrt{\phi_{\mu,v}(x)}dx + 0. \Label{riemann}
\end{eqnarray}
where (\ref{riemann}) follows from the Riemann integrability of the continuous function $\sqrt{\phi}\sqrt{\phi_{\mu,v}}$.
Therefore, (\ref{lem.con.ineq}) is obtained from (\ref{riemannsum}) and (\ref{riemann}).
\end{proof}

\begin{lem}\Label{lem.converse2}
%
The following inequality holds:
\begin{eqnarray}
\sup_{A\in{\cal A}_{s}}{\cal F}\left(\frac{dA}{dx}, \phi_{\mu,v}\right)
 \le \sqrt{{\Phi}_{\mu,v}(s)}.
\Label{lem.converse.ineq2}
\end{eqnarray}
\end{lem}
\begin{proof}
For $A\in{\cal A}_s$ and $x\ge s$, $\frac{dA}{dx}(x)=0$ holds.
Thus,
\begin{eqnarray}
\sup_{A\in{\cal A}_{s}}{\cal F}\left(\frac{dA}{dx}, \phi_{\mu,v}\right)
&=& \int_{-\infty}^{s}\sqrt{\frac{dA}{dx}(x)}\sqrt{\phi_{\mu,v}(x)}dx\\
&\le& \sqrt{\int_{-\infty}^{s}\frac{dA}{dx}(x)dx}\sqrt{\int_{-\infty}^{s}\phi_{\mu,v}(x)dx}\\
&=& \sqrt{A(s)-A(-\infty)}\sqrt{{\Phi}_{\mu,v}(s)-{\Phi}_{\mu,v}(-\infty)}\\
&\le& \sqrt{{\Phi}_{\mu,v}(s)},
\end{eqnarray}
where we used the Schwartz inequality in the first inequality and $A(s)=1$ for $A\in {\cal A}_s$.
\end{proof}

\subsection{Proof of of Theorem \ref{Zform}}\Label{Zform.app}

Let $A_{\mu,v,s}$ be the function defined in Subsection \ref{Zdir.app}.
When $\mu\in\R$ and $v>0$ satisfy
$v<1$,
or $v=1$ and $\mu>0$, 
or $v>1$ and $s> \Phi_{\mu,v}^{-1}\left(\frac{\Phi_{\mu,v}(\alpha_{\mu,v})}{\Phi(\alpha_{\mu,v})}\right)$,
Lemmas \ref{lem.direct} and \ref{Zcon} derives 
\begin{eqnarray}
\sup_{A\in{\cal A}_{s}}{\cal F}\left(\frac{dA}{dx}, \phi_{\mu,v}\right)
={\cal F}\left(\frac{dA_{\mu,v,s}}{dx}, \phi_{\mu,v}\right).
\Label{Zequal}
\end{eqnarray}
Similarly, when  $\mu\in\R$ and $v>0$ satisfy
$v=1$ and $\mu\le0$, 
or $v>1$ and $s\le \Phi_{\mu,v}^{-1}\left(\frac{\Phi_{\mu,v}(\alpha_{\mu,v})}{\Phi(\alpha_{\mu,v})}\right)$, 
Lemmas \ref{lem.direct2} and \ref{lem.converse2} derives (\ref{Zequal}).
From the direct calculation of the right hand side of (\ref{Zequal}),
we obtain the concrete form of the generalized Rayleigh-normal distribution as in Theorem \ref{Zform}.
\endproof

\subsection{Proof of Proposition \ref{cum}}\Label{cum.app}
%
%
First, we show that $Z_{v,s}(\mu)$ is monotonically increasing.
We define a shift operator $S_{\mu}$ for a map $A:\R\to\R$ by $(S_{\mu}A)(x):=A(x-\mu)$.
Then we have ${\cal F}(S_{\mu}p,S_{\mu}q)={\cal F}(p,q)$.
Thus when we define the set of functions $A:\R\to [0,1]$ as 
\begin{eqnarray*}
{\cal A}_{s}(\mu):=\left\{A\Big|
\begin{array}{l}
{\it continuously~differentiable~monotone}\\
{increasing},
~A(s)=1,
~\Phi_{\mu,1}\le A\le1
\end{array}
\right\},
\end{eqnarray*}
we obtain the following form of the Rayleigh-normal distribution function
\begin{eqnarray*}
Z_{v,s}(\mu)
&:=&1-\sup_{A\in{\cal A}_{s}(0)}{\cal F}\left(\frac{dA}{dx}, \phi_{\mu,v}\right)^2\\
&=&1-\sup_{A\in{\cal A}_{s}(0)}{\cal F}\left(S_{-\mu}\frac{dA}{dx}, S_{-\mu}\phi_{\mu,v}\right)^2\\
&=&1-\sup_{A\in{\cal A}_{s}(0)}{\cal F}\left(\frac{d(S_{-\mu}A)}{dx}, \phi_{0,v}\right)^2\\
&=&1-\sup_{\tilde{A}\in{\cal A}_{s-\mu}(-\mu)}{\cal F}\left(\frac{d\tilde{A}}{dx}, \phi_{0,v}\right)^2.
\end{eqnarray*}
For $\mu<\tau$,  ${\cal A}_{s-\mu}(-\mu)\supset{\cal A}_{s-\tau}(-\tau)$ holds, and thus we obtain $Z_{v}(\mu)\le Z_{v}(\tau)$.

Next we show $\displaystyle\lim_{\mu\to\infty}Z_{v,s}(\mu)=1$.
Since the Rayleigh-normal distribution function $Z_{v}$ is a cumulative distribution function as was shown in \cite{KH13-2}, we have
$\displaystyle\lim_{\mu\to\infty}Z_{v,s}(\mu)\ge \displaystyle\lim_{\mu\to\infty}Z_{v}(\mu)=1$
from (\ref{limit}).

Next we show $\displaystyle\lim_{\mu\to-\infty}Z_{v}(\mu)=0$.
Since the generalized Rayleigh-normal distribution function is monotonically increasing,
it is enough to show that for an arbitrary $\epsilon$ there exists $\mu_{\epsilon}$ such that 
\begin{eqnarray}
\sup_{A\in{\cal A}_{s}}{\cal F}\left(\frac{dA}{dx}, \phi_{\mu_{\epsilon},v}\right)\ge1-\epsilon.
\Label{mu-limit}
\end{eqnarray}
Let $M_{\epsilon}>0$ be a real number such that $\Phi_{0,v}(M_{\epsilon})-\Phi_{0,v}(-M_{\epsilon})\ge1-\epsilon$.
Then, it is easily verified that we can take $\mu_{\epsilon}\ll0$ which satisfies $\mu_{\epsilon}+M_{\epsilon}<s$ and
$\Phi_{\mu_{\epsilon},v}(x)>\Phi(x)$ on $(\mu_{\epsilon}-M_{\epsilon},\mu_{\epsilon}+M_{\epsilon})$.
Then it implies that there exists a function $A_{\epsilon}\in{\cal A}_s$ such that $A_{\epsilon}=\Phi_{\mu_{\epsilon},v}$ on $(\mu_{\epsilon}-M_{\epsilon},\mu_{\epsilon}+M_{\epsilon})$.
Thus, we obtain (\ref{mu-limit}) as follows:
\begin{eqnarray*}
\sup_{A\in{\cal A}_{s}}{\cal F}\left(\frac{dA}{dx}, \phi_{\mu_{\epsilon},v}\right)
&\ge& {\cal F}\left(\frac{dA_{\epsilon}}{dx}, \phi_{\mu_{\epsilon},v}\right)\\
&\ge& \int_{\mu_{\epsilon}-M_{\epsilon}}^{\mu_{\epsilon}+M_{\epsilon}} \phi_{\mu_{\epsilon},v}dx\\
&=&\Phi_{0,v}(M_{\epsilon})-\Phi_{0,v}(-M_{\epsilon})\\
&\ge&1-\epsilon.
\end{eqnarray*}

Finally, we show that $Z_{v,s}(\mu)$ is continuous.
From Lemmas \ref{sol2}, \ref{sol0}, \ref{sol1} and the implicit function theorem, $\alpha_{\mu,v}$ and $\beta_{\mu,v,s}$ are differentiable, especially continuous, with respect to $\mu$.
Thus, we can verify that 
$Z_v(\mu)$ is continuous from Theorem \ref{Zform}.
\endproof

\subsection{Proof of Proposition \ref{lim1}}\Label{lim1.app}
From the definition of $I_{\mu,v}$,
\begin{eqnarray*}
0
\le\lim_{v\to0}I_{\mu,v}(\beta_{\mu,v,s})
\le\lim_{v\to0}I_{\mu,v}(\infty)
=0.
\end{eqnarray*}
Thus, to derive $\displaystyle\lim_{v\to0}Z_{v,s}(\mu)$, 
it is enough to evaluate $\displaystyle\lim_{v\to0}\Phi(\beta_{\mu,v,s})$, $\displaystyle\lim_{v\to0}\Phi_{\mu,v}(s)$ and $\displaystyle\lim_{v\to0}\Phi_{\mu,v}(\beta_{\mu,v,s})$ in (\ref{Z<1}).

First, we treat the case when $\mu>s$.
Since
\begin{eqnarray}
0
\le\lim_{v\to\infty}\Phi_{\mu,v}(\beta_{\mu,v,s})
\le\lim_{v\to\infty}\Phi_{\mu,v}(s)
=0
\end{eqnarray}
from (\ref{conbeta}) and the condition $\mu>s$,
we obatian
\begin{eqnarray*}
\lim_{v\to0}Z_{v,s}(\mu)
=1.
\end{eqnarray*}

Next, we treat the case when $\mu\le s$.
Then we obtain the following equations as shown below:
\begin{eqnarray}
\lim_{v\to0}\Phi(\beta_{\mu,v,s})&=&\Phi(\mu),\Label{limbeta1}\\
\lim_{v\to0}\Phi_{\mu,v}(\beta_{\mu,v,s})&=&0.\Label{limbeta2}
\end{eqnarray}
Since it holds that
\begin{eqnarray}
\lim_{v\to0}\Phi_{\mu,v}(s)
=
\left\{
\begin{array}{cll}
1&{\it if}&\mu< s\\
\frac{1}{2} & {\it if} & \mu= s,
\end{array}
\right.
\Label{slim}
\end{eqnarray}
we obtain the following equation from (\ref{limbeta1}), (\ref{limbeta2}) and (\ref{slim}):
\begin{eqnarray*}
\lim_{v\to0}Z_{v,s}(\mu)
=
\left\{
\begin{array}{cll}
\Phi\left(\mu\right) & {\it if} & \mu< s\\
\frac{1}{2}(1+\Phi\left(\mu\right)) & {\it if} & \mu= s.
\end{array}
\right.
\end{eqnarray*}

In the following,
we derive (\ref{limbeta1}) and (\ref{limbeta2}).

To show (\ref{limbeta1}),
it is enough to show that $\lim_{v\to0}\beta_{\mu,v,s}=\mu$. 
Since $\beta_{\mu,v,s}<\frac{\mu}{1-v}$ holds from Lemma \ref{sol2}, 
we obtain ${\rm limsup}_{v\to0}\beta_{\mu,v,s}\le\mu$.
Next, we show ${\rm liminf}_{v\to0}\beta_{\mu,v,s}\ge\mu$.
Note that $\beta_{\mu,v,s}$ is the unique zero point of 
\begin{eqnarray}
f_{\mu,v,s}(x)=(\Phi_{\mu,v}(s)-\Phi_{\mu,v}(x))-(1-\Phi(x))\frac{\phi_{\mu,v}(x)}{\phi(x)}
\end{eqnarray}
as was stated in Proof of Lemma \ref{sol2}.
To derive ${\rm liminf}_{v\to0}\beta_{\mu,v,s}\ge\mu$,
it is enough to show that an arbitrary $x\in\R$ less than $\mu$ is not the zero point of $f_{\mu,v,s}$ when $v$ is close to $0$.
From $\lim_{v\to0}\phi_{\mu,v}(x)=0$, $\lim_{v\to0}\Phi_{\mu,v}(x)=0$ and (\ref{slim}),
the inequality $\lim_{v\to0}f_{\mu,v,s}(x)=\lim_{v\to0}\Phi_{\mu,v}(s)\ge 1/2$ holds. 
Therefore, $x$ is not a zero point of $f_{\mu,v,s}$ when $v$ is close to $0$.
Thus, we obtain $\lim_{v\to0}\beta_{\mu,v,s}=\mu$.

Then, we show (\ref{limbeta2}).
In order to show it, it is enough to prove that $\lim_{v\to0}\frac{\beta_{\mu,v,s}-\mu}{\sqrt{v}}=-\infty$ by the definition of $\Phi_{\mu,v}$.
Since $\beta_{\mu,v,s}<\frac{\mu}{1-v}$ and $\lim_{v\to0}\frac{\mu}{1-v}=\mu$, $\beta_{\mu,v,s}$ is bounded above by some constant $\gamma$ as $\beta_{\mu,v,s}<\gamma$ when $v$ is close to $0$, and then, we have the following inequality:
\begin{eqnarray}
\frac{\phi(\beta_{\mu,v,s})}{\phi_{\mu,v}(\beta_{\mu,v,s})}
= \frac{1-\Phi(\beta_{\mu,v,s})}{\Phi_{\mu,v}(s)-\Phi_{\mu,v}(\beta_{\mu,v,s})} 
\ge 1-\Phi(\beta_{\mu,v,s})
= \Phi(-\beta_{\mu,v,s})
\ge \Phi(-\gamma).
\end{eqnarray}
Thus, the following holds:
\begin{eqnarray}
&&\hspace{-1em}2\log \Phi(-\gamma)\nonumber\\
&\hspace{-1.5em}\le&\hspace{-1em}2\mathrm{log}\frac{\phi(\beta_{\mu,v,s})}{\phi_{\mu,v}(\beta_{\mu,v,s})}\nonumber\\
&\hspace{-1.5em}=&\hspace{-1em}(1-v)
\left(
\frac{\beta_{\mu,v,s} - \frac{\mu}{1-v}}{\sqrt{v}}
\right)^2
+\mathrm{log}v
-\frac{\mu^2}{1-v}.
\Label{inequality12}
\end{eqnarray}
Since $-\infty < 2\log \Phi(-\gamma)$ and $\lim_{v\to0}\log v =-\infty$, 
we have 
\begin{eqnarray}
\lim_{v\to0}\left(\frac{\beta_{\mu,v,s} - \frac{\mu}{1-v}}{\sqrt{v}}\right)^2
=\infty.
\end{eqnarray}
Since Lemma \ref{sol2} guarantees that
$\beta_{\mu,v,s}<\frac{\mu}{1-v}$, 
we obtain
\begin{eqnarray}
\lim_{v\to0}\frac{\beta_{\mu,v,s}-\mu}{\sqrt{v}}
=\lim_{v\to0}\frac{\beta_{\mu,v,s} - \frac{\mu}{1-v}}{\sqrt{v}}
=-\infty.
\end{eqnarray}
\endproof

\subsection{Proof of Proposition \ref{lim2}}\Label{lim2.app}
From (\ref{Z>1}) of Theorem \ref{Zform},
the generalized Rayleigh-normal distribution function $Z_{v,\sqrt{v}s}(\sqrt{v}\mu)$ has two different forms
depending on the sign of $s-\sqrt{v}^{-1}\Phi_{\sqrt{v}\mu,v}^{-1}(\frac{\Phi_{\sqrt{v}\mu,v}(\alpha_{\sqrt{v}\mu,v})}{\Phi(\alpha_{\sqrt{v}\mu,v})})$.
To analyze the sign in the limit $v\to\infty$, 
we first see the behaviour of $\alpha_{\sqrt{v}\mu,v}$.
When $1<v$, the equation with respect to $x$
\begin{eqnarray}\Label{threshold-1}
\frac{1-\Phi\left(x\right)}{1-\Phi_{\mu,v}(x)}
=\frac{\phi(x)}{\phi_{\mu,v}(x)}
\end{eqnarray}
has the unique solution $\beta_{\mu,v}$  and 
the following equation holds from (19) of \cite{KH13-2}:
\begin{eqnarray}
\alpha_{\sqrt{v}\mu,v}
=\sqrt{v}(\mu-\beta_{\mu,1/v}).
\end{eqnarray}
Then, from (31) of \cite{KH13-2}, we obtain
\begin{eqnarray}
\lim_{v\to\infty}\alpha_{\sqrt{v}\mu,v}
&=&\lim_{v\to\infty}\sqrt{v}(\mu-\beta_{\mu,1/v})
=\lim_{v\to0}\frac{\mu-\beta_{\mu,v}}{\sqrt{v}}
=\infty,
\Label{al1}
\end{eqnarray}
Similarly, from (26) of \cite{KH13-2}, we obtain
\begin{eqnarray}
\lim_{v\to\infty}\frac{\alpha_{\sqrt{v}\mu,v}}{\sqrt{v}}
&=&\lim_{v\to\infty}\mu-\beta_{\mu,1/v}
=\lim_{v\to0}\mu-\beta_{\mu,v}
=0.
\Label{al2}
\end{eqnarray}
Then we have 
\begin{eqnarray}
\lim_{v\to\infty}\sqrt{v}^{-1}\Phi_{\sqrt{v}\mu,v}^{-1}\left(\frac{\Phi_{\sqrt{v}\mu,v}(\alpha_{\sqrt{v}\mu,v})}{\Phi(\alpha_{\sqrt{v}\mu,v})}\right)=0,
\end{eqnarray}
and thus,
the form of the generalized Rayleigh-normal distribution function $Z_{v,\sqrt{v}s}(\sqrt{v}\mu)$ is determined according to the sign of $s$ when $v\to\infty$.
When $s\le0$,
\begin{eqnarray}
\lim_{v\to\infty}Z_{v,\sqrt{v}s}(\sqrt{v}\mu)
&=&\displaystyle\lim_{v\to\infty} 1-\Phi_{\sqrt{v}\mu,v}(\sqrt{v}s) \nonumber\\
&=&\Phi(\mu-s). 
\Label{vtoinfty1}
\end{eqnarray}
Next we treat the case when $s>0$.
From the inequality $\alpha_{\sqrt{v}\mu,v}\le\beta_{\sqrt{v}\mu,v,\sqrt{v}s}$ of Lemma \ref{sol1} and (\ref{al1}),
\begin{eqnarray*}
\lim_{v\to\infty}\Phi(\alpha_{\sqrt{v}\mu,v})
=\lim_{v\to\infty}\Phi(\beta_{\sqrt{v}\mu,v,\sqrt{v}s})
=1.
\end{eqnarray*}
From  (\ref{al2}),
\begin{eqnarray*}
\lim_{v\to\infty} \Phi_{\sqrt{v}\mu,v}(\alpha_{\sqrt{v}\mu,v})
=\Phi(-\mu).
\end{eqnarray*}
From the definition of $I_{\mu,v}$,
\begin{eqnarray*}
\lim_{v\to\infty}I_{\sqrt{v}\mu,v}(\beta_{\sqrt{v}\mu,v,\sqrt{v}s}) 
=\lim_{v\to\infty} I_{\sqrt{v}\mu,v}(\alpha_{\sqrt{v}\mu,v})
=0.
\end{eqnarray*}
Thus, when $s>0$,
\begin{eqnarray}
\lim_{v\to\infty}Z_{v,\sqrt{v}s}(\sqrt{v}\mu)
=\Phi(\mu).
\Label{vtoinfty2}
\end{eqnarray}
From (\ref{vtoinfty1}) and (\ref{vtoinfty2}), the proof is completed.
\endproof

\subsection{Proof of Lemma \ref{opt trans}}\Label{opt trans.app}

We set as  
\begin{eqnarray*}
&&\tilde{F}^{\cal M}(P\to Q|M)\\
&:=&\sup_{P'' \in \mathcal{P}(\calY)}
\left\{F(P'', Q)\Bigg|
\begin{array}{l}
\exists P'\in\mathcal{P}(\N_M), \\
P\prec P'\prec P''
\end{array}
\right\}
\end{eqnarray*}
where $\N_M:=\{1,...,M\}$.
Then, it satisfies 
\begin{eqnarray}
F^{\cal M}(P\to Q|N)=\tilde{F}^{\cal M}(P\to Q|2^N).
\end{eqnarray}
Thus, to prove Lemma \ref{opt trans}, it is enough to show the equality
\begin{eqnarray}
\tilde{F}^{\cal M}(P\to Q|M)=F^{\cal M}({\cal C}_{M}(P) \to Q)
\Label{op}
\end{eqnarray}
for an arbitrary $M\in\N$.

Because of Lemma \ref{opt con},
$P\prec {\cal C}_{M}(P)$ and ${\cal C}_{M}(P)\in\calP(\N_M)$ hold.
Thus, from the definition of $\tilde{F}^{\cal M}(P\to Q|M)$, 
we have 
\begin{eqnarray}
\tilde{F}^{\cal M}(P\to Q|M)
\ge F^{\cal M}({\cal C}_{M}(P) \to Q).
\Label{op2}
\end{eqnarray}
Then, we show 
\begin{eqnarray}
\tilde{F}^{\cal M}(P\to Q|M)
\le F^{\cal M}({\cal C}_{M}(P) \to Q).
\Label{op3}
\end{eqnarray}
To prove (\ref{op3}), 
it is enough to prove that ${\cal C}_M(P)\prec P'$ for an arbitrary $P'\in \calP(\N_M)$ such that $P\prec P'$.
Without loss of generality, 
we assume that $P' = P'^{\downarrow}$.
Here, we use the inductive method.
When $M=1$, then (\ref{op3})  holds for any probability distribution $P$. 
Let us assume that (\ref{op3}) holds for any $P$ when $M=k-1$.
In the following, we show that (\ref{op}) holds for any $P$ when $M=k$.
When $J_{P, k}=1$, 
${\cal C}_k(P)$ equals the uniform distribution $U_k$ on $\N_k$ and satisfies ${\cal C}_k(P)=U_k\prec P'$.

Let $J_{P, k}\ge2$ in the following.
There exists $Q'$ which satisfies 
\begin{eqnarray}
P\prec Q'\prec P'~{\rm and} ~Q'(1)=P^{\downarrow}(1)
\Label{Q'}
\end{eqnarray}
as shown below.
Then,
$P^{\downarrow}|_{\{2, ..., M\}}\prec Q'|_{\{2, ..., M\}}$ holds since $P\prec Q'$.
By the assumption of the inductive method, 
$\frac{1}{C}{\cal C}_k(P)|_{\{2, ..., M\}}\prec \frac{1}{C'}Q'_{\{2, ..., M\}}$ where $C=\sum_{i=2}^{M}{\cal C}_k(P)(i)$ and $C'=\sum_{i=1}^M Q'(i)$ are normalizing constants.
Thus, it follows that ${\cal C}_k(P)\prec Q' \prec P'$.

All we have to do is to show the existence of $Q'$ which satisfies (\ref{Q'}).
When $P'(1)=P^{\downarrow}(1)$, 
we can take as $Q'=P'$.
When $P'(1)>P^{\downarrow}(1)$, 
let $l_0:=\max\{ l \in\{1,...,M\}|P'(1)=P'(l)\}$ and $\omega:=\sum_{l=1}^{l_0}(P'(l)-P^{\downarrow}(1))$.
Moreover, we define the set $K$ by $\{l\in\{1,...,M\}|P'(l)<P^{\downarrow}(l)\}=\{l_1, ..., l_m\}$ where $l_i\le l_{i+1}$ and determine $r_0\in K$ by the condition 
\begin{eqnarray}
\sum_{i=1}^{r_0-1}(P^{\downarrow}(l_i)-P'(l_i))<\omega\le \sum_{i=1}^{r_0}(P^{\downarrow}(l_i)-P'(l_i)).
\Label{r0}
\end{eqnarray}
By using those notations, 
we set a probability distribution $Q'$ by 
\begin{eqnarray*}
&Q'(l)=&\hspace{-0.5em}\left\{
\begin{array}{l}
P^{\downarrow}(1) \hspace{1.1em} \textit{if}~1\le l \le l_0 \\
P^{\downarrow}(l)  \hspace{1.2em} \textit{if}~l=l_1, ..., l_{r_0-1} \\
P'(l_{r_0})+\omega-\displaystyle\sum_{i=1}^{r_0-1}(P^{\downarrow}(l_i)-P'(l_i)) ~~ \textit{if}~l=l_{r_0} \\
P'(l) \hspace{1.2em} {otherwise}. 
\end{array}
\right.
\end{eqnarray*}
Then, $Q'(1)=P^{\downarrow}(1)$ by the definition and
we can verify $Q'=Q'^{\downarrow}$.

We show $P\prec Q'$.
For $1\le l\le l_0$, 
we have $Q'(l) = P^{\downarrow}(1)\ge P^{\downarrow}(l)$.
For $l_0< l\le l_{r_0}-1$ and $l\in K$, 
we have $Q'(l) = P^{\downarrow}(l)$.
For $l_0< l\le l_{r_0}-1$ and $l\notin K$, 
we have $Q'(l) = P'(l) \ge P^{\downarrow}(l)$
by the definition of $K$.
For $l=l_{r_0}$, 
we have $Q'(l) \le P'(l_{r_0})+\omega-\displaystyle\sum_{i=1}^{r_0-1}(P^{\downarrow}(l_i)-P'(l_i)) \le P^{\downarrow}(l_{r_0})$
by (\ref{r0}). 
Thus, when $1\le k\le l_{r_0}$,
we obtain
\begin{eqnarray*}
\sum_{l=1}^k Q'(l) 
\ge \sum_{l=1}^k P^{\downarrow}(l).
\end{eqnarray*}
Moreover, when $l_{r_0}\le k$,
we obtain
\begin{eqnarray*}
\sum_{l=1}^k Q'(l) 
= \sum_{l=1}^{k} P'(l) 
\ge \sum_{l=1}^k P^{\downarrow}(l),
\end{eqnarray*}
where we used $\sum_{l=1}^{l_{r_0}} Q'(l) = \sum_{l=1}^{l_{r_0}} P'(l)$, $Q'(l)= P'(l)$ for $l>l_{r_0}$ and $P\prec P'$.
From the above discussion,
we obtain $P\prec Q'$.

Next, we show $Q'\prec P'$.
When $1\le l\le l_{0}$,
$Q'(l)=P^{\downarrow}(1)<P'(1)=P'(l)$ holds.
Thus,
when $1\le k\le l_{0}$,
we obtain
\begin{eqnarray}
\sum_{l=1}^k Q'(l) 
\le \sum_{l=1}^k P'(l).
\Label{majo1}
\end{eqnarray}
When $l_{0}< k\le l_{r_0}-1$,
\begin{eqnarray}
\sum_{l=1}^k P'(l) - \sum_{l=1}^k Q'(l) 
&=& \sum_{l=1}^{l_0} (P'(l) - P^{\downarrow}(1)) - \sum_{i=1}^{r-1}(P^{\downarrow}(l_i)-P'(l_i))\\
&=&\omega - \sum_{i=1}^{r-1}(P^{\downarrow}(l_i)-P'(l_i)),
\Label{diff1}
\end{eqnarray}
where $r$ is defined by $l_{r-1}\le k \le l_{r}-1$.
Since $l_{0}< k\le l_{r_0}-1$,
$r\le r_0$ holds.
Thus,
the right hand side of (\ref{diff1}) is non-negative by (\ref{r0})
and $(\ref{majo1})$ holds for $l_{0}< k\le l_{r_0}-1$.
Moreover, 
$(\ref{majo1})$ holds for $l_{r_0}\le k$
since $\sum_{l=1}^k Q'(l) 
= \sum_{l=1}^{k} P'(l)$.
From the above discussion,
we obtain $Q'\prec P'$.
\endproof

\subsection{Proof of Proposition \ref{opt trans cor}}\Label{opt trans cor.app}

Let $m\ge n$.
Then, the size of storage is greater than or equal to the size of support of the source distribution $U_N^n$, and thus the performances of deterministic (or majorization) conversions via storage and that without storage coincide with each other.
Thus, we have 
\begin{eqnarray}
L^{\cal D}_n(U_N, Q|\nu, m\log N)
&=& L^{\cal D}_{n}(U_N, Q|\nu),\\
L^{\cal M}_n(U_N, Q|\nu, m\log N)
&=& L^{\cal M}_{n}(U_N, Q|\nu). 
\end{eqnarray}

Next, let $m\le n$.
Then, $U_N^m$ on the storage with size $N^m$ can be converted from $U_N^n$ by deterministic and majorization conversion.
Thus, we have 
\begin{eqnarray}
L^{\cal D}_n(U_N, Q|\nu, m\log N)
&\ge& L^{\cal D}_{m}(U_N, Q|\nu),\\
L^{\cal M}_n(U_N, Q|\nu, m\log N)
&\ge& L^{\cal M}_{m}(U_N, Q|\nu). 
\end{eqnarray}
Moreover, since any probability distribution on a set with size $N^m$ can be converted from $U_N^n$ by majorization conversion.
Therefore we have 
\begin{eqnarray}
L^{\cal M}_n(U_N, Q|\nu, m\log N)
&\le& L^{\cal M}_{m}(U_N, Q|\nu). 
\end{eqnarray}
\endproof

\subsection{Proof of Proposition \ref{opt trans cor2}}\Label{opt trans cor2.app}

When 
$m \ge L^i_n(P, U_N|\nu) $,
the equation
\begin{eqnarray}
L^i_n(P, U_N|\nu, m\log N)
=L^i_n(P, U_N|\nu)
\end{eqnarray}
 holds by the definition.

Let $m \le L^{\cal M}_n(P, U_N|\nu) $.
Then, by the definition,
\begin{eqnarray}
m 
\le  L^{\cal D}_n(P, U_N|\nu, m\log N). 
\end{eqnarray}
Moreover,
since any probability distribution on a set with size $N^m$ can be converted from $U_N^n$ by majorization conversion,
we obtain
\begin{eqnarray}
 L^{\cal D}_n(P, U_N|\nu, m\log N) 
&\le& L^{\cal M}_n(P, U_N|\nu, m\log N) \nonumber\\ 
&\le& L^{\cal M}_n(U_N^m, U_N|\nu) \nonumber\\
&\le& m - 2\log_N \nu.
\end{eqnarray}
where the first inequality follows from (\ref{fidelity-ineq2}).
\endproof

\subsection{Proof of Lemma \ref{simulate}}\Label{simulate.app}
Since $\{(S'_n,T'_n)\}$ is simulated by $\{(S_n,T_n)\}$,
there exists a sequence of $0<a_n\le1$ such that $S'_n=S_{a_nn}$ and $T'_n=T_{a_nn}$.
From the $\nu$-achievability of $\{(S_n,T_n)\}$,
we have the following inequality:
\begin{eqnarray*}
&&\liminf_{m\to\infty} F^i(P^m\to Q^{T'_{m}}|S'_{m})\nonumber\\
&=&\liminf_{m\to\infty} F^i(P^m\to Q^{T_{a_mm}}|S_{a_mm})\nonumber\\
&=&\liminf_{m\to\infty} F^i(P^{\frac{n_m}{a_m}}\to Q^{T_{n_m}}|S_{n_m})\\
&\ge&\liminf_{m\to\infty} F^i(P^{n_m}\to Q^{T_{n_m}}|S_{n_m})\\
&\ge&\liminf_{n\to\infty} F^i(P^{n}\to Q^{T_{n}}|S_{n})\\
&\ge&\nu,
\end{eqnarray*}
where $n_m:=a_mm$ and $i={\cal D}$ or ${\cal M}$.
\endproof

\subsection{Proof of Theorem \ref{region1}}\Label{region1.app}

We prepare the following lemma.
\begin{lem}\Label{transition'}
Let $\{M_n\}_{n\in\N}$ be a sequence of natural numbers.
Let $\{P_n\}_{n\in\N}$ and $\{P'_n\}_{n\in\N}$ be sequences of probability distributions on $\{\calX_n\}_{n\in\N}$ and  $\{\calX'_n\}_{n\in\N}$, respectively.
Suppose that 
there exists a sequence of deterministic conversion $W_n:\calX_n\to\calX'_n$ such that 
\begin{eqnarray}
\liminf_{n\to \infty} F(W_n(P_n), P'_n) 
= 1.
\Label{trans1'}
\end{eqnarray}
Then, 
for an arbitrary sequence $\{Q_n\}_{n\in\N}$ of probability distributions on $\{\calY_n\}_{n\in\N}$
and  arbitrary deterministic conversions  $W'_n:\calX'_n\to\calY_n$,
 the following holds:
\begin{eqnarray}
\liminf_{n\to \infty} F(W'_n\circ W_n(P_n), Q_n)
\ge\liminf_{n\to \infty} F(W'_n(P'_n), Q_n).
\Label{trans2'}
\end{eqnarray}
\end{lem}
\begin{proof}
Using the Hellinger distance $d_H(\cdot, \cdot)=\sqrt{1-F(\cdot, \cdot)}$,
we have the following inequalities:
\begin{eqnarray}
&&\limsup_{n\to \infty} d_H(W'_n\circ W_n(P_n), Q_n)\nonumber\\
&\le& \limsup_{n\to \infty} d_H(W'_n\circ W_n(P_n), W'_n(P'_n)) + \limsup_{n\to \infty} d_H(W'_n(P'_n), Q_n)
\Label{triangle}\\
&\le& \limsup_{n\to \infty} d_H(W_n(P_n), P'_n) + \limsup_{n\to \infty} d_H(W'_n(P'_n), Q_n)
\Label{monotone}\\
&=&\limsup_{n\to \infty} d_H(W'_n(P'_n), Q_n),
\Label{lasteq}
\end{eqnarray}
where 
(\ref{triangle}) and (\ref{monotone}) follow from the triangle inequality and the monotonicity of the Hellinger distance, respectively,
and (\ref{lasteq}) follows from (\ref{trans1'})
From the definition of the Hellinger distance,
we obtain (\ref{trans2'}).
\end{proof}

From Lemma \ref{transition'},
we have the following lemma.
\begin{lem}\Label{transition}
Let $\{M_n\}_{n\in\N}$ be a sequence of natural numbers
and $\{P'_n\}_{n\in\N}$ be a sequence of probability distributions on $\{0,1\}^{{M_n}}$.
Suppose that a sequence $\{P_n\}_{n\in\N}$ of probability distributions on $\{\calX_n\}_{n\in\N}$ satisfies
\begin{eqnarray}
\liminf_{n\to \infty} F^{\cal D}(P_n\to P'_n) 
= 1.
\Label{trans1}
\end{eqnarray}
Then, the following holds for an arbitrary sequence $\{Q_n\}_{n\in\N}$ of probability distributions on $\{\calY_n\}_{n\in\N}$:
\begin{eqnarray}
\liminf_{n\to \infty} F^{\cal D}(P_n\to Q_n| M_n)
\ge \liminf_{n\to \infty} F^{\cal D}(P'_n\to Q_n).
\Label{trans2}
\end{eqnarray}
\end{lem}

First, we prove the direct part of Theorem \ref{region1}.
Let $s_1\ge H(P)$.
From the results about the asymptotic maximal fidelity in \cite{KH13-2}, 
when $\epsilon$ is in $(0,1/2)$,
\begin{eqnarray}
\liminf_{n\to \infty} F^{\cal D}(P^n\to U_2^{H(P)n-n^{1/2+\epsilon/2}}) 
= 1.
\end{eqnarray}
Thus, using Lemma \ref{transition},
\begin{eqnarray*}
&&\lim_{n\to \infty} F^{\cal D}(P^n\to Q^{\frac{H(P)}{H(Q)}n-n^{1/2+\epsilon}}|{s_1n})
\nonumber\\
&\ge&\lim_{n\to \infty} F^{\cal D}(U_2^{H(P)n-n^{1/2+\epsilon/2}}\to Q^{\frac{H(P)}{H(Q)}n-n^{1/2+\epsilon}})
=1
\end{eqnarray*}
holds.
Thus, a first-order achievable rate $t_1$ satisfies $t_1\ge \frac{H(P)}{H(Q)}$.
Next, let $s_1< H(P)$.
From the results about the asymptotic maximal fidelity in \cite{KH13-2}, 
\begin{eqnarray}
\liminf_{n\to \infty} F^{\cal D}(P^n\to U_2^{s_1n}) 
= 1.
\end{eqnarray}
Thus, using Lemma \ref{transition},
\begin{eqnarray*}
&&\lim_{n\to\infty} F^{\cal D}(P^n\to Q^{\frac{s_1}{H(Q)}n-n^{1/2+\epsilon}}|{s_1n})\nonumber\\
&\ge& \lim_{n\to\infty} F^{\cal D}(U_2^{s_1n}\to Q^{\frac{s_1}{H(Q)}n-n^{1/2+\epsilon}})=1
\end{eqnarray*}
holds. 
Thus, a first-order achievable rate $t_1$ satisfies $t_1\ge \frac{s_1}{H(Q)}$.

Then, we prove the converse part.
Let $s_1\ge H(P)$.
From the results about the asymptotic maximal fidelity in \cite{KH13-2}, 
when $\epsilon$ is in $(0,1/2)$,
\begin{eqnarray*}
&&\lim_{n\to \infty} F^{\cal M}(P^n\to Q^{\frac{H(P)}{H(Q)}n+n^{1/2+\epsilon}}|{s_1n})
\nonumber\\
&\le&\lim_{n\to \infty} F^{\cal M}(P^n\to Q^{\frac{H(P)}{H(Q)}n+n^{1/2+\epsilon}})
=0
\end{eqnarray*}
holds.
Thus, a first-order achievable rate $t_1$ satisfies $t_1\le \frac{H(P)}{H(Q)}$.
Next, let $s_1< H(P)$.
Then, 
\begin{eqnarray*}
&&\lim_{n\to\infty} F^{\cal M}(P^n\to Q^{\frac{s_1}{H(Q)}n+n^{1/2+\epsilon}}|{s_1n})\nonumber\\
&\le& \lim_{n\to\infty} F^{\cal M}(U_2^{s_1n}\to Q^{\frac{s_1}{H(Q)}n+n^{1/2+\epsilon}})=0
\end{eqnarray*}
holds,
where we used the fact that an arbitrary distribution on the storage $\{0,1\}^{s_1n}$ can be converted from $U_2^{s_1n}$. 
Thus, a first-order achievable rate $t_1$ satisfies $t_1\le \frac{s_1}{H(Q)}$.
\endproof

\subsection{Proof of Lemma \ref{simulate-2nd}}\Label{simulate-2nd.app}

We set as $S_n=s_1n +s_2\sqrt{n}$, $T_n=t_1n +t_2\sqrt{n}$, $S'_n=s_1n +s'_2\sqrt{n}$, $T'_n=t_1n +t'_2\sqrt{n}$.

First, we show the ``only if" part.
Since $(s_2,t_2)$ simulates $(s'_2,t'_2)$,
there exists $0 < a_n \le 1$ such that
\begin{eqnarray}
&&(S'_n, T'_n) = (S_{a_nn}, T_{a_nn})\\
&\Leftrightarrow &
s_1n +s'_2\sqrt{n} = s_1(a_nn) +s_2\sqrt{a_nn},\\
&&t_1n +t'_2\sqrt{n} = t_1(a_nn) +t_2\sqrt{a_nn}\\
&\Leftrightarrow &
s_1(1-a_n) \sqrt{n}= s_2\sqrt{a_n}-s_2', \Label{eq-s}\\
&&t_1(1-a_n)\sqrt{n} = t_2\sqrt{a_n}-t_2'. \Label{eq-t}
\end{eqnarray}
Then we obtain $\lim _{n\to\infty}a_n=1$ by taking the limit $n\to\infty$ since the right hand sides of (\ref{eq-s}) and (\ref{eq-t}) are finite.
In addition, we also obtain $s_2\ge s'_2$ since $\lim_{n\to\infty}a_n=1$ and the left-hand side of (\ref{eq-s}) is non-negative because of $a_n\le1$.
Since (\ref{eq-s}) is equivalent with 
\begin{eqnarray}
t_1(1-a_n)\sqrt{n}= \frac{t_1}{s_1}(s_2\sqrt{a_n}-s_2'), \Label{eq-s2}
\end{eqnarray}
we obtain the equation (\ref{eq-st}) by (\ref{eq-t}) and (\ref{eq-s2}).

Next, we show the ``if" part. 
We can give the concrete value of $a_n$ from the quadratic equation with respect to $\sqrt{a_n}$:
\begin{eqnarray}
t_1(1-a_n)\sqrt{n}= t_2\sqrt{a_n}-t_2'. \Label{eq-t2}
\end{eqnarray}
From the assumption (\ref{eq-st}), the same $\sqrt{a_n}$ satisfies 
\begin{eqnarray}
s_1(1-a_n)\sqrt{n}= s_2\sqrt{a_n}-s_2'. 
\Label{eq-s3}
\end{eqnarray}
Thus, we obtain (\ref{eq-s}) and (\ref{eq-t}) and  the proof is completed.
\endproof

\subsection{Proof of Theorem \ref{region2-non-admissible}}\Label{region2-non-admissible.app}

To prove Theorem \ref{region2-non-admissible},
we prepare the following lemma which was given in the subsection $4.2$ of \cite{KH13-2}.
\begin{lem}\Label{lem.uni}
When $P$ and  $Q$ are non-uniform distributions,
the following equations hold for $i={\cal D}$ and ${\cal M}$:
\begin{eqnarray*}
\lim_{n\to\infty}F^{i}(U_2^{n} \to Q^{\frac{1}{H(Q)}n+t_2 \sqrt{n}})
&=&\sqrt{\Phi\left(-\sqrt{\frac{H(Q)^3}{V(Q)}}t_2\right)},\\
\lim_{n\to\infty}F^{i}(P^{n} \to U_2^{H(P)n+t_2 \sqrt{n}})
&=&\sqrt{\Phi\left(-\frac{t_2}{\sqrt{V(P)}}\right)}.
\end{eqnarray*}
\end{lem}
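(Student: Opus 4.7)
The plan is to follow Subsection 4.2 of \cite{KH13-2}, from which this lemma is cited. In each identity one ``uniform endpoint'' admits an explicit expression for $F^{\mathcal{M}}$, after which the central limit theorem applied to the empirical self-information of the non-uniform endpoint produces the stated Gaussian limit. Since (\ref{fidelity-ineq}) already gives $F^{\mathcal{D}} \le F^{\mathcal{M}}$, the converse bound needs only $F^{\mathcal{M}}$; for the direct bound I would construct an explicit deterministic conversion --- on the intrinsic-randomness side by bucketing typical $x^n$ into $2^m$ cells of almost equal $P^n$-mass, and on the resolvability side by partitioning $[2^n]$ so that cell sizes track $Q^{m,\downarrow}$ up to a rounding error of lower order than $\sqrt{n}$ --- and verify that the resulting $F^{\mathcal{D}}$ matches $F^{\mathcal{M}}$ asymptotically.

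For the intrinsic-randomness identity $F^{\mathcal{M}}(P^n \to U_{2^m})$ with $m = H(P)n + t_2\sqrt{n}$, I would apply Lemma \ref{opt con} to obtain
\[
F^{\mathcal{M}}(P^n \to U_{2^m}) = \frac{1}{\sqrt{2^m}}\Bigl(\sum_{j=1}^{J}\sqrt{P^{n,\downarrow}(j)} + \sqrt{(2^m - J)\textstyle\sum_{i > J}P^{n,\downarrow}(i)}\Bigr),
\]
with $J = J_{P^n, 2^m}$. Under $P^n$ the self-information $-\log P^n(X^n) = \sum_i -\log P(X_i)$ is an i.i.d.\ sum with per-symbol mean $H(P)$ and variance $V(P)$, so by the CLT the typical $x^n$ concentrate on the cut-off scale $2^{-m}$. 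A direct concentration estimate shows that the ``head'' sum over $j \le J$ is $o(1)$, while the ``tail'' contribution converges to $\sqrt{\Pr_{P^n}[-\log P^n(X^n) > m]} \to \sqrt{\Phi(-t_2/\sqrt{V(P)})}$, giving the second identity.

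For the resolvability identity $F^{\mathcal{M}}(U_{2^n} \to Q^m)$ with $m = n/H(Q) + t_2\sqrt{n}$, I would invoke the dual of Lemma \ref{opt con} from \cite{KH13-2}: the extremal $P'$ with $U_{2^n}\prec P'$ maximizing $F(P',Q^m)$ is obtained by capping the sorted $Q^{m,\downarrow}$ at the uniform threshold $1/2^n$ and redistributing the excess mass, producing a two-term formula analogous to the previous step. Applying the CLT to $-\log Q^m(Y^m)$ under $Q^m$ (asymptotically $\mathcal{N}(mH(Q), mV(Q))$), standardising $(n - mH(Q))/\sqrt{mV(Q)}$ under the substitution $m = n/H(Q) + t_2\sqrt{n}$, and identifying the dominant tail contribution delivers the first identity. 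The chief obstacle is pinning down this dual of Lemma \ref{opt con} --- the extremal ``below-majorised'' $P'$ is less standard than the extremal ``above-majorised'' one --- and verifying that $F^{\mathcal{D}}$ matches $F^{\mathcal{M}}$ in the limit, which requires a rounding argument compatible with the $\sqrt{n}$ CLT scaling.
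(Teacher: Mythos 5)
The paper itself does not prove this lemma: it is imported from Subsection 4.2 of \cite{KH13-2}, so your reconstruction is being compared against that cited argument. Your plan for the second identity (the intrinsic-randomness direction) matches it in substance and is sound: Lemma \ref{opt con} gives the exact value of $F^{\cal M}(P^n\to U_{2^m})$, the head term is at most $\sqrt{J_{P^n,2^m}/2^m}$, and the ``direct concentration estimate'' you invoke must be the usual information-spectrum slicing (the number of $x^n$ with $P^n(x^n)\gtrsim 2^{-m}$ is $o(2^m)$ because the self-information spreads on the scale $\sqrt{n}$, by Berry--Esseen on windows of $O(1)$ width), which yields $J_{P^n,2^m}=o(2^m)$ and hence the tail term alone survives and converges to $\sqrt{\Phi(-t_2/\sqrt{V(P)})}$. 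The direct ($F^{\cal D}$) constructions you sketch are the standard ones from \cite{KH13-2} and, together with (\ref{fidelity-ineq}), close both directions for $i={\cal D},{\cal M}$ in that identity.

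The genuine gap is in the first identity. The ``dual of Lemma \ref{opt con}'' you propose --- capping $Q^{m\downarrow}$ at the uniform threshold $2^{-n}$ and redistributing the excess --- is not the extremal distribution, and if used literally the computation fails: such a $P'$ overlaps each head atom only through $\sqrt{2^{-n}Q^m(y)}$, and since asymptotically all of the captured mass sits on atoms with $-\log Q^m(y)$ smaller than $n$ by an amount of order $\sqrt{n}$, the resulting fidelity is $O(1/\sqrt{n})$, not the stated limit. The correct observation is much simpler: $U_{2^n}\prec P'$ holds if and only if $P'$ is supported on at most $2^n$ atoms (the top-$l$ partial sums of any distribution on at most $2^n$ points automatically dominate $l2^{-n}$), so by Cauchy--Schwarz the optimizer is $Q^m$ restricted to its $2^n$ largest atoms and renormalized, giving the one-term formula $F^{\cal M}(U_2^{n}\to Q^m)=\sqrt{\sum_{i\le 2^n}Q^{m\downarrow}(i)}$ --- the same quantity as in (\ref{jimei}) --- rather than a two-term formula analogous to the previous step. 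With this corrected extremal, the Lemma \ref{lem.central}-type CLT step you describe, standardizing $(n-mH(Q))/\sqrt{mV(Q)}$ with $m=\frac{n}{H(Q)}+t_2\sqrt{n}$, gives $\sqrt{\Phi\bigl(-\sqrt{H(Q)^3/V(Q)}\,t_2\bigr)}$; this is the value actually used in the proof of Theorem \ref{region2-non-admissible}, and it shows the factor printed as $H(Q)^3/\sqrt{V(Q)}$ in the lemma should be read as $\sqrt{H(Q)^3/V(Q)}$.
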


The function $F_{P,Q,s_1,s_2}$ in (\ref{non-ex-fid}) is obviously continuous and strictly monotonically decreasing on $F_{P,Q,s_1,s_2}^{-1}((0,1))$.
In the following,
we show that (\ref{feq}) holds.

We first prove the direct part.
Since $s_1<H(P)$, 
\begin{eqnarray}
\liminf_{n\to \infty} F^{\cal D}(P^n\to U_2^{s_1n + s_2\sqrt{n}}) 
= 1
\end{eqnarray}
from the results about the asymptotic maximal fidelity in \cite{KH13-2}.
Thus, using Lemma \ref{transition},
\begin{eqnarray}
F_{P,Q,s_1,s_2}^{\cal D}(t_2)
&\ge& \lim_{n\to\infty}F^{\cal D}(U_2^{s_1 n + s_2\sqrt{n}} \to Q^{\frac{s_1}{H(Q)}n+t_2 \sqrt{n}})\nonumber\\
&=& F_{P,Q,s_1,s_2}(t_2),\Label{13-1}
\end{eqnarray}
where the equality follows from Lemma \ref{lem.uni}.

Next, we  prove the converse part.
Since an arbitrary probability distribution on $\{0,1\}^{{s_1 n}}$ can be converted from the uniform distribution with size of ${s_1n}$ bits by majorization conversion.
Thus, we have
\begin{eqnarray}
F_{P,Q,s_1,s_2}^{\cal M}(t_2)
&\le& \lim_{n\to\infty}F^{\cal M}(U_2^{s_1 n + s_2\sqrt{n}} \to Q^{\frac{s_1}{H(Q)}n+t_2 \sqrt{n}})\nonumber\\
&=& F_{P,Q,s_1,s_2}(t_2),\Label{13-2}
\end{eqnarray}
where the equality follows from Lemma \ref{lem.uni}.
From (\ref{fidelity-ineq}), (\ref{13-1}) and (\ref{13-2}),
we obtain (\ref{feq}).
\endproof

\subsection{Proof of Direct Part of Theorem \ref{main} (Proof of Lemma \ref{separation})}\Label{main.app}

We first give a sketch of a proof of Lemma \ref{separation} in the following.
Then, we give a detailed proof of Lemma \ref{separation}.



\vspace{0.5em}
\noindent[{\it Sketch of Proof of Lemma \ref{separation}}]~
We first show (\ref{sepopt'}) of Lemma \ref{separation}.
We will construct probability distributions $P'_{n}$
such that 
\begin{eqnarray}
\liminf_{n\to\infty} F(P'_{n}, U_2^{ H(P)n+s_2\sqrt{n}\downarrow})
&\ge&
 \limsup_{n\to\infty}F^{\calD}(P^{n}\to U_2^{H(P)n + s_2\sqrt{n}}) - \epsilon.
\Label{sep1}
\end{eqnarray}
Then,
we will show that there exist maps $f_n$ such that 
\begin{eqnarray}
\liminf_{n\to\infty} F(W_{f_n}(P^{n\downarrow}), P'_{n})
&=& 1.
\Label{sep2}
\end{eqnarray}
Then, (\ref{sepopt'}) of Lemma \ref{separation} is derived from  (\ref{sep1}) and (\ref{sep2}).

Next, we show (\ref{sepopt}) of Lemma \ref{separation}.
We will show the existence of probability distributions $Q'_{n}$
such that 
\begin{eqnarray}
\liminf_{n\to\infty} 
F(Q'_{n}, Q^{ \frac{H(P)}{H(Q)}n+s_2\sqrt{n}\downarrow})
&\ge&\sqrt{1-Z_{C_{P,Q},\frac{s_2}{\sqrt{V(P)}}}(t_2D_{P,Q})} - \frac{\epsilon}{2}.
\Label{sep3}
\end{eqnarray}
Then,
we will show that there exist maps $f'_{n}$ such that 
\begin{eqnarray}
 \liminf_{n\to\infty} 
F(W_{f'_{n}}(P'_n), Q'_{n})
&\ge&  1  - \epsilon',
\Label{sep4}
\end{eqnarray}
where 
\begin{eqnarray}
\epsilon'
:=\left( \sqrt{1- \left( \sqrt{1-Z_{C_{P,Q},\frac{s_2}{\sqrt{V(P)}}}(t_2D_{P,Q})}-\epsilon\right)} 
- \sqrt{1 - \left(\sqrt{1-Z_{C_{P,Q},\frac{s_2}{\sqrt{V(P)}}}(t_2D_{P,Q})} - \frac{\epsilon}{2} \right)}~ \right)^2.
\end{eqnarray}
From (\ref{sep3}) and (\ref{sep4}), we have the following inequality with respect to the Hellinger distance $d_H(\cdot, \cdot) = \sqrt{1-F(\cdot,\cdot)}$:
\begin{eqnarray}
&&\limsup_{n\to\infty} 
d_H(W_{f'_{n}}(P'_n), Q^{ \frac{H(P)}{H(Q)}n+s_2\sqrt{n}\downarrow})\\
&\le&
\limsup_{n\to\infty} d_H(W_{f'_{n}}(P'_n), Q'_{n}) 
+ \limsup_{n\to\infty} d_H(Q'_{n}, Q^{ \frac{H(P)}{H(Q)}n+s_2\sqrt{n}\downarrow}) \nonumber\\
&\le& \sqrt{1- \left( \sqrt{1-Z_{C_{P,Q},\frac{s_2}{\sqrt{V(P)}}}(t_2D_{P,Q})} - \frac{\epsilon}{2}\right)} + \sqrt{\epsilon'}\\
&=& \sqrt{1- \left( \sqrt{1-Z_{C_{P,Q},\frac{s_2}{\sqrt{V(P)}}}(t_2D_{P,Q})} - \epsilon\right)}.
\Label{helineq}
\end{eqnarray}
Thus,
we obtain
\begin{eqnarray}
\limsup_{n\to\infty} 
F(W_{f'_{n}}(P'_n), Q^{ \frac{H(P)}{H(Q)}n+s_2\sqrt{n}\downarrow})
&\ge&
\sqrt{1-Z_{C_{P,Q},\frac{s_2}{\sqrt{V(P)}}}(t_2D_{P,Q})} - \epsilon.
\Label{sep4.5}
\end{eqnarray}
From Lemma \ref{transition'}, 
we obtain 
\begin{eqnarray}
\liminf_{n\to\infty} F(W_{f'_n}\circ W_{f_n}(P^{n\downarrow}), Q^{\frac{H(P)}{H(Q)}n + t_2\sqrt{n}})
\ge \liminf_{n\to\infty} F(W_{f'_{n}}(P'_n), Q^{ \frac{H(P)}{H(Q)}n+s_2\sqrt{n}\downarrow}).
\Label{sep5}
\end{eqnarray}
Then, (\ref{sepopt}) of Lemma \ref{separation} is derived from (\ref{sep4.5}) and (\ref{sep5}).


\vspace{0.5em}
\noindent[{\it Detailed Proof of Lemma \ref{separation}}]~
From the sketch of proof of Lemma \ref{separation},
it is enough to show (\ref{sep1})-(\ref{sep4}).
In this proof,
considering appropriate one-to-one maps,
we identify $\{0,1\}^N$, $\calX^N$, $\calY^N$, $P^N$ and $Q^N$ with $\{1,2,3,\ldots, 2^N\}$, $\{1,2,3,\ldots, |\calX|^N\}$, $\{1,2,3,\ldots, |\calY|^N\}$, $P^{N\downarrow}$ and $Q^{N\downarrow}$, respectively.

\vspace{0.5em}
\subsubsection{Proof of (\ref{sep1})}

First, we show (\ref{sep1}).
Let $\gamma>0$ satisfy
\begin{eqnarray}
\sqrt{1-\Phi\left(\frac{s_2}{\sqrt{V(P)}}\right)}
- \sqrt{1-\Phi\left(\frac{s_2+\gamma}{\sqrt{V(P)}}\right)} 
\le \epsilon.
\end{eqnarray}
Then,
we define a sequence of  probability distributions $P'_{n}$ 
satisfying that
\begin{eqnarray}
P'_{n}(j)
&:=&P^{n\downarrow}(j) + P^{n\downarrow}(S_n^P(s_2+\gamma, \infty)) U_2^{H(P) n + s_2\sqrt{n}}(j)\\
&=&P^{n\downarrow}(j) + P^{n\downarrow}(S_n^P(s_2+\gamma, \infty)) 2^{-(H(P)n+s_2\sqrt{n})}
\Label{newprob-1}
\end{eqnarray}
for any $j\in S_n^P(s_2)$.
Here, there is no constraint for $P'_{n}(j)$ with $j\in\N\setminus S_n^P(s_2)$ 
as long as $P'_{n}$ is a probability distribution.
Then, we obtain the following inequality:
\begin{eqnarray}
&&\liminf_{n\to\infty} F(P'_{n}, U_2^{ H(P)n+s_2\sqrt{n}\downarrow})\nonumber\\
&\ge&
\sqrt{1-\Phi\left(\frac{s_2+\gamma}{\sqrt{V(P)}}\right)} 
\\
&=&\sqrt{1-\Phi\left(\frac{s_2}{\sqrt{V(P)}}\right)}  - \left(\sqrt{1-\Phi\left(\frac{s_2}{\sqrt{V(P)}}\right)}
- \sqrt{1-\Phi\left(\frac{s_2+\gamma}{\sqrt{V(P)}}\right)} 
\right)\\
&\ge& \sqrt{1-\Phi\left(\frac{s_2}{\sqrt{V(P)}}\right)}  - \epsilon\\
&=& \limsup_{n\to\infty}F^{\calD}(P^{n}\to U_2^{H(P)n + s_2\sqrt{n}}) - \epsilon,
\end{eqnarray}
where the first inequality and the last equality were derived in \cite{KH13-2}.
Thus we obtain (\ref{sep1}).
\endproof

\vspace{0.5em}
\subsubsection{Proof of (\ref{sep2})}
Next, we show (\ref{sep2}).
To do so,
we prepare the following lemma.
\begin{lem}\cite{KH13-2}\Label{Wlem}
Let $S_1$ and $S_2$ be subsets of  the set $\N$ of natural numbers.
Suppose that $B=\{B(i)\}_{i\in S_1}$ and $C=\{C(j)\}_{j\in S_2}$ are non-negative real numbers in decreasing order  and  
\begin{eqnarray*}
\sum_{i\in S_1}B(i)\le \sum_{i\in S_2}C(j).
\end{eqnarray*}
Then, there exists a map $f:S_1\to S_2$ such that 
\begin{eqnarray}
B(i) \le W_f(C)(i) + \max_{j\in S_2}C(j)
\Label{Wmap2}
\end{eqnarray}
for any $i\in S_1$
where $W_f(C)(i):=\sum_{j\in f^{-1}(i)}C(j)$.
\end{lem}
We note that
\begin{eqnarray}
&&P^{n\downarrow}(S_n^P(s_2) \cup S_n^P(s_2+\gamma, \infty))\\
&=& P^{n\downarrow}(S_n^P(s_2))
+ P^{n\downarrow}(S_n^P(s_2+\gamma, \infty)) \\
&=&P^{n\downarrow}(S_n^P(s_2)) +  P^{n\downarrow}(S_n^P(s_2+\gamma, \infty)) U_2^{H(P) n + s_2\sqrt{n}}(S_n^P(s_2) ) \\
&=& P'_{n}(S_n^P(s_2)).
\end{eqnarray}
Thus, from Lemma \ref{Wlem},
there exists a map $f_{n}$ 
such that $f_n$ is the identity map on $S_n^P(s_2)$ and satisfies
\begin{eqnarray}
f_n(S_n^P(s_2+\gamma, \infty))
&\subset& S_n^P(s_2),
\Label{W2-1}\\
P'_{n}(j)
&\le& 
W_{f_n}(P^{n\downarrow})(j) 
+ \displaystyle\max_{k\in S_n^P(s_2+\gamma, \infty)} P^{n\downarrow}(k)
\Label{W1-1}
\end{eqnarray}
for any $j\in S_n^P(s_2)$.
Since 
\begin{eqnarray}
\displaystyle\max_{k\in S_n^P(s_2+\gamma, \infty)} P^{n\downarrow}(k)
=
P^{n\downarrow}(\lceil 2^{H(P)n + (s_2+\gamma)\sqrt{n}} \rceil)
\le 2^{-(H(P)n + (s_2+\gamma)\sqrt{n})},
\end{eqnarray}
we have
\begin{eqnarray}
&& F(W_{f_{n}}(P^{n\downarrow}), P'_{n})\nonumber\\
&\ge&\sum_{j\in S_n^P(s_2)}
\sqrt{W_{f_{n}}(P^{n\downarrow})(j)} \sqrt{P'_{n}(j)}
\nonumber\\
&\ge&\sum_{j\in S_n^P(s_2)}
\sqrt{\max\{P'_{n}(j) - 2^{-(H(P)n + (s_2+\gamma)\sqrt{n})} ,0\}} \sqrt{P'_{n}(j)}
\nonumber\\
&\ge&
\sum_{j\in S_n^P(s_2)}
\left(P'_{n}(j) - \sqrt{2^{-(H(P)n + (s_2+\gamma)\sqrt{n})}} 
\sqrt{P'_{n}(j)}\right)\\
&=&
1-
\sum_{j\in S_n^P(s_2)}
\sqrt{2^{-(H(P)n + (s_2+\gamma)\sqrt{n})}} 
\sqrt{P'_{n}(j)}.
\label{H6-16-4'}
\end{eqnarray}

Using the Schwarz inequality,
the second term of (\ref{H6-16-4'}) can be evaluated as follows:
\begin{eqnarray}
&&\sum_{j\in S_n^P(s_2)}
\sqrt{2^{-(H(P)n + (s_2+\gamma)\sqrt{n})}} 
\sqrt{P'_{n}(j)}
\nonumber\\
&\le&
\sqrt{2^{-(H(P)n + (s_2+\gamma)\sqrt{n})}} 
\sqrt{|S_n^P(s_2)|}
\sqrt{
\sum_{j\in S_n^P(s_2)}
P'_{n}(j)}
\nonumber\\
&\le&
\sqrt{2^{-(H(P)n + (s_2+\gamma)\sqrt{n})}} 
\sqrt{2^{H(P)n + s_2\sqrt{n}}}
\nonumber\\
&\le & 
\sqrt{ 2^{-\epsilon\sqrt{n}} }
\nonumber\\
&\overset{n\to\infty}{\to}&0, 
\Label{H6-16-9'}
\end{eqnarray}
Thus, we obtain (\ref{sep2}) from (\ref{H6-16-4'}) and (\ref{H6-16-9'}).
\endproof

\vspace{0.5em}
\subsubsection{Proof of (\ref{sep3})}
Next, we show (\ref{sep3}).
By the definition,
it holds that 
\begin{eqnarray}
\sqrt{1-Z_{C_{P,Q},\frac{s_2}{\sqrt{V(P)}}}(t_2D_{P,Q})}
&=& 
\sup_{A\in{\cal A}_{\frac{s_2}{\sqrt{V(P)}}}}
{\cal F}\left(\frac{dA}{dx}, \phi_{P,Q,t_2} \right),
\end{eqnarray}
where $\phi_{P,Q,t_2} := \phi_{t_2D_{P,Q},C_{P,Q}}$.
Thus, to obtain (\ref{sep3}),
it is enough to show the following inequality for an arbitrary $A\in{\cal A}_{\frac{s_2}{\sqrt{V(P)}}}$:
\begin{eqnarray}
\liminf_{n\to\infty} F(Q'_{n}, Q^{ \frac{H(P)}{H(Q)}n+s_2\sqrt{n}\downarrow})
&\ge&
{\cal F}\left(\frac{dA}{dx}, \phi_{t_2D_{P,Q},C_{P,Q}}\right) - \frac{\epsilon}{2}.
\Label{FFineq}
\end{eqnarray}
First, we prepare some notations.
We arbitrarily fix $A\in{\cal A}_{\frac{s_2}{\sqrt{V(P)}}}$
and define a function $y_{P,A}:\R\to\R$ as 
\begin{eqnarray}
y_{P,A}(x)
&:=&\sqrt{V(P)}\Phi^{-1}\left(A\left(\frac{x}{\sqrt{V(P)}}\right)\right).
\Label{y}
\end{eqnarray}
Let $0<\gamma<s_2$ satisfy
\begin{eqnarray}
\int^{\frac{s_2}{\sqrt{V(P)}}}_{\frac{s_2-\gamma}{\sqrt{V(P)}}}\sqrt{\frac{dA}{dx}(x)}\sqrt{\phi_{P,Q,b}(x)}dx
&\le & \frac{\epsilon}{6}.
\Label{bishou0'}
\end{eqnarray} 
In addition, let $\lambda>0$ satisfy
\begin{eqnarray}
\int_{-\infty}^{-\lambda}\sqrt{\frac{dA}{dx}(x)}\sqrt{\phi_{P,Q,b}(x)}dx
&\le & \frac{\epsilon}{6}.
\Label{bishou0}
\end{eqnarray} 
Then, for arbitrary $I\in\N$, we set sequences of real numbers as 
\begin{eqnarray}
x_{i}^I
:= \sqrt{V(P)}\left(-\lambda + \frac{\frac{s_2-\gamma}{\sqrt{V(P)}}+\lambda}{I} i\right)
~{\rm  and}~y_{i}^I:=y_{P,A}(x_{i}^I),
\Label{new}
\end{eqnarray}
where $0\le i\le I$.
Here we introduce a probability distribution $Q'_{n,I}$.
For any $j\in S_n^P(x_{0}^I, x_{I}^I) = \cup_{i=1}^{I} S_n^P(x_{i-1}^I, x_{i}^I)$, 
we note that there uniquely exists $i$ such that $j\in S_n^P(x_{i-1}^I, x_{i}^I)$.
Then we define $Q'_{n,I}$ as 
\begin{eqnarray}
Q'_{n,I}(j)
=\frac{P^{n\downarrow}(S_n^P(y_{i+1}^I, y_{i+2}^I))}{Q^{\frac{H(P)}{H(Q)}n+s_2\sqrt{n}\downarrow}( S_n^P(x_{i-1}^I, x_{i}^I))}Q^{\frac{H(P)}{H(Q)}n+s_2\sqrt{n}\downarrow}(j)
\Label{newprob}
\end{eqnarray}
for $1\le i\le I-2$ and $j\in S_n^P(x_{i-1}^I, x_{i}^I)$.
Here, there is no constraint for $Q'_{n,I}(j)$ with $j\in\N\setminus S_n^P(x_{0}^I, x_{I-2}^I)$ as long as $Q'_{n,I}$ is a probability distribution.
Using the definition (\ref{newprob}) of $ Q'_{n,I}(j)$, 
we have 
\begin{eqnarray}
&&\liminf_{n\to\infty} F(Q'_{n,I}, Q^{ \frac{H(P)}{H(Q)}n+s_2\sqrt{n}\downarrow})\nonumber\\
&\ge&\liminf_{n\to\infty}
\sum_{i=1}^{I-2}\sum_{j\in S_n^P(x_{i-1}^I, x_{i}^I)}
\sqrt{Q'_{n,I}(j)}\sqrt{Q^{ \frac{H(P)}{H(Q)}n+s_2\sqrt{n}\downarrow}(j)}
\nonumber \\
&=&\liminf_{n\to\infty}
\sum_{i=1}^{I-2}\sum_{j\in S_n^P(x_{i-1}^I, x_{i}^I)}
\sqrt{
\frac{P^{n\downarrow}(S_n^P(y_{i+1}^I, y_{i+2}^I))}{Q^{ \frac{H(P)}{H(Q)}n+s_2\sqrt{n}\downarrow}( S_n^P(x_{i-1}^I, x_{i}^I))}}
Q^{ \frac{H(P)}{H(Q)}n+s_2\sqrt{n}\downarrow}(j) \nonumber \\
&=&\liminf_{n\to\infty}
\sum_{i=1}^{I-2}
\sqrt{
\frac{P^{n\downarrow}(S_n^P(y_{i+1}^I, y_{i+2}^I))}
{Q^{ \frac{H(P)}{H(Q)}n+s_2\sqrt{n}\downarrow}( S_n^P(x_{i-1}^I, x_{i}^I))}}
Q^{ \frac{H(P)}{H(Q)}n+s_2\sqrt{n}\downarrow}( S_n^P(x_{i-1}^I, x_{i}^I))\\
&=&\liminf_{n\to\infty}
\sum_{i=1}^{I-2}\sqrt{P^{n\downarrow}(S_n^P(y_{i+1}^I, y_{i+2}^I))} 
\sqrt{Q^{ \frac{H(P)}{H(Q)}n+s_2\sqrt{n}\downarrow}( S_n^P(x_{i-1}^I, x_{i}^I))}\Label{H6-16-8}
\\
&=&\sum_{i=1}^{I-2}\sqrt{\Phi\left(\frac{y_{i+2}^I}{\sqrt{V(P)}}\right)-\Phi\left(\frac{y_{i+1}^I}{\sqrt{V(P)}}\right)} \sqrt{\Phi_{P,Q,b}\left(\frac{x_{i}^I}{\sqrt{V(P)}}\right)-\Phi_{P,Q,b}\left(\frac{x_{i-1}^I}{\sqrt{V(P)}}\right)},
\Label{lim.ineq}
\end{eqnarray}
where 
(\ref{lim.ineq}) follows from Lemma \ref{lem.central}.
Here, when we set as $\delta_I := \frac{\frac{s_2}{\sqrt{V(P)}}+\lambda}{I}$,
the right hand side of (\ref{lim.ineq}) is evaluated as follows:
\begin{eqnarray}
&=&\sum_{i=1}^{I-2}\sqrt{A\left(-\lambda+\delta_I(i+2)\right) - A\left(-\lambda+\delta_I(i+1)\right)} \nonumber\\
&&~~~~\times
\sqrt{\Phi_{P,Q,b}\left(-\lambda+\delta_Ii\right)-\Phi_{P,Q,b}\left(-\lambda+\delta_I(i-1)\right)}
\nonumber\\
&=&
\sum_{i=1}^{I-2}\sqrt{\int_{-\lambda+\delta_Ii}^{-\lambda+\delta_I(i+1)}
\frac{dA}{dx}\left(x+\delta_I\right)dx} 
\sqrt{\int_{-\lambda+\delta_Ii}^{-\lambda+\delta_I(i+1)}
\phi_{P,Q,b}\left(x-\delta_I\right)dx}\nonumber \\
&\ge&\sum_{i=1}^{I-2}
\int_{-\lambda+\delta_Ii}^{-\lambda+\delta_I(i+1)}
\sqrt{\frac{dA}{dx}\left(x+\delta_I\right)}
\sqrt{\phi_{P,Q,b}\left(x-\delta_I\right)}
dx\Label{Schwarz}
\\
%
&=&
\int_{-\lambda+\delta_I}^{-\lambda+\delta_I(I-1)}
\sqrt{\frac{dA}{dx}\left(x+\delta_I\right)}
\sqrt{\phi_{P,Q,b}\left(x-\delta_I\right)}
dx\Label{lowerdelta} 
\end{eqnarray}
From (\ref{lim.ineq}) and (\ref{lowerdelta}), 
we have 
\begin{eqnarray}
\lim_{I\to\infty}\liminf_{n\to\infty} F(Q'_{n,I}, Q^{ \frac{H(P)}{H(Q)}n+s_2\sqrt{n}\downarrow})
&\ge&\int_{-\lambda}^{\frac{s_2-\gamma}{\sqrt{V(P)}}}\sqrt{\frac{dA}{dx}(x)}\sqrt{\phi_{P,Q,b}(x)}dx.
\end{eqnarray}
Thus, when $I\in\N$ is large enough,
we have 
\begin{eqnarray}
\liminf_{n\to\infty} F(Q'_{n,I}, Q^{ \frac{H(P)}{H(Q)}n+s_2\sqrt{n}\downarrow})
&\ge&\int_{-\lambda}^{\frac{s_2-\gamma}{\sqrt{V(P)}}}\sqrt{\frac{dA}{dx}(x)}\sqrt{\phi_{P,Q,b}(x)}dx - \frac{\epsilon}{6}.
\Label{limitI}
\end{eqnarray}
Moreover,
\begin{eqnarray}
&&
\int_{-\lambda}^{\frac{s_2-\gamma}{\sqrt{V(P)}}}\sqrt{\frac{dA}{dx}(x)}\sqrt{\phi_{P,Q,b}(x)}dx\nonumber\\
&=&
\int_{-\infty}^{\infty}\sqrt{\frac{dA}{dx}(x)}\sqrt{\phi_{P,Q,b}(x)}dx
- \int^{\frac{s_2}{\sqrt{V(P)}}}_{\frac{s_2-\gamma}{\sqrt{V(P)}}}\sqrt{\frac{dA}{dx}(x)}\sqrt{\phi_{P,Q,b}(x)}dx
- \int^{-\lambda}_{-\infty}\sqrt{\frac{dA}{dx}(x)}\sqrt{\phi_{P,Q,b}(x)}dx
\nonumber\\
&\ge& {\cal F}\left(\frac{dA}{dx}, \phi_{P, Q, b}\right) - \frac{2\epsilon}{6},
\Label{K6-16-9}
\end{eqnarray}
where 
(\ref{K6-16-9})  follows from (\ref{bishou0'}) and (\ref{bishou0}).
Thus, we obtain (\ref{FFineq}) from (\ref{limitI}) and (\ref{K6-16-9}).
\endproof

\vspace{0.5em}
\subsubsection{Proof of (\ref{sep4})}
Next, we show (\ref{sep4}).
Let $I\in\N$, $\lambda>0$ and $\gamma>0$.
We set $I_{r}\in\N$  for $r\in\R$ as 
\begin{eqnarray}
I_{r}
:= \left\lceil \frac{A^{-1}(\Phi(\frac{r}{\sqrt{V(P)}}))+\lambda}{\frac{r}{\sqrt{V(P)}}+\lambda} I \right\rceil.
\end{eqnarray}
For $I_{s_2+\gamma} + 1 \le i\le I$,
we set sequences of real numbers as 
\begin{eqnarray}
\tilde{y}^I_i
&:=&\left\lceil \left(1 - \frac{P^{n\downarrow}(S^P_n( y^I_i, \infty))}{P^{n\downarrow}(S^P_n(s_2+\gamma, \infty))}\right)2^{H(P)n + s_2\sqrt{n}}\right\rceil.
\end{eqnarray}
We note that the following holds by the definition of $Q'_{n,I}$:
\begin{eqnarray}
Q'_{n,I}(S_n^P(x_{i-1}^I,x_{i}^I))
= P^{n\downarrow}(S_n^P(y_{i+1}^I, y_{i+2}^I))
\Label{assum4}
\end{eqnarray}
for $1\le i\le I-2$.
Then, it holds that
\begin{eqnarray}
P'_n(S_n^P(y_{i+1}^I, y_{i+2}^I))
\ge P^{n\downarrow}(S_n^P(y_{i+1}^I, y_{i+2}^I))
=Q'_{n,I}(S_n^P(x_{i-1}^I, x_{i}^I))
\Label{ass1}
\end{eqnarray}
for $1\le i\le I_{s_2}-2$ 
and 
\begin{eqnarray}
P'_n(\{ \tilde{y}^I_{i+1}+1, \ldots, \tilde{y}^I_{i+2} \})
\ge P^{n\downarrow}(S_n^P(y_{i+1}^I, y_{i+2}^I))
= Q'_{n,I}(S_n^P(x_{i-1}^I, x_{i}^I))
\Label{ass2}
\end{eqnarray}
for $I_{s_2+\gamma}+1\le i\le I-2$.
%
Thus, from Lemma \ref{Wlem},
we can choose a map $f'_{n,I}:\N\to\N$ such that 
\begin{eqnarray}
f'_{n,I}(S_n^P(y_{i+1}^I, y_{i+2}^I))
&\subset& S_n^P(x_{i-1}^I, x_{i}^I),
\Label{W2}\\
Q'_{n,I}(j)
&\le& W_{f'_{n,I}}(P'_n)(j) 
+\max_{k\in S_n^P(y_{i+1}^I, y_{i+2}^I)}P'_n(k)
\Label{W1}
\end{eqnarray}
for any $1\le i\le I_{s_2}-2$ and $j\in S_n^P(x_{i-1}^I, x_{i}^I)$,
and 
\begin{eqnarray}
f'_{n,I}(\{ \tilde{y}^I_{i+1}+1, \ldots, \tilde{y}^I_{i+2} \})
&\subset& S_n^P(x_{i-1}^I, x_{i}^I),
\Label{W2'}\\
Q'_{n,I}(j)
&\le& W_{f'_{n,I}}(P'_n)(j) 
+\max_{k\in \{ \tilde{y}^I_i+1, \ldots, \tilde{y}^I_{i+1} \}}P'_n(k)
\Label{W1'}
\end{eqnarray}
for any $I_{s_2+\gamma}+1\le i\le I-2$ and $j\in S_n^P(x_{i-1}^I, x_{i}^I)$.
For $j\notin S^P_n(x^I_0, x^I_{I_{s_2}-2}) \cup \{ \tilde{y}^I_{I_{s_2+\gamma}+1}, \ldots, \tilde{y}^I_{I}\}$,
there is no constraint for $f_{n,I}(j)$. 
%
%
Then,
we have
\begin{eqnarray}
F(W_{f'_{n,I}}(P'_n), Q'_{n,I})
&\ge&
\sum_{i=1}^{I_{s_2}-2}\sum_{j\in S_n^P(x_{i-1}^I, x_{i}^I)}
\sqrt{W_{f'_{n,I}}(P'_n)(j)} \sqrt{Q'_{n,I}(j)} \nonumber\\
&&+\sum_{i=I_{s_2+\gamma}+1}^{I-2}\sum_{j\in \{ \tilde{y}^I_{i+1}+1, \ldots, \tilde{y}^I_{i+2} \}}
\sqrt{W_{f'_{n,I}}(P'_n)(j)} \sqrt{Q'_{n,I}(j)}
\Label{inequ0}
\end{eqnarray}
In the following,
we show 
\begin{eqnarray}
\sum_{i=1}^{I_{s_2}-2}\sum_{j\in S_n^P(x_{i-1}^I, x_{i}^I)}
\sqrt{W_{f'_{n,I}}(P'_n)(j)} \sqrt{Q'_{n,I}(j)}
\ge \Phi\left(\frac{s_2}{\sqrt{V(P)}} \right) -\frac{\epsilon'}{2}
\Label{inequ1}
\end{eqnarray}
and 
\begin{eqnarray}
\sum_{i=I_{s_2+\gamma}+1}^{I-2}\sum_{j\in S_n^P(x_{i-1}^I, x_{i}^I)}
\sqrt{W_{f'_{n,I}}(P'_n)(j)} \sqrt{Q'_{n,I}(j)}
\ge 1 -\Phi\left(\frac{s_2}{\sqrt{V(P)}} \right) -\frac{\epsilon'}{2}.
\Label{inequ2}
\end{eqnarray}
Then,
we obtain (\ref{sep4}) from (\ref{inequ0}), (\ref{inequ1}) and (\ref{inequ2}).

First, we show (\ref{inequ1}).
Here,
note that
\begin{eqnarray}
\max_{k\in S_n^P(y_{i+1}^I, y_{i+2}^I)}P'_n(k)
&\le& P^{n\downarrow}(\lceil 2^{H(P)n + y_{i+1}^I\sqrt{n}} \rceil) + P^{n\downarrow}(S_n^P(s_2+\gamma, \infty)) 2^{-(H(P)n+s_2\sqrt{n})}\\
&\le& 2^{-(H(P)n + y_{i+1}^I\sqrt{n})}  + P^{n\downarrow}(S_n^P(s_2+\gamma, \infty)) 2^{-(H(P)n+s_2\sqrt{n})}\\
&\le& 2^{-(H(P)n + x_{i+1}^I\sqrt{n})}  + P^{n\downarrow}(S_n^P(s_2+\gamma, \infty)) 2^{-(H(P)n+s_2\sqrt{n})}
\Label{alphaineq}
\end{eqnarray}
where we used $x^I_i \le y^I_i$ since $A\ge\Phi$.
Combining (\ref{W1}) with (\ref{alphaineq}),
we have
\begin{eqnarray}
&& \sum_{i=1}^{I_{s_2}-2}\sum_{j\in S_n^P(x_{i-1}^I, x_{i}^I)}
\sqrt{W_{f'_{n,I}}(P'_n)(j)} \sqrt{Q'_{n,I}(j)}\nonumber\\
&\ge&\sum_{i=1}^{I_{s_2}-2}\sum_{j\in S_n^P(x_{i-1}^I, x_{i}^I)}
\sqrt{\max\{Q'_{n,I}(j) - (2^{-(H(P)n + x_{i+1}^I\sqrt{n})} + P^{n\downarrow}(S_n^P(s_2+\gamma, \infty)) 2^{-(H(P)n+s_2\sqrt{n})}),0\}} \sqrt{Q'_{n,I}(j)}
\nonumber\\
&\ge&
\sum_{i=1}^{I_{s_2}-2}\sum_{j\in S_n^P(x_{i-1}^I, x_{i}^I)}
Q'_{n,I}(j)
-
\sum_{i=1}^{I_{s_2}-2}\sum_{j\in S_n^P(x_{i-1}^I, x_{i}^I)}
\sqrt{2^{-(H(P)n + x_{i+1}^I\sqrt{n})}} 
\sqrt{Q'_{n,I}(j)}
\Label{H6-16-4}\\
&&-
\sum_{i=1}^{I_{s_2}-2}\sum_{j\in S_n^P(x_{i-1}^I, x_{i}^I)}
\sqrt{P^{n\downarrow}(S_n^P(s_2+\gamma, \infty)) 2^{-(H(P)n+s_2\sqrt{n})}} 
\sqrt{Q'_{n,I}(j)},
\nonumber
\end{eqnarray}
where (\ref{H6-16-4}) follows from (\ref{W1}) and  the last inequality follows from $\sqrt{x-y} \ge \sqrt{x}- \sqrt{y}$ for any $x\ge y \ge 0$.
Then, the first term of (\ref{H6-16-4}) satisfies the following:
\begin{eqnarray}
\sum_{i=1}^{I_{s_2}-2}\sum_{j\in S_n^P(x_{i-1}^I, x_{i}^I)}
Q'_{n,I}(j)
&=&
\sum_{i=1}^{I_{s_2}-2}\sum_{j\in S_n^P(x_{i-1}^I, x_{i}^I)}
Q'_{n,I}(j)
\nonumber\\
&=&
\sum_{i=1}^{I_{s_2}-2}
\sum_{k \in S_n^P(y_{i+1}^I, y_{i+2}^I)} P^{n\downarrow}(k)\\
&=&
\sum_{k \in S_n^P(y_{2}^I, y_{I_{s_2}}^I)} P^{n\downarrow}(k).
\end{eqnarray}
Then,
\begin{eqnarray}
\lim_{n\to\infty}
\sum_{i=1}^{I_{s_2}-2}\sum_{j\in S_n^P(x_{i-1}^I, x_{i}^I)}
Q'_{n,I}(j)
&=&
\lim_{n\to\infty}\sum_{k \in S_n^P(y_{2}^I, y_{I_{s_2}}^I)} P^{n\downarrow}(k)\\
&=&
\Phi\left(\frac{y_{I_{s_2}}^I}{\sqrt{V(P)}}\right) - \Phi\left(\frac{y_{2}^I}{\sqrt{V(P)}}\right)\\
&=&
\Phi\left(\frac{s_2-\gamma}{\sqrt{V(P)}}\right) -\Phi\left(-\lambda + 2\frac{\frac{s_2-\gamma}{\sqrt{V(P)}}+\lambda}{I} \right).
\end{eqnarray}
Here, for small $\gamma>0$,
we have
\begin{eqnarray}
\Phi\left(\frac{s_2-\gamma}{\sqrt{V(P)}}\right)
\ge \Phi\left(\frac{s_2}{\sqrt{V(P)}}\right) - \frac{\epsilon'}{4}.
\end{eqnarray}
In addition,
for large $I\in\N$ and large $\lambda>0$,
it holds that 
\begin{eqnarray}
\Phi\left(-\lambda + 2\frac{\frac{s_2-\gamma}{\sqrt{V(P)}}+\lambda}{I} \right)
\le \frac{\epsilon'}{4}.
\end{eqnarray}
Then,
we have 
\begin{eqnarray}
\lim_{n\to\infty}
\sum_{i=1}^{I_{s_2}-2}\sum_{j\in S_n^P(x_{i-1}^I, x_{i}^I)}
Q'_{n,I}(j)
&\ge&\Phi\left(\frac{s_2}{\sqrt{V(P)}}\right) - \frac{\epsilon'}{2}.
\Label{1-1}
\end{eqnarray}
The second term of (\ref{H6-16-4}) can be evaluated as follows using the Schwarz inequality:
\begin{eqnarray}
&&\sum_{i=1}^{I_{s_2}-2}\sum_{j\in S_n^P(x_{i-1}^I, x_{i}^I)}
\sqrt{2^{-(H(P)n + x_{i+1}^I\sqrt{n})}} 
\sqrt{Q'_{n,I}(j)}
\nonumber\\
&\le&
\sum_{i=1}^{I_{s_2}-2}
\sqrt{2^{-(H(P)n + x_{i+1}^I\sqrt{n})}} 
\sqrt{|S_n^P(x_{i-1}^I, x_{i}^I)|}
\sqrt{
\sum_{j\in S_n^P(x_{i-1}^I, x_{i}^I)}
Q'_{n,I}(j)}
\nonumber\\
&\le&
\sum_{i=1}^{I_{s_2}-2}
\sqrt{2^{-(H(P)n + x_{i+1}^I\sqrt{n})}} 
\sqrt{|S_n^P(x_{i}^I)|}
\nonumber\\
&\le & 
\sum_{i=1}^{I_{s_2}-2}
\sqrt{ 2^{-(x_{i+1}^I - x_{i}^I)\sqrt{n}} }\\
&=& 
\sum_{i=1}^{I_{s_2}-2}
\sqrt{ 2^{-\frac{s_2-\gamma+\sqrt{V(P)}\lambda}{I}\sqrt{n}} }
\Label{H6-16-7}\\
&\overset{n\to\infty}{\to}&0.
\Label{1-2}
\end{eqnarray}
The thrid term of (\ref{H6-16-4}) can be evaluated as follows:
\begin{eqnarray}
&&\sum_{i=1}^{I_{s_2}-2}
\sum_{j\in S_n^P(x_{i-1}^I, x_{i}^I)}
\sqrt{P^{n\downarrow}(S_n^P(s_2+\gamma, \infty)) 2^{-(H(P)n+s_2\sqrt{n})}} 
\sqrt{Q'_{n,I}(j)}\\
&\le& 
\sum_{i=1}^{I_{s_2}-2}
\sqrt{P^{n\downarrow}(S_n^P(s_2+\gamma, \infty)) 2^{-(H(P)n+s_2\sqrt{n})}} 
\sqrt{|S_n^P(x_{i-1}^I, x_{i}^I)|}
\sqrt{\sum_{j\in S_n^P(x_{i-1}^I, x_{i}^I)} Q'_{n,I}(j)}\\
&\le&
\sum_{i=1}^{I_{s_2}-2}
\sqrt{ 2^{-(H(P)n+s_2\sqrt{n})}} 
\sqrt{|S_n^P(x_{I}^I)|}
\\
&=&
(I_{s_2}-2)
\sqrt{ 2^{-(H(P)n+s_2\sqrt{n})}} 
\sqrt{|S_n^P(s_2 - \gamma)|}\\
&=&(I_{s_2}-2)
\sqrt{ 2^{-\gamma\sqrt{n}}} 
\nonumber\\
&\overset{n\to\infty}{\to}&0. 
\Label{1-3}
\end{eqnarray}
Thus, we obtain (\ref{inequ1}) from (\ref{1-1}), (\ref{1-2}) and (\ref{1-3}).

Next, we show (\ref{inequ2}).
Here,
note that
\begin{eqnarray}
\max_{k\in \{ \tilde{y}^I_i+1, \ldots, \tilde{y}^I_{i+1} \}}P'_n(k)
&\le& P^{n\downarrow}(\tilde{y}^I_i+1) + P^{n\downarrow}(S_n^P(s_2+\gamma, \infty)) 2^{-(H(P)n+s_2\sqrt{n})}\\
&\le& {1}/{\tilde{y}^I_i}+ P^{n\downarrow}(S_n^P(s_2+\gamma, \infty)) 2^{-(H(P)n+s_2\sqrt{n})}.
\Label{alphaineq'}
\end{eqnarray}
Combining (\ref{W1'}) with (\ref{alphaineq'}),
we have
\begin{eqnarray}
&&\sum_{i=I_{s_2+\gamma}+1}^{I-2}\sum_{j\in S_n^P(x_{i-1}^I, x_{i}^I)}
\sqrt{W_{f'_{n,I}}(P'_n)(j)} \sqrt{Q'_{n,I}(j)}\nonumber\\
&\ge&\sum_{i=I_{s_2+\gamma}+1}^{I-2}\sum_{j\in S_n^P(x_{i-1}^I, x_{i}^I)}
\sqrt{\max\{Q'_{n,I}(j) - ( {1}/{\tilde{y}^I_i} + P^{n\downarrow}(S_n^P(s_2+\gamma, \infty)) 2^{-(H(P)n+s_2\sqrt{n})}),0\}} \sqrt{Q'_{n,I}(j)}
\nonumber\\
&\ge&
\sum_{i=I_{s_2+\gamma}+1}^{I-2}
\sum_{j\in S_n^P(x_{i-1}^I, x_{i}^I)}
Q'_{n,I}(j)
-
\sum_{i=I_{s_2+\gamma}+1}^{I-2}\sum_{j\in S_n^P(x_{i-1}^I, x_{i}^I)}
\sqrt{1/\tilde{y}^I_i} 
\sqrt{Q'_{n,I}(j)}
\Label{H6-16-4''}\\
&&-
\sum_{i=I_{s_2+\gamma}+1}^{I-2}\sum_{j\in S_n^P(x_{i-1}^I, x_{i}^I)}
\sqrt{P^{n\downarrow}(S_n^P(s_2+\gamma, \infty)) 2^{-(H(P)n+s_2\sqrt{n})}} 
\sqrt{Q'_{n,I}(j)},
\nonumber
\end{eqnarray}
where (\ref{H6-16-4''}) follows from (\ref{W1'}) and  the last inequality follows from $\sqrt{x-y} \ge \sqrt{x}- \sqrt{y}$ for any $x\ge y \ge 0$.

Then, the first term of (\ref{H6-16-4''}) satisfies the following:
\begin{eqnarray}
\sum_{i=I_{s_2+\gamma}+1}^{I-2}
\sum_{j\in S_n^P(x_{i-1}^I, x_{i}^I)}
Q'_{n,I}(j)
&=&
\sum_{i=I_{s_2+\gamma}+1}^{I-2}\sum_{j\in S_n^P(x_{i-1}^I, x_{i}^I)}
Q'_{n,I}(j)
\nonumber\\
&=&
\sum_{i=I_{s_2+\gamma}+1}^{I-2}
\sum_{k \in S_n^P(y_{i+1}^I, y_{i+2}^I)} P^{n\downarrow}(k)\\
&=&
\sum_{k \in S_n^P(y_{I_{s_2+\gamma}}^I, y_{I}^I)} P^{n\downarrow}(k).
\end{eqnarray}
Then,
for small $\gamma>0$,
\begin{eqnarray}
\lim_{n\to\infty}
\sum_{i=I_{s_2+\gamma}+1}^{I-2}
\sum_{j\in S_n^P(x_{i-1}^I, x_{i}^I)}
Q'_{n,I}(j)
&=&
\lim_{n\to\infty}
\sum_{k \in S_n^P(y_{I_{s_2+\gamma}}^I, y_{I}^I)} P^{n\downarrow}(k)\\
&=&
\Phi\left(\frac{y_{I}^I}{\sqrt{V(P)}}\right) - \Phi\left(\frac{y_{I_{s_2+\gamma}}^I}{\sqrt{V(P)}}\right)\\
&=&
1 -\Phi\left(\frac{s_2+\gamma}{\sqrt{V(P)}} \right)\\
&\ge&
1 -\Phi\left(\frac{s_2}{\sqrt{V(P)}} \right) -\frac{\epsilon'}{2}\Label{2-1}
\end{eqnarray}

The second term of (\ref{H6-16-4''}) can be evaluated as follows using the Schwarz inequality:
\begin{eqnarray}
&&\sum_{i=I_{s_2+\gamma}+1}^{I-2}\sum_{j\in S_n^P(x_{i-1}^I, x_{i}^I)}
\sqrt{1/\tilde{y}^I_i} 
\sqrt{Q'_{n,I}(j)}
\nonumber\\
&\le&
\sum_{i=I_{s_2+\gamma}+1}^{I-2}
\sqrt{1/\tilde{y}^I_i} 
\sqrt{|S_n^P(x_{i-1}^I, x_{i}^I)|}
\sqrt{
\sum_{j\in S_n^P(x_{i-1}^I, x_{i}^I)}
Q'_{n,I}(j)}
\nonumber\\
&\le&
\sum_{i=I_{s_2+\gamma}+1}^{I-2}
\sqrt{1/\tilde{y}^I_i} 
\sqrt{|S_n^P(x_{I}^I)|}
\nonumber\\
&\le & 
\sum_{i=I_{s_2+\gamma}+1}^{I-2}
\sqrt{ 2^{-\gamma\sqrt{n}}} \sqrt{ 1 - \frac{P^{n\downarrow}(S^P_n( y^I_{I_{s_2+\gamma}+1}, \infty))}{P^{n\downarrow}(S^P_n(s_2+\gamma, \infty))}}^{-1} 
\\
&=&
(I-I_{s_2+\gamma}-3)
\sqrt{ 2^{-\gamma\sqrt{n}}} 
\sqrt{ \frac{P^{n\downarrow}(S^P_n(s_2+\gamma, \infty))}{P^{n\downarrow}(S^P_n( s_2+\gamma, y^I_{I_{s_2+\gamma}+1}))}} 
\\
&\overset{n\to\infty}{\to}&0, 
\Label{2-2}
\end{eqnarray}
where we used the fact that $\lim_{n\to\infty}P^{n\downarrow}(S^P_n( s_2+\gamma, y^I_{I_{s_2+\gamma}+1}))>0$ from $s_2+\gamma< y^I_{I_{s_2+\gamma}+1}$ and Lemma \ref{lem.central}.
The thrid term of (\ref{H6-16-4''}) can be evaluated as follows:
\begin{eqnarray}
&&\sum_{i=I_{s_2+\gamma}+1}^{I-2}\sum_{j\in S_n^P(x_{i-1}^I, x_{i}^I)}
\sqrt{P^{n\downarrow}(S_n^P(s_2+\gamma, \infty)) 2^{-(H(P)n+s_2\sqrt{n})}} 
\sqrt{Q'_{n,I}(j)}\\
&\le& 
\sum_{i=I_{s_2+\gamma}+1}^{I-2}
\sqrt{P^{n\downarrow}(S_n^P(s_2+\gamma, \infty)) 2^{-(H(P)n+s_2\sqrt{n})}} 
\sqrt{|S_n^P(x_{i-1}^I, x_{i}^I)|}
\sqrt{\sum_{j\in S_n^P(x_{i-1}^I, x_{i}^I)} Q'_{n,I}(j)}\\
&\le&
\sum_{i=I_{s_2+\gamma}+1}^{I-2}
\sqrt{ 2^{-(H(P)n+s_2\sqrt{n})}} 
\sqrt{|S_n^P(x_{I}^I)|}
\\
&=&
(I-I_{s_2+\gamma}-3)
\sqrt{ 2^{-(H(P)n+s_2\sqrt{n})}} 
\sqrt{|S_n^P(s_2 - \gamma)|}\\
&=&(I-I_{s_2+\gamma}-3)
\sqrt{ 2^{-\gamma\sqrt{n}}} 
\nonumber\\
&\overset{n\to\infty}{\to}&0. 
\Label{2-3}
\end{eqnarray}
Thus, we obtain (\ref{inequ2}) from (\ref{2-1}), (\ref{2-2}) and (\ref{2-3}).
\endproof

\subsection{Proof of Converse Part of Lemma \ref{main}}\Label{main2.app}

To prove the converse part, we prepare some lemmas.
We abbreviate the normal distribution with specific parameters as
\begin{eqnarray*}
\Phi_{P,Q,b}&:=&\Phi_{bD_{P,Q}, C_{P,Q}},\nonumber\\
\phi_{P,Q,b}&:=&\frac{d\Phi_{P,Q,b}}{dx}.
\end{eqnarray*}
We set the subsets of $\N$ which depends on $x$ and $x'\in\R$ as 
\begin{eqnarray*}
S_n^P(x)&:=&\{1, 2, ..., \lceil 2^{H(P)n+x\sqrt{n}} \rceil\} \Label{S1}\\
S_n^P(x, x')&:=& S_n^P(x')\setminus S_n^P(x). \Label{S2}
\end{eqnarray*}
The following lemma is obtained in \cite{KH13-2}.
\begin{lem}\Label{lem.central}
When both $P$ and $Q$ are non-uniform distributions,
\begin{eqnarray*}
\displaystyle\lim_{n\to\infty} Q^{\frac{H(P)}{H(Q)}n+b\sqrt{n}\downarrow}(S_n^P(x))
&=&\Phi_{P,Q,b}\left(\frac{x}{\sqrt{V(P)}}\right).
\end{eqnarray*}
\end{lem}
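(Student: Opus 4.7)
The plan is to reduce $Q^{\frac{H(P)}{H(Q)}n+b\sqrt{n}\downarrow}(S_n^P(x))$ to a probability governed by the central limit theorem applied to the log-likelihood $-\log Q^{m_n}$, where I set $m_n := \tfrac{H(P)}{H(Q)}n + b\sqrt{n}$ and $k_n := \lceil 2^{H(P)n + x\sqrt{n}}\rceil$. Since $Q^{m_n \downarrow}(\{1,\ldots,k_n\}) = \sup_{|S|\le k_n} Q^{m_n}(S)$ is attained on the top $k_n$ most likely sequences, the strategy is to sandwich this quantity between probabilities of two likelihood-threshold sets that share the same CLT limit.

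For the sandwich, define $T_\tau := \{x \in \mathcal{X}^{m_n} : -\log Q^{m_n}(x) \le \tau\}$, which satisfies $|T_\tau| \le 2^\tau$. With $\tau = \log k_n$ we obtain $|T_{\log k_n}| \le k_n$, so the $k_n$ most likely sequences contain all of $T_{\log k_n}$, giving the lower bound
\begin{eqnarray*}
Q^{m_n \downarrow}(\{1,\ldots,k_n\}) \ge Q^{m_n}(T_{\log k_n}).
\end{eqnarray*}
For the upper bound, pick any sequence $\alpha_n \to \infty$ with $\alpha_n = o(\sqrt{n})$ and set $\tau' := \log k_n + \alpha_n$. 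For an arbitrary $S$ with $|S| = k_n$, decomposing $S = (S \cap T_{\tau'}) \cup (S \cap T_{\tau'}^c)$ and bounding each sequence in the second part by $2^{-\tau'}$ yields $Q^{m_n}(S) \le Q^{m_n}(T_{\tau'}) + k_n 2^{-\tau'} = Q^{m_n}(T_{\tau'}) + 2^{-\alpha_n}$, and taking the supremum gives the upper bound.

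Next I apply the central limit theorem to $-\log Q^{m_n}(X) = \sum_{i=1}^{m_n} (-\log Q(X_i))$ under $X \sim Q^{m_n}$; this has mean $m_n H(Q)$ and variance $m_n V(Q) > 0$, where non-uniformity of $Q$ ensures $V(Q)>0$. Hence
\begin{eqnarray*}
Q^{m_n}(T_\tau) = \Phi\!\left(\frac{\tau - m_n H(Q)}{\sqrt{m_n V(Q)}}\right) + o(1),
\end{eqnarray*}
uniformly for $\tau$ in the relevant range. Substituting $\log k_n = H(P)n + x\sqrt{n} + O(1)$ and $m_n H(Q) = H(P)n + bH(Q)\sqrt{n}$, and using $\alpha_n = o(\sqrt{n})$ so that the shift $\tau \mapsto \tau + \alpha_n$ perturbs the CLT argument by $o(1)$, both bounds collapse to $\Phi\bigl((x-bH(Q))\sqrt{H(Q)/(H(P)V(Q))}\bigr)$. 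A direct algebraic identification using $C_{P,Q} = (H(P)/V(P))/(H(Q)/V(Q))$ and $D_{P,Q} = H(Q)/\sqrt{V(P)}$ shows this equals $\Phi_{bD_{P,Q},C_{P,Q}}(x/\sqrt{V(P)}) = \Phi_{P,Q,b}(x/\sqrt{V(P)})$.

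The main obstacle is the upper-bound step: controlling boundary sequences whose individual probabilities are comparable to the cutoff $2^{-\log k_n}$. This is precisely what the $\alpha_n$-slack handles, provided one simultaneously has $\alpha_n \to \infty$ (so $2^{-\alpha_n} \to 0$) and $\alpha_n/\sqrt{n} \to 0$ (so that the CLT limit at the shifted threshold $\tau'$ still matches that at $\log k_n$); any choice such as $\alpha_n = n^{1/4}$ works. Non-uniformity of $P$ enters only to guarantee $V(P) > 0$, making the normalization $x/\sqrt{V(P)}$ well-defined.
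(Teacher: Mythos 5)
Your proof is correct, and it supplies an argument the paper itself omits: the paper only asserts that this statement is ``a part of Lemma 12 in [KH13-2]'' and gives no self-contained proof. Your sandwich of $Q^{m_n\downarrow}(\{1,\dots,k_n\})$ between $Q^{m_n}(T_{\log k_n})$ (lower bound, valid since $|T_{\log k_n}|\le k_n$ and every element of $T_{\log k_n}$ outranks every element outside it) and $Q^{m_n}(T_{\log k_n + \alpha_n}) + 2^{-\alpha_n}$ (upper bound) is the standard information-spectrum reduction of ``sum of the $k_n$ largest probabilities'' to a log-likelihood tail event, and the choice $\alpha_n=n^{1/4}$ correctly trades off $2^{-\alpha_n}\to 0$ against a vanishing $\alpha_n/\sqrt{m_n V(Q)}$ perturbation in the CLT argument. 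The Berry--Esseen theorem does give the needed uniformity since $Q$ has finite support (so finite third moment), and non-uniformity of $Q$ gives $V(Q)>0$ for the normalization. The final algebra --- substituting $\log k_n = H(P)n + x\sqrt n + O(1)$ and $m_n H(Q) = H(P)n + bH(Q)\sqrt n$, then identifying $\Phi\bigl((x - bH(Q))\sqrt{H(Q)/(H(P)V(Q))}\bigr)$ with $\Phi_{bD_{P,Q},C_{P,Q}}(x/\sqrt{V(P)})$ --- checks out. This is almost certainly the same approach used in the cited [KH13-2] Lemma 12, so you have filled in a proof rather than found a new one.
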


In addition, we prepare the following lemma.
%
\begin{lem}\Label{lem.converse}
Suppose that real numbers $v\le v'$ satisfy the following condition {\rm ($\star$)}.

\hspace{-1em}{\rm ($\star$)} There exist $u$ and $u'$ which satisfy the following three conditions:
\begin{eqnarray}
&&\hspace{-1.5em}~{\rm (I)}u\le v\le v'\le u' ~and~ v'\le s_2 ,\label{I}\nonumber\\
&&\hspace{-1.5em}~{\rm (II)}\frac{\Phi(v)}{\Phi_{P, Q, t_2}(v)}=\frac{\phi(u)}{\phi_{P,Q,t_2}(u)} ~and\nonumber\\
&&\hspace{-1.5em}\hspace{2em}\frac{1-\Phi(v')}{\Phi_{P, Q, t_2}(s_2)-\Phi_{P, Q, t_2}(v')}=\frac{\phi(u')}{\phi_{P,Q,t_2}(u')},\label{II}\\
&&\hspace{-1.5em}~{\rm (III)}\frac{\phi(x)}{\phi_{P,Q,t_2}(x)}~is~monotonically~decreasing~on~(u,u').\label{III}\nonumber
\end{eqnarray}
Then the following inequality holds
\begin{eqnarray}
&&F^{\cal M}_{P,Q,s_2}(t_2)\nonumber\\
&&\le
\sqrt{\Phi(v)} \sqrt{\Phi_{P, Q, t_2}(v)}
+
\int_{v}^{v'}\sqrt{\phi(x)} \sqrt{\phi_{P, Q, b}(x)}dx \nonumber\\
&&~~~+\sqrt{1-\Phi(v')} \sqrt{\Phi_{P, Q, t_2}(s_2)-\Phi_{P, Q, t_2}(v')}.\Label{lem.con.ineq}
\end{eqnarray}
\end{lem}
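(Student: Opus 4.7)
The plan is to adapt the discretization argument used in Lemma \ref{Zcon} to the present asymptotic setting, using Lemma \ref{opt trans} to absorb the storage constraint and Lemma \ref{lem.central} to pass from discrete sums to integrals against the limiting normal distributions. First, by Lemma \ref{opt trans}, for any majorization conversion of the form $P^n \prec P' \prec P''$ with $P' \in \mathcal{P}({\cal B}_{H(P)n + s_2\sqrt{n}})$, one may replace $P'$ by ${\cal C}_{2^N}(P^n)$ where $N := H(P)n + s_2\sqrt{n}$. Thus it suffices to upper bound $F({\cal C}_{2^N}(P^n), Q^{M(n)})$ for $M(n) = \frac{H(P)}{H(Q)}n + t_2\sqrt{n}$, maximized over $P''$ with ${\cal C}_{2^N}(P^n) \prec P''$, and take the $\liminf$.

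Second, I would partition the sorted index set $\{1, \ldots, 2^N\}$ into three contiguous blocks determined by the thresholds $v, v'$: a left block of indices up to $\lceil 2^{H(P)n + v\sqrt{n}}\rceil$, a middle block extending to $\lceil 2^{H(P)n + v'\sqrt{n}}\rceil$, and a right block up to $2^N$. Writing the fidelity as a sum of $\sqrt{P''^\downarrow(i)}\sqrt{Q^{M(n)\downarrow}(i)}$ and applying the Cauchy--Schwartz inequality on each outer block, the left block is bounded by $\sqrt{\sum_i P''^\downarrow(i)}\cdot \sqrt{\sum_i Q^{M(n)\downarrow}(i)}$. The majorization ${\cal C}_{2^N}(P^n) \prec P''$ combined with the domination of ${\cal C}_{2^N}(P^n)$ by $P^{n\downarrow}$ forces the first factor to converge to $\sqrt{\Phi(v)}$ by the central limit theorem, while Lemma \ref{lem.central} forces the second factor to converge to $\sqrt{\Phi_{P,Q,t_2}(v)}$. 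The right block is handled analogously (using $v' \le s_2$ from condition (I)) and produces $\sqrt{1-\Phi(v')}\sqrt{\Phi_{P,Q,t_2}(s_2) - \Phi_{P,Q,t_2}(v')}$.

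For the middle block, I would finely subdivide $[v,v']$ into $I$ equal sub-intervals with endpoints $x_i^I = v + (v'-v)i/I$, apply Cauchy--Schwartz on each subinterval, and then invoke Lemma \ref{naiseki2} piece-by-piece to replace the unknown sorted weights $P''^\downarrow$ by the sorted weights of $P^n$. Condition (III), that $N(x)/N_{P,Q,t_2}(x)$ is monotonically decreasing on $(u, u') \supset (v, v')$, together with Lemma \ref{lem.central} applied to both $P^{n\downarrow}$ and $Q^{M(n)\downarrow}$, is precisely what supplies the discrete monotonicity hypothesis of Lemma \ref{naiseki2} asymptotically on each subinterval. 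Letting $n \to \infty$ first and then $I \to \infty$, the resulting Riemann sum converges to $\int_v^{v'} \sqrt{N(x)}\sqrt{N_{P,Q,t_2}(x)}\,dx$, and summing the three contributions yields the claimed bound (\ref{lem.con.ineq}).

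The hard part will be the third step: converting the continuous monotonicity of condition (III) into the discrete monotonicity hypothesis of Lemma \ref{naiseki2} uniformly over indices in each subinterval. This requires controlling the likelihood ratio ${\cal C}_{2^N}(P^n)^\downarrow(i)/Q^{M(n)\downarrow}(i)$ and showing it converges to $N(x)/N_{P,Q,t_2}(x)$ in a sufficiently uniform manner along the Riemann partition. An additional subtlety is that the plateau portion of ${\cal C}_{2^N}(P^n)$ arising from the storage cutoff at $s_2$ could in principle destroy the monotonic structure inside the middle block; condition (I), $v' \le s_2$, together with the location of the plateau threshold $J_{P^n,2^N}$ being asymptotically governed by $s_2$ rather than by $v'$, must be used to confirm that the plateau never intrudes into $[v, v']$ in the limit.
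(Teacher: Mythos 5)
Your overall architecture (discretize $[v,v']$, use Lemma \ref{lem.central} to pass to the limiting normal masses, invoke Lemma \ref{naiseki2}) is the same as the paper's, but there is a genuine gap in how you treat the two boundary blocks, and it is exactly where conditions (I) and (II) — which your argument never uses — are needed. You claim that the majorization ${\cal C}_{2^N}(P^n)\prec P''$ "forces the first factor to converge to $\sqrt{\Phi(v)}$" on the left block. It does not: majorization only gives $\liminf_n P''^{\downarrow}(S_n^P(v))\ge\Phi(v)$, so the left-block mass of $P''$ can strictly exceed $\Phi(v)$ in the limit (with the excess taken from the middle blocks). Since you are proving an \emph{upper} bound, you cannot simply substitute $\Phi(v)$ for this unknown mass — a larger left-block mass makes the left term larger while shrinking the middle terms, and whether the total goes up or down depends on the ratio structure. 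The paper resolves this by keeping all block masses as unknown limits $c_0,\dots,c_{I+2}$ constrained only by the partial-sum inequalities $\sum_{i\le k}a_i\le\sum_{i\le k}c_i$, and then applying Lemma \ref{naiseki2} \emph{globally} across all blocks at once, with $a_0=\Phi(v)$, $b_0=\Phi_{P,Q,t_2}(v)$, $a_{I+1}=1-\Phi(v')$, $b_{I+1}=\Phi_{P,Q,t_2}(s_2)-\Phi_{P,Q,t_2}(v')$. The hypothesis of Lemma \ref{naiseki2} that the ratios $a_i/b_i$ be decreasing is precisely what conditions (I)--(III) deliver: condition (II) identifies $a_0/b_0$ and $a_{I+1}/b_{I+1}$ with $N(u)/N_{P,Q,t_2}(u)$ and $N(u')/N_{P,Q,t_2}(u')$, the mean value theorem places the middle-block ratios at points $z_i\in(u,u')$, and condition (III) orders them. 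A "piece-by-piece" application on the middle block alone has no content (each sub-block is a single term), and without (II) the lemma is simply false for arbitrary $v,v'$.

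Two smaller points. First, your reduction via Lemma \ref{opt trans} to ${\cal C}_{2^N}(P^n)$ is valid but unnecessary and is what creates your worry about the plateau of ${\cal C}_{2^N}(P^n)$ intruding into $[v,v']$; the paper sidesteps this entirely by bounding the fidelity for an \emph{arbitrary} $P'_n\succ P^n$ supported on $S_n^P(s_2)$, using only the majorization partial sums and never the explicit form of the optimizer. Second, your stated order of limits ($n\to\infty$ first, then $I\to\infty$) is correct and matches the paper, but you should note that one must pass to a subsequence realizing the $\limsup$ so that all block masses $c_i$ converge simultaneously before Lemma \ref{naiseki2} can be applied.
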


\begin{proof}
Let $P'_n$ be a probability distribution on $S_n^P(x)$ defined in (\ref{S1}) such that $P'_n\succ P_n$.
%
When we set a sequence $\{x_i^I\}_{i=0}^{I}$ for $I\in\N$ as $x_i^I:=v+\frac{v'-v}{I}i$,
we have the following by the monotonicity of the fidelity \cite{NC00}: 
\begin{eqnarray}
&&\hspace{-1em}F(P'^{\downarrow}_n, Q^{\frac{H(P)}{H(Q)}n+t_2\sqrt{n}\downarrow})\nonumber\\
&\hspace{-1em}\le&\hspace{-1em}\sqrt{P'^{\downarrow}_n( S_n^P(x_{0}^I))} \sqrt{Q^{\frac{H(P)}{H(Q)}n+t_2\sqrt{n}\downarrow}( S_n^P(x_{0}^I))}\nonumber\\
&&\hspace{-1.2em}+\sum_{i=1}^{I}\sqrt{P'^{\downarrow}_n( S_n^P(x_{i-1}^I, x_{i}^I))} \sqrt{Q^{\frac{H(P)}{H(Q)}n+t_2\sqrt{n}\downarrow}( S_n^P(x_{i-1}^I, x_{i}^I))}\nonumber\\
&&\hspace{-1.2em}+\sqrt{P'^{\downarrow}_n(S_n^P(s_2))-P'^{\downarrow}_n(S_n^P(x_{I}^I))}\nonumber\\
&&\hspace{-0.2em}\times \sqrt{Q^{\frac{H(P)}{H(Q)}n+t_2\sqrt{n}\downarrow}(S_n^P(s_2))-Q^{\frac{H(P)}{H(Q)}n+t_2\sqrt{n}\downarrow}(S_n^P(x_{I}^I))}\nonumber\\
&&\hspace{-1.2em}+\sqrt{1-P'^{\downarrow}_n( S_n^P(s_2))} \sqrt{1-Q^{\frac{H(P)}{H(Q)}n+t_2\sqrt{n}\downarrow}(S_n^P(s_2))}.
\Label{H6-18-1}
\end{eqnarray}

Here, we denote the right-hand side of (\ref{H6-18-1}) by $R_I(n)$.
Then, we can choose a subsequence $\{n_l\}_l\subset\{n\}$ 
such that 
\begin{eqnarray*}
\lim_{l \to \infty} R_I(n_l)
={\limsup_{n\to\infty} } R_I(n)
\end{eqnarray*}
and the limits 
\begin{eqnarray*}
c_0
&:=&\displaystyle\lim_{l\to\infty} P'^{\downarrow}_{n_l}( S_{n_l}(x_{0}^I)),\\
c_i
&:=&\displaystyle\lim_{l\to\infty} P'^{\downarrow}_{n_l}( S_{n_l}(x_{i-1}^I, x_{i}^I)),\\
c_{I+1}
&:=&\displaystyle\lim_{l\to\infty} \{P'^{\downarrow}_{n_l}(S_{n_l}(s_2)) - P'^{\downarrow}_{n_l}( S_{n_l}(x_{I}^I))\}\\
&=&1-\displaystyle\lim_{l\to\infty} P'^{\downarrow}_{n_l}( S_{n_l}(x_{I}^I))\\
c_{I+2}
&:=&0
\end{eqnarray*}
exist for $i=1, \ldots, I$.
Hence, we obtain
\begin{eqnarray}
&&\limsup_{n\to\infty} F(P'^{\downarrow}_n, Q^{\downarrow}_n)\nonumber\\
&\le&{\limsup_{n\to\infty} }R_I(n)
=\lim_{l\to\infty} R_I(n_l)\nonumber\\
&=&\sqrt{c_0} \sqrt{\Phi_{P, Q, b}(x_{0})}\\
&&+\sum_{i=1}^{I}\sqrt{c_i} \sqrt{\Phi_{P, Q, b}(x_{i}^I)-\Phi_{P, Q, b}(x_{i-1}^I)}\nonumber\\
&&+\sqrt{c_{I+1}} \sqrt{\Phi_{P, Q, b}(s_2)-\Phi_{P, Q, b}(x_{I}^I)},\nonumber
\end{eqnarray} 
where we used Lamma \ref{lem.central} in the last equality.

When we set as 
\begin{eqnarray*}
a_0
&:=&\Phi(x_{0}^I),\\
a_i
&:=&\Phi(x_i^I)-\Phi(x_{i-1}^I),\\
a_{I+1}
&:=&1-\Phi(x_{I}^I),\\
a_{I+2}
&:=&0,\\
b_0
&:=&\Phi_{P, Q, b}(x_{0}),\\
b_i
&:=&\Phi_{P, Q, b}(x_{i}^I)-\Phi_{P, Q, b}(x_{i-1}^I),\\
b_{I+1}
&:=&\Phi_{P,Q,b}(s_2)-\Phi_{P,Q,b}(x_{I}^I),\\
b_{I+2}
&:=&1-\Phi_{P,Q,b}(s_2)
\end{eqnarray*}
for $1,...,I$,
those satisfy the assumptions of Lemma \ref{naiseki2} as follows.
First, $a_0/b_0=\phi(u)/\phi_{P,Q,t_2}(u)$ and $a_{I+1}/b_{I+1}=\phi(u')/\phi_{P,Q,t_2}(u')$ hold by the assumption (II).
Moreover, there exist $z_i\in[x_{i-1}^I,x_i^I]$ for $i=1,...,I$ such that 
 $a_i/b_i=\phi(z_i)/\phi_{P,Q,t_2}(z_i)$ for $i=1,...,I$ due to the mean value theorem. 
Then $z_i\in(u,u')$ holds because of the relation $v=x_{0}^I\le x_{i-1}^I\le z_i\le x_{i}^I \le x_I^I= v' $ and the assumption (I).
Since $\phi(x)/\phi_{P,Q,t_2}(x)$ is monotonically decreasing on $(u,u')$ by the assumption (III), we have $a_{i-1}/b_{i-1}\ge a_i/b_i$ for $i=1,...,I+1$.
Moreover, 
\begin{eqnarray}
\sum_{i=0}^k a_i
&=&\Phi(x_{k}^I)\nonumber\\
&=&\lim_{l\to\infty} P^{n_l\downarrow}(S_{n_l}^P(x_{k}^I))\nonumber\\
&\le& \lim_{l\to\infty} P'^{\downarrow}_{n_l}(S_{n_l}^P(x_{k}^I))\nonumber\\
&=&\sum_{i=0}^k c_i
\end{eqnarray} 
holds for $k=0,1,...,I$ since $P^n\prec P'_n$, and $\sum_{i=0}^{I+1} a_i
=1=\sum_{i=0}^{I+1} c_i$ holds.

From the above discussion, we can use Lemma \ref{naiseki2}.
Therefore, the following hold:
\begin{eqnarray}
&&{\limsup_{n\to\infty} }F(P'^{\downarrow}_n, Q^{\downarrow}_n)\nonumber\\
&\le&\sqrt{c_0} \sqrt{\Phi_{P, Q, b}(x_{0}^I)}\nonumber\\
&&+\sum_{i=1}^{I}\sqrt{c_i} \sqrt{\Phi_{P, Q, b}(x_{i}^I)-\Phi_{P, Q, b}(x_{i-1}^I)}\nonumber\\
&&+\sqrt{c_0} \sqrt{\Phi_{P, Q, b}(s_2)-\Phi_{P, Q, b}(x_{I}^I)}\nonumber\\
&\le&\sqrt{\Phi(v)}\sqrt{\Phi_{P, Q, b}(v)}\Label{nai3}\\
&&+\sum_{i=1}^{I}\sqrt{\Phi(x_i^I)-\Phi(x_{i-1}^I)} \nonumber\\
&&\hspace{3em}\times\sqrt{\Phi_{P, Q, b}(x_{i}^I)-\Phi_{P, Q, b}(x_{i-1}^I)}\nonumber\\
&&+\sqrt{1-\Phi(v')} \sqrt{\Phi_{P, Q, b}(s_2)-\Phi_{P, Q, b}(v')}\nonumber
\end{eqnarray}
where we used $x_{0}^I=v$ and $x_{I}^I=v'$. 
Since
\begin{eqnarray*}
&&\lim_{I\to \infty} \sum_{i=1}^{I}\sqrt{\Phi(x_i^I)-\Phi(x_{i-1}^I)}\\
&&\hspace{3em} \times \sqrt{\Phi_{P, Q, b}(x_{i}^I)-\Phi_{P, Q, b}(x_{i-1}^I)}\\
&=&\lim_{I\to \infty} \sum_{i=1}^{I}\sqrt{\frac{\Phi(x_i^I)-\Phi(x_{i-1}^I)}{x_{i}^I-x_{i-1}^I}}\\
&&\hspace{3em} \times \sqrt{\frac{\Phi_{P, Q, b}(x_{i}^I)-\Phi_{P, Q, b}(x_{i-1}^I)}{x_{i}^I-x_{i-1}^I}} 
(x_{i}^I-x_{i-1}^I)\\
&=&\int_{v}^{v'}\sqrt{\phi(x)}\sqrt{\phi_{P, Q, b}(x)}dx,
\end{eqnarray*}
we obtain
\begin{eqnarray}
&&\limsup_{n\to\infty} F(P'^{\downarrow}_n, Q^{\downarrow}_n)\nonumber\\
&\le&\sqrt{\Phi(v)}\sqrt{\Phi_{P, Q, b}(v)}
+\int_{v}^{v'}\sqrt{\phi(x)}\sqrt{\phi_{P, Q, b}(x)}dx\nonumber\\
&&+\sqrt{1-\Phi(v')} \sqrt{\Phi_{P, Q, b}(s_2)-\Phi_{P, Q, b}(v')}.\nonumber 
\end{eqnarray}
\end{proof}


We treat the case when $v<1$.
Here, we use Lemma \ref{lem.converse}.
For any $v\in\R$, the existence of $u$ such that $u\le v$ and
\begin{eqnarray}
\frac{\Phi(v)}{\Phi_{P, Q, t_2}(v)}=\frac{\phi(u)}{\phi_{P,Q,t_2}(u)}
\end{eqnarray}
can be easily verified by the mean value theorem.
Moreover, when we take as $u'=v'=\beta:=\beta_{t_2D_{P,Q},C_{P,Q},\frac{s_2}{\sqrt{V(P)}}}$,
then $\beta\le s_2$ and
\begin{eqnarray}
\frac{1-\Phi(\beta)}{\Phi_{P, Q, t_2}(s_2)-\Phi_{P, Q, t_2}(\beta)}
=\frac{\phi(\beta)}{\phi_{P,Q,t_2}(\beta)}
\end{eqnarray}
hold by Lemma \ref{sol2}.
From Lemma \ref{monotone},
$\frac{\phi(u)}{\phi_{P,Q,t_2}(u)}$ is monotonically decreasing on $(-\infty,\frac{bH(Q)}{1-C_{P,Q}})$.
Since $\beta\le \frac{bH(Q)}{1-C_{P,Q}}$,
thus (III) holds.
Taking the limit $v\to-\infty$ in (\ref{lem.con.ineq}),
we have the following inequality
\begin{eqnarray*}
&&\hspace{-2em}F^{\cal M}_{P,Q,s_2}(t_2)\nonumber\\
&\hspace{-4em}\le&\hspace{-2em}
\int_{-\infty}^{\beta}\sqrt{\phi(x)} \sqrt{\phi_{P, Q, b}(x)}dx \nonumber\\
&&\hspace{-2em}+\sqrt{1-\Phi(\beta)} \sqrt{\Phi_{P, Q, t_2}(s_2)-\Phi_{P, Q, t_2}(\beta)}\\
&\hspace{-4em}=&\hspace{-2em}I_{P,Q,t_2}(\beta)\nonumber\\
&&\hspace{-2em}+\sqrt{1-\Phi(\beta)} \sqrt{\Phi_{P, Q, t_2}(s_2)-\Phi_{P, Q, t_2}(\beta)}
\end{eqnarray*}
and thus, the proof is completed.


Then, we treat the case when $v=1$
First, we treat the case when $t_2\le0$.
Since it holds that 
\begin{eqnarray}
F^{\cal M}(P\to Q|N)
\le \sqrt{\sum_{i=1}^{2^N} Q^{\downarrow}(i)}
= \sqrt{ Q^{\downarrow}(\{1,...,2^N\})},
\Label{jimei}
\end{eqnarray}
for an arbitrary $N\in\N$,
we have
\begin{eqnarray}
F^{\cal M}_{P,Q,s_2}(t_2)
&\le& \liminf_{n\to\infty} \sqrt{Q^{\frac{H(P)}{H(Q)}n+t_2\sqrt{n}\downarrow}(S_n^P(s_2))}\nonumber\\
&=&\sqrt{\Phi_{P,Q,t_2}(s_2)},\nonumber
\end{eqnarray}
where we used Lemma \ref{lem.central} in the last equality.
Next, we treat the case when $t_2>0$.
Here, we use Lemma \ref{lem.converse}.
For any $v\in\R$, the existence of $u$ such that $u\le v$ and
\begin{eqnarray}
\frac{\Phi(v)}{\Phi_{P, Q, t_2}(v)}=\frac{\phi(u)}{\phi_{P,Q,t_2}(u)}
\end{eqnarray}
can be easily verified by the mean value theorem.
Moreover, when we take as $u'=v'=\beta$,
then $\beta\le s_2$ and
\begin{eqnarray}
\frac{1-\Phi(\beta)}{\Phi_{P, Q, t_2}(s_2)-\Phi_{P, Q, t_2}(\beta)}
=\frac{\phi(\beta)}{\phi_{P,Q,t_2}(\beta)}
\end{eqnarray}
hold by Lemma \ref{sol0}.
From Lemma \ref{monotone},
$\frac{\phi(u)}{\phi_{P,Q,t_2}(u)}$ is monotonically decreasing on $\R$,
 and thus (III) holds for any $u$ and $u'$.
Taking the limit $v\to-\infty$ in (\ref{lem.con.ineq}),
we have the following inequality
\begin{eqnarray}
&&\hspace{-2em}F^{\cal M}_{P,Q,s_2}(t_2)\nonumber\\
&&\hspace{-2em}\le
\int_{-\infty}^{\beta}\sqrt{\phi(x)} \sqrt{\phi_{P, Q, b}(x)}dx \nonumber\\
&&+\sqrt{1-\Phi(\beta)} \sqrt{\Phi_{P, Q, t_2}(s_2)-\Phi_{P, Q, t_2}(\beta)}.
\end{eqnarray}
Since
\begin{eqnarray}
&&\int_{-\infty}^{\beta}\sqrt{\phi(x)} \sqrt{\phi_{P, Q, b}(x)}dx \nonumber\\
&=&\Phi\left(\beta-\frac{D_{P,Q}t_2}{2}\right)
e^{-\frac{(D_{P,Q}t_2)^2}{8}},
\end{eqnarray}
the proof is completed.

Then, we treat the case when $v>1$.
At first, we treat the case when $s_2\le \Phi_{P, Q, t_2}^{-1}\left(\frac{\Phi_{P, Q, t_2}(\alpha)}{\Phi_{P}(\alpha)}\right)$, where $\alpha:=\alpha_{t_2D_{P,Q},C_{P,Q}}$.
For an arbitrary sequence $\{P'_n\}_{n=1}^{\infty}$ of probability distributions which satisfies $P'_n\succ P^{n}_{2^{H(P)n+s_2\sqrt{n}}}$, 
 the monotonicity of the fidelity follows
\begin{eqnarray}
F(P'_n, Q_n)
&\le&\sqrt{P'_n( S_n^P(s_2))} \sqrt{Q_n( S_n^P(s_2))}
\\&&+\sqrt{P'_n( S_n^P(s_2, \infty))} \sqrt{Q_n( S_n^P(s_2, \infty))}.\nonumber
\end{eqnarray}
Since
\begin{eqnarray}
&{\displaystyle\limsup_{n\to\infty} }P'_n( S_n^P(s_2, \infty))=0, &
\end{eqnarray}
we obtain
\begin{eqnarray}
&\displaystyle\limsup_{n\to\infty}F(P'_n, Q_n)
\le\sqrt{\Phi_{P, Q, t_2}(s_2)}.&
\end{eqnarray}
Next, we treat the case when $s_2> \Phi_{P, Q, t_2}^{-1}\left(\frac{\Phi_{P, Q, t_2}(\alpha)}{\Phi_{P}(\alpha)}\right)$.
Here, we use Lemma \ref{lem.converse}.
By Lemma \ref{sol1}, $\alpha$ satisfies
\begin{eqnarray}
\frac{\Phi(\alpha)}{\Phi_{P, Q, t_2}(\alpha)}=\frac{\phi(\alpha)}{\phi_{P,Q,t_2}(\alpha)},
\end{eqnarray}
and $\beta$ satisfies
\begin{eqnarray}
\frac{1-\Phi(\beta)}{\Phi_{P, Q, t_2}(s_2)-\Phi_{P, Q, t_2}(\beta)}
=\frac{\phi(\beta)}{\phi_{P,Q,t_2}(\beta)}.
\end{eqnarray}
When we take as $u=u'=\alpha$ and $v=v'=\beta$ in Lemma \ref{lem.converse},
those satisfy (I) and (II).
Moreover, from Lemma \ref{monotone},
$\frac{\phi(u)}{\phi_{P,Q,t_2}(u)}$ is monotonically decreasing on $(\frac{bH(Q)}{1-C_{P,Q}},\infty)$.
Since $\frac{bH(Q)}{1-C_{P,Q}} \le \alpha \le \beta$,
(III) holds.
Thus, we have the following inequality
\begin{eqnarray*}
&&F^{\cal M}_{P,Q,s_2}(t_2)\nonumber\\
&\le&\sqrt{\Phi_{P}(\alpha) \Phi_{P, Q, t_2}(\alpha)}
+(I_{P, Q, t_2}(\beta)-I_{P, Q, t_2}(\alpha))\\
&&+\sqrt{1-\Phi_{P}(\beta)}\sqrt{\Phi_{P, Q, t_2}(s_2)-\Phi_{P, Q, t_2}(\beta)},
\end{eqnarray*}
and thus, the proof is completed.
\endproof

\subsection{Proof of Theorem \ref{region2-uni1}}\Label{region2-uni1.app}

The function $F_{U_l,Q,s_2}$ in (\ref{2nd-dil}) is obviously continuous and strictly monotonically decreasing on $F_{U_l,Q,s_2}^{-1}((0,1))$.

We first prove the direct part  of  (\ref{feq}).
Let $s_2\ge0$.
Since the size of storage is greater than the size of support of $U_l^n$,
$U_l^n$ can be converted to $U_l^n$ itself in storage.
Thus, we have  
\begin{eqnarray}
F_{U_l,Q,s_2}^{\cal D}(t_2)
&\ge& \lim_{n\to\infty}F^{\cal D}(U_l^{n} \to Q^{\frac{\log l}{H(Q)}n+t_2 \sqrt{n}})\nonumber\\
&=& F_{U_l,Q,s_2}(t_2), \Label{14-1}
\end{eqnarray}
where the equality follows from Lemma \ref{lem.uni}.
Next, let $s_2<0$.
We have
\begin{eqnarray}
\liminf_{n\to \infty} F^{\cal D}(U_l^n\to U_2^{(\log l)n + s_2\sqrt{n}}) 
= 1.
\end{eqnarray}
Thus, using Lemma \ref{transition},
\begin{eqnarray}
F_{U_l,Q,s_2}^{\cal D}(t_2)
&\ge& \lim_{n\to\infty}F^{\cal D}(U_2^{(\log l)n + s_2\sqrt{n}} \to Q^{\frac{\log l}{H(Q)}n+t_2 \sqrt{n}})\nonumber\\
&=& F_{U_l,Q,s_2}(t_2). \Label{14-1'}
\end{eqnarray}

Then, we prove the converse part  of  (\ref{feq}).
Let $s_2\ge0$.
Then, the following inequality obviously holds:
\begin{eqnarray}
F_{U_l,Q,s_2}^{\cal M}(t_2)
&\le& \lim_{n\to\infty}F^{\cal M}(U_l^{n} \to Q^{\frac{\log l}{H(Q)}n+t_2 \sqrt{n}})\nonumber\\
&=& F_{U_l,Q,s_2}(t_2). \Label{14-2}
\end{eqnarray}
Next, let $s_2<0$.
Since an arbitrary probability distribution on $S_n^P(s_2)$ defined in (\ref{S1}) can be converted from the uniform distribution with size of ${(\log l)n + s_2\sqrt{n}}$ bits by majorization conversion.
Thus, we have
\begin{eqnarray}
F_{U_l,Q,s_2}^{\cal M}(t_2)
&\le& \lim_{n\to\infty}F^{\cal M}(U_2^{(\log l)n + s_2\sqrt{n}} \to Q^{\frac{\log l}{H(Q)}n+t_2 \sqrt{n}})\nonumber\\
&=& F_{U_l,Q,s_2}(t_2). \Label{14-2'}
\end{eqnarray}
From (\ref{fidelity-ineq}), (\ref{14-1}), (\ref{14-1'}), (\ref{14-2}) and (\ref{14-2'}),
we obtain (\ref{feq}).
\endproof

\subsection{Proof of Theorem \ref{region2-uni2}}\Label{region2-uni2.app}

The function $F_{P,U_l,s_2}$ in (\ref{2nd-con}) is obviously continuous and strictly monotonically decreasing on $F_{P,U_l,s_2}^{-1}((0,1))$.

We first prove the direct part of  (\ref{feq}).
Let $(\log l)t_2 \le s_2$.
Since the size of storage is greater than the size of support of $U_l^{\frac{H(P)}{\log l}n + t_2\sqrt{n}}$,
we 
have
\begin{eqnarray}
F_{P,U_l,s_2}^{\cal D}(t_2)
&=& \lim_{n\to\infty}F^{\cal D}(P^n \to U_l^{\frac{H(P)}{\log l}n + t_2\sqrt{n}})\nonumber\\
&=& \lim_{n\to\infty}F^{\cal D}(P^n \to U_2^{H(P)n + (\log l)t_2\sqrt{n}})\nonumber\\
&=& F_{P,U_l,s_2}(t_2).
\Label{15-1}
\end{eqnarray}
When $(\log l)t_2 > s_2$,
the direct part is obvious. 

Next, we prove the converse part of  (\ref{feq}).
Let $(\log l)t_2 \le s_2$.
Then, the following inequality holds:
\begin{eqnarray}
F_{P,U_l,s_2}^{\cal M}(t_2)
&\le& \lim_{n\to\infty}F^{\cal M}(P^n \to U_l^{\frac{H(P)}{\log l}n + t_2\sqrt{n}})\nonumber\\
&=& \lim_{n\to\infty}F^{\cal D}(P^n \to U_2^{H(P)n + (\log l)t_2\sqrt{n}})\nonumber\\
&=& F_{P,U_l,s_2}(t_2).
\Label{15-2}
\end{eqnarray}
Let $(\log l)t_2 > s_2$.
Since an arbitrary probability distribution on $S_n^P(s_2)$ can be converted from the uniform distribution with size of ${H(P)n + s_2\sqrt{n}}$ bits by majorization conversion.
Thus, we have
\begin{eqnarray}
&&F_{P,U_l,s_2}^{\cal M}(t_2)\nonumber\\
&\le& \lim_{n\to\infty}F^{\cal M}(U_2^{H(P)n + s_2\sqrt{n}} \to U_l^{\frac{H(P)}{\log l}n + t_2\sqrt{n}})\nonumber\\
&=& \lim_{n\to\infty}F^{\cal M}(U_2^{H(P)n + s_2\sqrt{n}} \to U_2^{H(P)n + (\log l)t_2\sqrt{n}})\nonumber\\
&=& 0.
\Label{15-2'}
\end{eqnarray}
From (\ref{fidelity-ineq}), (\ref{15-1}), (\ref{15-2}) and (\ref{15-2'}),
we obtain (\ref{feq}).
\endproof

\vspace{0.5em}

\subsection{Proof of Lemma \ref{q-to-c}}\Label{q-to-c.app}

Let $\psi_{M}$ be a pure state on $\C^M\otimes\C^M$ with the suquared Schmidt coefficient ${\cal C}_M(P_{\psi})$ defined in (\ref{PL}).
Then, according to Lemma \ref{opt trans}, 
an arbitrary pure state on $\C^M\otimes\C^M$ which can be converted from $\psi$ by LOCC can also be converted from $\psi$ via $\psi_{M}$ by LOCC.
Thus, if we convert $\psi$ to $\psi_{M}$ in the first step, 
the minimal error is attainable in the second step. 
Here, $\psi_{M}$ was given when the optimal entanglement concentration was performed for $\psi$ and does not depend on $\phi$.
Therefore, it is optimal to perform the entanglement concentration as LOCC in the first step and especially the optimal operation does not depend on $\phi$.

\begin{lem}\Label{universal}
Let $\psi$ be a pure state on a bipartite system $\mathcal{H}_{AB}$.
Then, there exists a LOCC map $\Gamma:\mathcal{S}(\mathcal{H}_{AB})\to\mathcal{S}(\C^M\otimes\C^M)$ which satisfies the following conditions:
\begin{description}
 \item[(I)]$\Gamma(\psi)=\psi_{M}$, 
 \item[(II)]For any LOCC map $\Gamma':\mathcal{S}(\mathcal{H}_{AB})\to\mathcal{S}(\C^M\otimes\C^M)$, 
there exists a LOCC map $\tilde{\Gamma}:\mathcal{S}(\C^M\otimes\C^M)\to\mathcal{S}(\C^M\otimes\C^M)$ such that $\Gamma'(\psi)=\tilde{\Gamma}(\psi_{M})$.
\end{description}
\end{lem}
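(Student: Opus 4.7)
The plan is to combine Nielsen's majorization criterion for deterministic pure-state LOCC and its ensemble refinement with the minimality property of ${\cal C}_M(P_\psi)$ supplied by Lemma \ref{opt trans}.

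\textbf{Clause (I).} I would first verify that the squared Schmidt distribution $P_\psi$ of $\psi$ is majorized by $P_{\psi_M}={\cal C}_M(P_\psi)$. From the construction (\ref{PL}), $P_{\psi_M}$ coincides with $P_\psi^\downarrow$ on its first $J_{P_\psi,M}$ coordinates and replaces the remaining tail mass by its uniform average over the last $M-J_{P_\psi,M}$ positions; this flattening preserves all the partial-sum inequalities defining majorization, so $P_\psi\prec P_{\psi_M}$. Nielsen's theorem \cite{Nie99,VJN00} then yields a deterministic LOCC map $\Gamma$ with $\Gamma(\psi)=\psi_M$.

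\textbf{Clause (II).} Let $\Gamma'$ be any LOCC map with image in $\mathcal{S}(\C^M\otimes\C^M)$ and set $\rho:=\Gamma'(\psi)$. Recording the classical outcomes of the protocol realising $\Gamma'$ produces an ensemble decomposition $\rho=\sum_i p_i|\phi_i\rangle\langle\phi_i|$ in which every $|\phi_i\rangle$ is pure and supported on $\C^M\otimes\C^M$, so each squared Schmidt vector $\lambda_{\phi_i}$ has at most $M$ nonzero entries. The ensemble form of the pure-state LOCC convertibility criterion (Jonathan-Plenio) then gives $P_\psi\prec \sum_i p_i\lambda_{\phi_i}^{\downarrow}=:P''$, a distribution on a set of size at most $M$. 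I would now invoke Lemma \ref{opt trans}: since ${\cal C}_M(P_\psi)$ is majorized by every distribution on an $M$-element set that majorizes $P_\psi$, it follows that $P_{\psi_M}\prec P''$. Applying the ensemble criterion in the forward direction to $\psi_M$ supplies an LOCC protocol realising the stochastic transformation $\psi_M\to\{(p_i,|\phi_i\rangle)\}$; discarding the classical label defines the desired $\tilde\Gamma$, with $\tilde\Gamma(\psi_M)=\sum_i p_i|\phi_i\rangle\langle\phi_i|=\rho=\Gamma'(\psi)$.

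The main obstacle I anticipate is the general mixedness of $\Gamma'(\psi)$: Nielsen's pure-to-pure criterion alone does not suffice, and one must pick precisely the ensemble decomposition of $\rho$ intrinsic to the LOCC protocol $\Gamma'$ (not an arbitrary spectral decomposition) so that the ensemble majorization condition can be exploited both to extract $P_\psi\prec P''$ and, symmetrically, to reconstruct $\rho$ starting from $\psi_M$. Once this ensemble viewpoint is fixed, the remaining work is carried entirely by Lemma \ref{opt trans}, which supplies exactly the minimality of ${\cal C}_M(P_\psi)$ that the argument needs, and no further quantum-information ingredients appear necessary.
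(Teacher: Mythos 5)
Your proof is correct and follows essentially the same strategy as the paper's: Nielsen's theorem for clause (I), the Jonathan--Plenio ensemble criterion applied in both directions for clause (II), with the minimality of ${\cal C}_M(P_\psi)$ as the pivot connecting them. The only difference is that you obtain the key majorization $P_{\psi_M}\prec P''$ by citing the minimality established in Lemma~\ref{opt trans}, whereas the paper re-derives that same inequality inside the proof by a short contradiction argument using the uniform-tail structure of ${\cal C}_M(P_\psi)$; both routes are valid and your appeal to Lemma~\ref{opt trans} is the more economical.
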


\begin{proof}
Because of Nielsen's theorem \cite{Nie99}, 
there exists a LOCC map $\Gamma$ which satisfies (I).
Next, we prove that such $\Gamma$ satisfies (II).
Let a LOCC map $\Gamma':\mathcal{S}(\mathcal{H}_{AB})\to\mathcal{S}(\C^M\otimes\C^M)$ 
output a state $\eta_j$ with probability $q_j$.
Then, because of Jonathan-Plenio's theorem \cite{JP99},
\begin{eqnarray}
\sum_{i=1}^l P_{\psi}^{\downarrow}(i)
\le \sum_{i=1}^l  \sum_{j}q_j P_{\eta_j}^{\downarrow}(i)
\end{eqnarray}
holds for any $l=1,...,M$.
Since ${\cal C}_M(P_{\psi})(i)=P_{\psi}^{\downarrow}(i)$ for $l=1,...,J_{P_{\psi},M}-1$ where $J_{P_{\psi},M}$ was defined in (\ref{J}),
we have
\begin{eqnarray}\Label{ineq}
\sum_{i=1}^l {\cal C}_M(P_{\psi})(i)
\le \sum_{i=1}^l  \sum_{j}q_j P_{\eta_j}^{\downarrow}(i)
\end{eqnarray}
for any $l=1,...,J_{P_{\psi},M}-1$.
Moreover, 
(\ref{ineq}) holds for any $l=J_{P_{\psi},M},...,M$.
If it does not holds,
it is a contradiction as follows.
Then, there are the minimum numbers $k_0, l_0\in\{J_{P_{\psi},M},...,M\}$ such that 
\begin{eqnarray}
\sum_{i=1}^{k_0} {\cal C}_M(P_{\psi})(i)
&>& \sum_{i=1}^{k_0}  \sum_{j}q_j P_{\eta_j}^{\downarrow}(i),\Label{ineq2}
\\
\frac{\sum_{i=J_{P_{\psi}, M}}^{|\mathcal{X}|} P_{\psi}^{\downarrow}(i)}{M+1-J_{P_{\psi}, M}}
&>&\sum_{j} q_j P_{\eta_j}^{\downarrow}(l_0).\Label{ineq3}
\end{eqnarray}
and $k_0\ge l_0$.
Moreover, the inequality (\ref{ineq3}) holds for any $l\ge l_0$ because $\sum_{j} q_j P_{\eta_j}^{\downarrow}(l)$ is monotonically decreasing with respect to $l$.
Thus, we have the following contradiction.
\begin{eqnarray}
1
&=&
\sum_{i=1}^{k_0} {\cal C}_M(P_{\psi})(i)
+\sum_{i=k_0+1}^{M} {\cal C}_M(P_{\psi})(i)
\\
&>&
\sum_{i=1}^{k_0}  \sum_{j}q_j P_{\eta_j}^{\downarrow}(i)
+\sum_{i=k_0+1}^{M}  \sum_{j}q_j P_{\eta_j}^{\downarrow}(i)\\
&=&1.\Label{contradiction}
\end{eqnarray}
As proved above, (\ref{ineq}) holds for any $l=1,...,M$, and thus, we obtain (II) because of Jonathan-Plenio's theorem \cite{JP99}.
\end{proof}

From Lemma \ref{universal} with $M=2^N$, we have
\begin{eqnarray*}
F^{\cal Q}(\psi\to \ph|N)
&=&F^{\cal Q}(\psi_{2^N}\to\ph)\\
&=&F^{\cal M}({\cal C}_{2^N}(P_{\psi}) \to P_{\ph})\\
&=&F^{\cal M}(P_{\psi} \to P_{\ph}|N).
\end{eqnarray*}
Thus, the proof is completed.
\endproof

\section{Conclusion}
\Label{sec:Conclusion}

We have considered random number conversion (RNC) via random number storage with restricted size.
In particular, we derived the rate regions between the storage size and the conversion rate of RNC from the viewpoint of the first- and second-order asymptotics.
In the first-order rate region,
it was shown that there exists the trade-off when the rate of storage size is smaller than or equal to the entropy of the initial distribution as in Fig. \ref{1st}
and semi-admissible rate pairs characterize the trade-off.
When the conversion rate of RNC achieves a semi-admissible first-order rate pair,
the non-trivial second-order rate regions were obtained as in Figs. \ref{2nd}, \ref{fig-case2}, \ref{fig-case3}, \ref{2nd-reg2-uni1} and \ref{2nd-reg2-uni2}.
Especially, to derive the second-order rate region at the admissible first-order rate pair,
we introduced the generalized Rayleigh-normal distribution and investigate its basic properties.
From the second-order asymptotics,
we also obtained asymptotic expansion of maximum conversion number with high approximation accuracy.
%
Then, we applied the results for RNC via restricted storage to LOCC conversion via entanglement storage in quantum information theory. 
In the problem, we did not assume that an initial state and a target state are the same states,
However,
the LOCC conversion via storage can be regarded as compression process if the target state equals the initial state, and thus,
our problem setting is a kind of generalization of LOCC compression for pure states.

We gave some remarks on the admissibility of rate pairs. 
In the argument to characterization of the rate regions,
we defined the simple relations called ``dominate" and ``simulate" between two rate pairs,
and introduced the admissibility of rate pairs based on the relations in order to clarify essentially important rate pairs in the rate region.
We note that,
besides RNC via restricted storage, 
the notion of ``simulate'' was implicitly appeared in asymmetric information theoretic operations. 
For instance, Fig. 1 in \cite{CK78} represents the typical first-order rate region in the wiretap channel.
Then the left side boundary of the region is characterized as an interval between the origin and the other edge point,
and hence, the left side boundary is simulated by the edge point of the interval.
Besides of such an applicability of ``simulate'',
the notion of ``simulate'' has not been focused on, and thus, the admissibility
in the sense of this paper has not been recognized.
In particular, to our knowledge, 
it has not been appeared in the context of the second-order rate region in existing studies. 
Since the notion of ``simulate'' plays an important role in the characterization of the rate region, 
it will be widely used also in the rate region in the sense of the first- and second-order asymptotics.


We refer some future studies.
First, probability distributions or quantum states were assumed to be i.i.d. in this paper. 
To treat information sources with classical or quantum correlation, 
the extension from an i.i.d. sequence to general one is thought as a problem to be solved \cite{MPSVW10}. 
Second, we analyzed only the asymptotic performance of random number conversion and LOCC conversion.
On the other hand, what we can operate has only finite size.
Therefore, it is expected that conversion via restricted storage are analyzed in finite setting.
Third, since only pure states were treated in quantum information setting although mixed entangled states can be appear in practice, 
the extension from pure states to mixed states is thought to be important.
%
%
Finally, we have shown that the problem of
RNC via restricted storage has a non-trivial trade-off relation described by the second-order rate region
although trade-off relation in the first-order rate region is quite simple.
As is suggested by the results,
even when two kinds of first-order rates in an information theoretical problem simply and straightforward relate with each other, 
there is a possibility that 
the rate region has a non-trivial trade-off relation in the second order asymptotics.
We can conclude that consideration of the second order asymptotics
might bring a new trade-off relation in various information theoretical problems.

\section*{Acknowledgment}

WK was partially supported from Grant-in-Aid for JSPS Fellows No. 233283. 
MH is partially supported by a MEXT Grant-in-Aid for Scientific Research (A) No. 23246071 and the National Institute of Information and Communication
Technology (NICT), Japan.
The Centre for Quantum Technologies is funded by the Singapore Ministry of Education and the National Research Foundation
as part of the Research Centres of Excellence programme.


\end{document}